\newtheorem*{RSP}{Replica Symmetry Postulate}
\definecolor{myblue}{rgb}{.8, .8, 1}
\definecolor{mathblue}{rgb}{0.2472, 0.24, 0.6} 
\definecolor{mathred}{rgb}{0.6, 0.24, 0.442893}
\definecolor{mathyellow}{rgb}{0.6, 0.547014, 0.24}
\definecolor{lightblue}{rgb}{.90,.95,1}
\tikzstyle{int}=[draw, fill=blue!20, minimum size=2em]
\tikzstyle{dot}=[circle, draw, fill=blue!20, minimum size=2em]
\tikzstyle{init} = [pin edge={to-,thin,black}]
\newcommand{\eg}{e.g.\xspace}
\newcommand{\ie}{i.e.\xspace}
\newcommand{\iid}{i.i.d.\xspace}
\newcommand{\mmse}{{\mathsf{mmse}}}
\newcommand{\XGn}{X_{\sf G}}
\newcommand{\reals}{{\mathbb{R}}}
\newcommand{\naturals}{{\mathbb{N}}}
\newcommand{\eexp}{{\rm e}}
\newcommand{\diff}{{\rm d}}
\newcommand{\Bigo}[1]{{O\left(#1\right)}}
\newcommand{\smallo}[1]{{o\left(#1\right)}}
\newcommand{\Span}{{\rm span}}
\newcommand{\rank}{\mathop{\sf rank}}
\newcommand{\Tr}{\mathop{\sf Tr}}
\newcommand{\Expect}{\mathbb{E}}
\newcommand{\expect}[1]{\mathbb{E}\left[ #1 \right]}
\newcommand{\Prob}{\mathbb{P}}
\newcommand{\prob}[1]{{ \mathbb{P}\left\{ #1 \right\} }}
\newcommand{\toprob}{\xrightarrow{\Prob}}
\newcommand{\tolp}[1]{\xrightarrow{L^{#1}}}
\newcommand{\toas}{\xrightarrow{{\rm a.s.}}}
\newcommand{\var}{\mathsf{var}}
\newcommand{\argmin}{\mathop{\rm argmin}}
\newcommand{\trans}{^{\rm T}}
\newcommand{\Th}{{^{\rm th}}}
\theoremstyle{remark}
\newtheorem{remark}{Remark}}{
\theoremstyle{plain}
\newtheorem{lemma}{Lemma}
\newtheorem{theorem}{Theorem}
\theoremstyle{definition}
\newtheorem{definition}{Definition}
\newtheorem{remark}{Remark}
\theoremstyle{plain}
\newcommand{\lunder}[1]{{\underset{\raise0.3em\hbox{$\smash{\scriptscriptstyle-}$}}{#1}}}
\newcommand{\floor}[1]{{\left\lfloor {#1} \right \rfloor}}
\newcommand{\ceil}[1]{{\left\lceil {#1} \right \rceil}}
\newcommand{\norm}[1]{\left\|{#1} \right\|}
\newcommand{\lnorm}[2]{\left\|{#1} \right\|_{{#2}}}
\newcommand{\Fnorm}[1]{\lnorm{#1}{\rm F}}
\newcommand{\indc}[1]{{\mathbf{1}_{\left\{{#1}\right\}}}}
\def\innergetnumber#1[#2]#3{#2}
\def\getnumber{\expandafter\innergetnumber\jobname}
\newcommand{\Bdim}{{\dim_{\rm B}}}
\newcommand{\uBdim}{\overline{\dim}_{\rm B}}
\newcommand{\od}{{\overline{d}}}
\newcommand{\ud}{{\underline{d}}}
\newcommand{\bsA}{{\boldsymbol{A}}}
\newcommand{\bsB}{{\boldsymbol{B}}}
\newcommand{\bsH}{{\boldsymbol{H}}}
\newcommand{\sft}{{\mathsf{t}}}
\newcommand{\sfA}{{\mathsf{A}}}
\newcommand{\sfD}{{\mathsf{D}}}
\newcommand{\sfR}{{\mathsf{R}}}
\newcommand{\sfT}{{\mathsf{T}}}
\newcommand{\calA}{{\mathcal{A}}}
\newcommand{\calB}{{\mathcal{B}}}
\newcommand{\calD}{{\mathcal{D}}}
\newcommand{\calN}{{\mathcal{N}}}
\newcommand{\calR}{{\mathcal{R}}}
\newcommand{\scrD}{{\mathscr{D}}}
\newcommand{\oscrD}{{\overline{\scrD}}}
\newcommand{\uscrD}{{\underline{\scrD}}}
\newcommand{\bfA}{{\mathbf{A}}}
\newcommand{\bfH}{{\mathbf{H}}}
\newcommand{\bfI}{{\mathbf{I}}}
\newcommand{\bfK}{{\mathbf{K}}}
\newcommand{\bfU}{{\mathbf{U}}}
\newcommand{\hx}{{\hat{x}}}
\newcommand{\hX}{{\hat{X}}}
\newcommand{\hY}{{\hat{Y}}}
\newcommand{\spt}[1]{{\rm spt}(#1)}
\newcommand{\comp}[1]{{#1^{\rm c}}}
\newcommand{\ntok}[2]{{#1,\ldots,#2}}
\newcommand{\restrict}[2]{\left.#1\right|_{#2}}
\newcommand{\Lip}{\mathrm{Lip}}
\newcommand{\renyi}{R\'enyi\xspace}
\newcommand{\rlip}{{\sfR}}
\newcommand{\rll}{{\hat{\sfR}}}
\newcommand{\rlinear}{{\sfR^*}}
\newcommand{\Dstar}{D^*}
\newcommand{\DLstar}{D^*_{\rm L}}
\newcommand{\DL}{D_{\rm L}}
\newcommand{\Rstar}{\calR^*}
\newcommand{\RLstar}{\calR^*_{\rm L}}
\newcommand{\RL}{\calR_{\rm L}}
\newcommand{\zstar}{\zeta^*}
\newcommand{\zLstar}{\zeta^*_{\rm L}}
\newcommand{\zL}{\zeta_{\rm L}}
\newcommand{\xstar}{\xi^*}
\newcommand{\xLstar}{\xi^*_{\rm L}}
\newcommand{\xL}{\xi_{\rm L}}
\newcommand{\pth}[1]{\left( #1 \right)}
\newcommand{\sth}[1]{\left\{ #1 \right\}}
\newcommand{\snr}{{\mathsf{snr}}}
\newcommand{\qsnr}{{\sqrt{\mathsf{snr}}}}
\author{Yihong Wu\thanks{\href{mailto:yihongwu@princeton.edu}{yihongwu@princeton.edu}} }
\author{Sergio Verd\'u\thanks{\href{mailto:verdu@princeton.edu}{verdu@princeton.edu}}}
\affil{Department of Electrical Engineering
\\Princeton University
\\Princeton NJ, 08540}
\title{Optimal Phase Transitions in Compressed Sensing
}
\date{\today}
\begin{document}

\maketitle

\let\oldthefootnote\thefootnote
\renewcommand{\thefootnote}{\fnsymbol{footnote}}
\footnotetext[1]{The results of this paper were presented in part at 
the Third Annual School of Information Theory, University of Southern California, Los Angeles CA, August 5 -- 8, 2010 \cite{noisyCS.poster}
and the IEEE International Symposium on Information Theory, Cambridge, MA, July 1--6, 2012 \cite{optimal.pt.cs.isit}.}
\let\thefootnote\oldthefootnote

\begin{abstract}
Compressed sensing deals with efficient recovery of analog signals from linear encodings. This paper presents a statistical study of compressed sensing by modeling the input signal as an \iid process with known distribution. Three classes of encoders are considered, namely optimal nonlinear, optimal linear and random linear encoders. Focusing on optimal decoders, we investigate the fundamental tradeoff between measurement rate and reconstruction fidelity gauged by error probability and noise sensitivity in the absence and presence of measurement noise, respectively. The optimal phase transition threshold is determined as a functional of the input distribution and compared to suboptimal thresholds achieved by popular reconstruction algorithms. In particular, we show that Gaussian sensing matrices incur no penalty on the phase transition threshold with respect to optimal nonlinear encoding. Our results also provide a rigorous justification of previous results based on replica heuristics in the weak-noise regime.

\vspace{1em}
\textbf{Keywords:} Compressed sensing, Shannon theory, phase transition, \renyi information dimension, MMSE dimension, random matrix, joint source-channel coding.
\end{abstract}

\section{Introduction}
\label{sec:intro}

\subsection{Setup}
	\label{sec:setup}
Compressed sensing \cite{CSfirst,donoho} is a signal processing technique that compresses analog vectors by means of a linear transformation.
By leveraging prior knowledge of the signal structure (\eg, sparsity) and by designing efficient nonlinear reconstruction algorithms, effective compression is achieved by taking a much smaller number of measurements than the dimension of the original signal.

An abstract setup of compressed sensing is shown in \prettyref{fig:cs.setup}: A real vector $x^n \in \reals^n$ is mapped into $y^k \in \reals^k$ by an encoder (or compressor) $f: \reals^n \to \mathbb{R}^k$. The decoder (or decompressor) $g: \reals^k \to \reals^n$ receives $\hat{y}^k$, a possibly noisy version of the measurement, and outputs $\hx^n$ as the reconstruction. The \emph{measurement rate}, \ie, the dimensionality compression ratio, is given by
	\begin{equation}
	R = \frac{k}{n}. 
	\label{eq:rate}
\end{equation}

\begin{figure}[ht]
	\centering
\begin{tikzpicture}[scale=1.2,transform shape,node distance=2.5cm,auto,>=latex']
    \node [int] (f) {$\substack{\textrm{~~~~Encoder~~~~} \\ f:~\mathbb{R}^n \to \mathbb{R}^k}$};
    \node (start) [left of=f,node distance=2cm, coordinate] {};
    \node [dot, pin={[init]above:$e^k$}] (noise) [right of=f] {$+$};    
    \node [int] (g) [right of=noise] {$\substack{\textrm{~~~~Decoder~~~~} \\ g:~\mathbb{R}^k \to \mathbb{R}^n}$};    
    \node [coordinate] (end) [right of=g, node distance=2cm]{};
    \path[->] (start) edge node {$x^n$} (f);
    \path[->] (f) edge node {$y^k$} (noise);
    \path[->] (noise) edge node {$\hat y^k$} (g);
    \draw[->] (g) edge node {$\hat x^n$} (end) ;
\end{tikzpicture}
	\caption{Compressed sensing: an abstract setup.}
	\label{fig:cs.setup}
\end{figure}

Most of the compressed sensing literature focuses on the setup where
\begin{enumerate}[a)]
	\item performance is measured on a worst-case basis with respect to $x^n$.
	\item the encoder is constrained to be a linear mapping characterized by a $k \times n$ matrix $\bfA$, called the \emph{sensing} or \emph{measurement matrix}, which is usually assumed to be random, and known at the decoder.\footnote{Alternative notations have been used to denote the signal dimension and the number of measurements, \eg, 
	$(m,n)$ in \cite{donoho} and $(N,K)$ in \cite{CT06}.}
	\item the decoder is a low-complexity algorithm which is robust with respect to observation noise, for example, 	decoders based on convex optimizations such as $\ell_1$-minimization \cite{CDS99} and $\ell_1$-penalized least-squares (\ie\ LASSO) \cite{lasso}, greedy algorithms such as matching pursuit \cite{matching.pursuit}, graph-based iterative decoders such as approximate message passing (AMP) \cite{maleki.pnas}, fast iterative shrinkage-thresholding algorithm (FISTA) \cite{fista}, etc.
\end{enumerate}

In contrast, in this paper we formulate an information-theoretic fundamental limit in the following setup:
\begin{enumerate}[a)]
	\item the input vector $x^n$ is random with a known distribution and performance is measured on an average basis.\footnote{Similar Bayesian modeling is followed in some of the compressed sensing literature, for example, \cite{donoho.tanner.2009, maleki.pnas, renyi.ITtrans, maleki.noise.sens, dongning.noisyCS, reeves.gastpar, RG12, KWT10,KMSSZ11, TCSV11}.}
	\item in addition to the performance that can be achieved by the optimal sensing matrix, we also investigate the optimal performance that can be achieved by any \emph{nonlinear} encoder.  
	\item the decoder is \emph{optimal}:\footnote{The performance of optimal decoders for support recovery in the noisy case has been studied in \cite{Wainwright09, fletcher.supp, AT10} on a worst-case basis.}
	\begin{itemize}
	\item In the noiseless case, it is required to be \emph{Lipschitz continuous} for the sake of robustness; 
	\item In the noisy case, it is the minimum mean-square error (MMSE) estimator, \ie, the conditional expectation of the input vector given the noisy measurements.
\end{itemize}	
\end{enumerate}
Due to the constraints of actual measuring devices in certain applications of compressed sensing (\eg, MRI \cite{CS.MRI}, high-resolution radar imaging \cite{CS.radar}), one does not have the freedom to optimize over all possible sensing matrices. Therefore we consider both optimized as well as random sensing matrices and investigate their respective fundamental limits achieved by the corresponding optimal decoders.

  \subsection{Phase transition}
	\label{sec:pt}
	The general goal is to investigate the fundamental tradeoff between reconstruction fidelity and measurement rate as $n\to\infty$, as a functional of the signal and noise statistics.
	
	When the measurements are noiseless, the goal is to reconstruct the original signal as perfectly as possible by driving the error probability to zero as the ambient dimension, $n$, grows. For many input processes, \eg, independent and identically distributed (\iid) ones, it turns out that there exists a threshold for the measurement rate, above which it is possible to achieve a vanishing error probability and below which the error probability will eventually approach one for any sequence of encoder-decoder pairs. Such a phenomenon is known as \emph{phase transition} in statistical physics. In information-theoretic parlance, we say that the \emph{strong converse} holds.
	
	When the measurement is noisy, exact analog signal recovery is obviously impossible and we gauge the reconstruction fidelity by
the \emph{noise sensitivity}, defined as the ratio between the mean-square reconstruction error and the noise variance. Similar to the behavior of error probability in the noiseless case, there exists a phase transition threshold of measurement rate, which only depends on the input statistics, above which the noise sensitivity is bounded for all noise variances, and below which the noise sensitivity blows up as the noise variance tends to zero.

\subsection{Signal model}
\label{sec:model}
Sparse vectors, supported on a subspace with dimension smaller than $n$, play an important role in signal processing and statistical models. 
A stochastic model that captures sparsity is the following mixture distribution
\cite{renyi.ITtrans,donoho.tanner.hypercube, maleki.pnas, dongning.noisyCS, KWT10,KMSSZ11, TCSV11}:
\begin{equation}
P = (1-\gamma) \delta_{0} + \gamma \, P_{\rm c},
	\label{eq:sparseP}
\end{equation}
where $\delta_0$ denotes the Dirac measure at 0, $P_{\rm c}$ is a probability measure absolutely continuous with respect to the Lebesgue measure, and $0 \leq \gamma \leq 1$. Consider a random vector $X^n$ independently drawn from $P$. By the weak law of large numbers, 
$\frac{1}{n} \lnorm{X^n}{0} \toprob \gamma$, where the ``$\ell_0$ norm'' $\lnorm{\cdot}{0}$ denotes the number of non-zeros of a vector. This corresponds to the regime of \emph{proportional (or linear) sparsity}. In \prettyref{eq:sparseP}, the weight on the continuous part $\gamma$ parametrizes the signal sparsity
and $P_{\rm c}$ serves as the prior distribution of non-zero entries.


Generalizing \prettyref{eq:sparseP}, we henceforth consider \emph{discrete-continuous mixed} distributions (\ie, \emph{elementary distributions} \cite{renyibook}):
\begin{equation}
	P_X = (1-\gamma) P_{\rm d} + \gamma P_{\rm c},
	\label{eq:dcmix}
	\end{equation}
where $P_{\rm d}$ is a discrete probability measure and $P_{\rm c}$ is an absolutely continuous probability measure. 
For simplicity we focus on \iid input processes in this paper. 
Note that apart from sparsity, there are other signal structures that have been previously explored in the compressed sensing literature. For example, the so-called \emph{simple} signal in infrared absorption spectroscopy \cite[Example 3, p. 914]{donoho.tanner.ieee10} is such that each entry of the signal vector is constrained to lie in the unit interval, with most of the entries saturated at the boundaries (0 or 1). Similar to the rationale that leads to \prettyref{eq:sparseP}, an appropriate statistical model for simple signals is a mixture of a Bernoulli distribution and an absolutely continuous distribution supported on the unit interval, which is a particular instance of \prettyref{eq:dcmix}. Although most of the results in the present paper hold for arbitrary input distributions, with no practical loss of generality, we will be focusing on discrete-continuous mixtures (\ie, without singular components) because of their relevance to compressed sensing applications.

\subsection{Main contributions}
	\label{sec:results}
	We introduced the framework of almost lossless analog compression in \cite{renyi.ITtrans} as a Shannon-theoretic formulation of noiseless compressed sensing. Under regularity conditions on the encoder or the decoder, \cite{renyi.ITtrans} derives various coding theorems for the minimal measurement rate involving the information dimension of the input distribution, introduced by Alfr\'ed \renyi in 1959 \cite{renyi}. Along with the Minkowski and MMSE dimension, we summarize a few relevant properties of \renyi information dimension in \prettyref{sec:dim}. 	The most interesting regularity constraints are the linearity of the compressor and Lipschitz continuity (robustness) of the decompressor, which are considered \emph{separately} in \cite{renyi.ITtrans}. \prettyref{sec:noiseless} gives a brief summary of the non-asymptotic version of these results. In addition, in this paper we also consider the fundamental limit when linearity and Lipschitz continuity are imposed \emph{simultaneously}. For \iid discrete-continuous  mixtures, we show that the minimal measurement rate is given by the input information dimension, \ie, the weight $\gamma$ of the absolutely continuous part. Moreover, the Lipschitz constant of the decoder can be chosen independently of $n$, as a function of the gap between the measurement rate and $\gamma$. This results in the optimal phase transition threshold of error probability in noiseless compressed sensing.
	
	Our main results are presented in \prettyref{sec:noisy}, which deals with the case where the measurements are corrupted by additive Gaussian noise. We consider three formulations of noise sensitivity: optimal nonlinear, optimal linear and random linear (with \iid entries) encoder and the associated optimal decoder. In the case of \iid input processes, we show that for any input distribution, the phase transition threshold for optimal encoding is given by the input information dimension. Moreover, this result also holds for discrete-continuous mixtures with optimal linear encoders and Gaussian random measurement matrices. Invoking the results in \cite{mmse.dim.IT}, we show that the calculation of the reconstruction error with random measurement matrices based on \emph{heuristic} replica methods in \cite{dongning.noisyCS} predicts the correct phase transition threshold. These results also serve as a rigorous verification of the replica calculations in \cite{dongning.noisyCS} in the high-SNR regime (up to $o(\sigma^2)$ as the noise variance $\sigma^2$ vanishes).
	
	The fact that randomly chosen sensing matrices turn out to incur no penalty in phase transition threshold with respect to optimal nonlinear encoders lends further importance to the conventional compressed sensing setup described in \prettyref{sec:setup}.
	
	In \prettyref{sec:compare}, we compare the optimal phase transition threshold to the suboptimal threshold of several practical reconstruction algorithms under various input distributions. In particular, we demonstrate that the thresholds achieved by
 the $\ell_1$-minimization decoder and the AMP decoder \cite{donoho.tanner.hypercube,maleki.noise.sens} lie far from the optimal boundary, especially in the highly sparse regime which is most relevant to compressed sensing applications.
	

\section{Three dimensions}
\label{sec:dim}
In this section we introduce three dimension concepts for sets and probability measures involved in various coding theorems in Sections \ref{sec:noiseless} and \ref{sec:noisy}.

\subsection{Information dimension}
\label{sec:renyi}


A key concept in fractal geometry, in \cite{renyi} R\'enyi defined the \emph{information dimension} (also known as the \emph{entropy dimension} \cite{peres.conformal}) of a probability distribution. It measures the rate of growth of the entropy of successively finer discretizations.
\begin{definition}
Let $X$ be a real-valued random variable. Let $m \in \naturals$. 
The \emph{information dimension} of $X$ is defined as
\begin{equation}
	d(X) = \lim_{m \to \infty} \frac{H\left(\floor{m X}\right)}{\log m}.
	\label{eq:renyid}
\end{equation}
If the limit in \prettyref{eq:renyid} does not exist, the $\liminf$ and $\limsup$ are called lower and upper information dimensions of $X$ respectively, denoted by $\underline{d}(X)$ and $\overline{d}(X)$.
	\label{def:renyidim}
\end{definition}

\prettyref{def:renyidim} can be readily extended to random vectors, where the floor function $\floor{\cdot}$ is taken componentwise. Since $d(X)$ only depends on the distribution of $X$, we also denote $d(P_X) = d(X)$. The same convention also applies to other information measures.


The information dimension of $X$ is finite if and only if the mild condition
\begin{equation}
H(\floor{X}) < \infty	
	\label{eq:dfinite}
\end{equation}
is satisfied \cite{renyi.ITtrans}. A sufficient condition for $d(X) < \infty$ is $\expect{\log(1 + |X|)} < \infty$, much milder than finite mean or finite variance. 

Equivalent definitions of information dimension include:\footnote{The lower and upper information dimension are given by the $\liminf$ and $\limsup$ respectively.}
\begin{itemize}
	\item 
For an integer $M \geq 2$, write the $M$-ary expansion of $X$ as
\begin{equation}
X = \floor{X} + \sum_{i \geq 1} (X)_i M^{-i},
	\label{eq:Mary.expansion}
\end{equation}
where the $i\Th$ $M$-ary digit $(X)_i \triangleq \floor{M^i X} - M \floor{M^{i-1} X}$ is a discrete random variable taking values in $\{\ntok{0}{M-1}\}$. Then $d(X)$ is the normalized entropy rate of the digits $\{(X)_i\}$:
\begin{equation}
	d(X) = \lim_{m \to \infty} \frac{H(\ntok{(X)_1}{(X)_m})}{m \log M}.
	\label{eq:drate}
\end{equation}

\item
Denote by 
$B(x,\delta)$ the open ball of radius $\delta$ centered at $x$. Then (see \cite[Definition 4.2]{hk.proj} and \cite[Appendix A]{renyi.ITtrans})
\begin{gather}
d(X) = \lim_{\delta \downarrow 0} \frac{ \expect{\log P_X(B(X,\delta))}}{\log \delta}. \label{eq:hunt}
\end{gather}

\item The rate-distortion function of $X$ with mean-square error distortion is given by
\begin{equation}
R_X(D) = \inf \limits_{\Expect |\hX - X|^2 \leq D} I(X;\hX).
\label{eq:RD}
\end{equation}
Then \cite[Proposition 3.3]{dembo}
\begin{gather}
d(X) = \lim_{D \downarrow 0} \frac{R_X(D)}{\frac{1}{2}\log \frac{1}{D}}. 
\label{eq:dembo}
\end{gather}
\item 
Let $N \sim \calN(0,1)$ be independent of $X$. The mutual information $I(X; \qsnr X + N)$ is finite if and only if \prettyref{eq:dfinite} holds \cite{mmse.functional.IT}. 
Then \cite{guionnet}
\begin{gather}
 d(X) = \lim_{\snr \to \infty} \frac{I(X; \qsnr X + N)}{\frac{1}{2} \log \snr}. 
 \label{eq:i.renyi}
\end{gather}
	
\end{itemize}

The alternative definition in \prettyref{eq:drate} implies that $d(X^n) \leq n$ (as long as it is finite).
For discrete-continuous mixtures, the information dimension is given by the weight of the absolutely continuous part.
\begin{theorem}[{\cite{renyi}}]
	Assume that $X$ has a discrete-continuous mixed distribution as in \prettyref{eq:dcmix}. If $H(\floor{X}) < \infty$, then
\begin{equation}
d(X) = \gamma.
\end{equation}
	\label{thm:dmix}
\end{theorem}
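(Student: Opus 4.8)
The plan is to peel apart the two ``pure'' cases hidden in the mixture and recombine them by conditioning on a latent label. I would first dispose of $\gamma\in\{0,1\}$ (these reduce to the two pure cases below) and assume $0<\gamma<1$. Introduce $B\sim\Bern(\gamma)$ with $X\sim P_{\rm d}$ given $B=0$ and $X\sim P_{\rm c}$ given $B=1$, and let $X_{\rm d},X_{\rm c}$ have laws $P_{\rm d},P_{\rm c}$. Since $\floor{mX}$ is a deterministic function of $(X,B)$ and $H(B)\le\log 2$, the chain rule yields the sandwich
\begin{equation}
H(\floor{mX}\mid B)\;\le\;H(\floor{mX})\;\le\;H(\floor{mX}\mid B)+\log 2 ,
\label{eq:proposal-sandwich}
\end{equation}
where $H(\floor{mX}\mid B)=(1-\gamma)H(\floor{mX_{\rm d}})+\gamma H(\floor{mX_{\rm c}})$; applying \eqref{eq:proposal-sandwich} at $m=1$ together with the hypothesis $H(\floor X)<\infty$ shows $H(\floor{X_{\rm d}})$ and $H(\floor{X_{\rm c}})$ are finite, so every entropy in sight is finite. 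Dividing \eqref{eq:proposal-sandwich} by $\log m$, it then suffices to establish the two limits
\begin{equation}
\frac{H(\floor{mX_{\rm d}})}{\log m}\to 0 \qquad\text{and}\qquad \frac{H(\floor{mX_{\rm c}})}{\log m}\to 1 ,
\label{eq:proposal-twolimits}
\end{equation}
i.e.\ that discrete distributions have information dimension $0$ and absolutely continuous ones have information dimension $1$; the theorem is then the convex combination $(1-\gamma)\cdot 0+\gamma\cdot 1=\gamma$.

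For the discrete limit I would fix $\varepsilon>0$, pick a finite set $S$ with $\prob{X_{\rm d}\notin S}\le\varepsilon$, and condition on $\indc{X_{\rm d}\in S}$: on $\{X_{\rm d}\in S\}$ the variable $\floor{mX_{\rm d}}$ takes at most $|S|$ values, while on the complement $\floor{mX_{\rm d}}$ is determined by $\floor{X_{\rm d}}$ up to at most $\log m$ extra bits, so $H(\floor{mX_{\rm d}})\le\log 2+\log|S|+H(\floor{X_{\rm d}})+\varepsilon\log m$. Dividing by $\log m$, sending $m\to\infty$, then $\varepsilon\downarrow 0$, gives the first limit in \eqref{eq:proposal-twolimits}.

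For the absolutely continuous limit, the upper bound $\overline d(X_{\rm c})\le 1$ is free from the $M$-ary characterization \eqref{eq:drate} (equivalently: on any bounded interval $\floor{mX_{\rm c}}$ occupies $O(m)$ bins, and the tail contributes only the constant $H(\floor{X_{\rm c}})$). For the matching lower bound I would truncate: with $f$ the density of $X_{\rm c}$ and $A_L=\{x:|x|\le L,\ L^{-1}\le f(x)\le L\}$ one has $p_L\triangleq\prob{X_{\rm c}\in A_L}\uparrow 1$ because $0<f(X_{\rm c})<\infty$ almost surely; letting $\tilde X_L$ be $X_{\rm c}$ conditioned on $\{X_{\rm c}\in A_L\}$, its density is bounded and compactly supported, hence $h(\tilde X_L)$ is finite and the standard quantization identity gives $H(\floor{m\tilde X_L})=\log m+h(\tilde X_L)+o(1)$. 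Then $H(\floor{mX_{\rm c}})\ge H\big(\floor{mX_{\rm c}}\mid\indc{X_{\rm c}\in A_L}\big)\ge p_L\,H(\floor{m\tilde X_L})$ gives $\liminf_m H(\floor{mX_{\rm c}})/\log m\ge p_L$, and letting $L\to\infty$ finishes it.

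The hard part is exactly this absolutely continuous lower bound when $h(X_{\rm c})\in\{-\infty,+\infty\}$: the clean relation $H(\floor{mX_{\rm c}})=\log m+h(X_{\rm c})+o(1)$ then fails, and the truncation to a bounded, compactly supported density with $p_L\to 1$ is the device that restores it while losing only the vanishing factor $1-p_L$. Everything else is the chain rule and the elementary bound ``entropy $\le\log(\#\text{occupied bins})$''. (An alternative route replaces the quantization identity by the mutual-information formula \eqref{eq:i.renyi} or the rate-distortion formula \eqref{eq:dembo}, but the same $h=\pm\infty$ difficulty reappears and is circumvented by the same truncation.)
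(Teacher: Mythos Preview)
The paper does not supply its own proof of this theorem; it is quoted directly from R\'enyi's original paper and stated without argument. So there is no in-paper proof to compare against, and your proposal should be judged on its own.

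Your argument is correct and is essentially the classical route. The latent-label decomposition plus the chain-rule sandwich \prettyref{eq:proposal-sandwich} is exactly the mechanism behind the ``concavity/affinity'' of information dimension under mixtures (this is also how the companion result \cite[Theorem~2]{renyi.ITtrans}, invoked later in the present paper, is proved). Your discrete case is fine; the bound $H(\floor{mX_{\rm d}})\le\log 2+\log|S|+H(\floor{X_{\rm d}})+\epsilon\log m$ follows once you note that $\prob{X_{\rm d}\notin S}\,H(\floor{X_{\rm d}}\mid X_{\rm d}\notin S)\le H(\floor{X_{\rm d}})$. For the absolutely continuous lower bound, your truncation to $A_L$ is the right device, and in fact you do not need the full quantization identity: since the conditional density $\tilde f_L$ is bounded above by $L/p_L$, every bin probability satisfies $p_k\le L/(p_L m)$, whence $H(\floor{m\tilde X_L})\ge\log m-\log(L/p_L)$ directly, which already gives $\liminf H(\floor{m\tilde X_L})/\log m\ge 1$. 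This shortcut avoids having to verify the $o(1)$ remainder in $H(\floor{m\tilde X_L})=\log m+h(\tilde X_L)+o(1)$ and makes the argument completely self-contained.
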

In the presence of a singular component, the information dimension does not admit a simple formula in general. One example where the information dimension can be explicitly determined is the \emph{Cantor distribution}, which can be defined via the following ternary expansion
\begin{equation}
	X = \sum_{i \geq 1} (X)_i 3^{-i},
	\label{eq:cantor}
\end{equation}
where $(X)_i$'s are \iid and equiprobable on $\{0,2\}$. Then $P_X$ is absolutely singular with respect to the Lebesgue measure and $d(X) = \log_3 2 \approx 0.63$, in view of \prettyref{eq:drate}.

\subsection{MMSE dimension}
	\label{sec:mmsed}
	
	Introduced in \cite{mmse.dim.IT}, the MMSE dimension is an information measure that governs the high-SNR asymptotics of the MMSE in Gaussian noise. Denote the MMSE of estimating $X$ based on $Y$ by
\begin{align}
	\mmse(X|Y) 
& ~ = \inf_f \expect{(X - f(Y))^2} \label{eq:m0}\\ 
& ~ =  \expect{(X - \Expect[X|Y])^2} = \expect{\var(X|Y)},
	\label{eq:m}
\end{align}
where the infimum in \prettyref{eq:m0} is over all Borel measurable $f$.
When $Y$ is related to $X$ through an additive Gaussian noise channel with gain $\qsnr$, \ie, $Y = \qsnr X + N$ with $N \sim \calN(0,1)$ independent of $X$, we denote
\begin{equation}
	\mmse(X, \snr) = \mmse(X | \qsnr X + N).
	\label{eq:mmse}
\end{equation}
\begin{definition}
The \emph{MMSE dimension} of $X$ is defined as
\begin{equation}
	\scrD(X) = \lim_{\snr \to \infty} \snr \cdot \mmse(X,\snr).
	\label{eq:mmsedim}
\end{equation}
Useful if the limit in \prettyref{eq:mmsedim} does not exist, the $\liminf$ and $\limsup$ are called lower and upper MMSE dimensions of $X$ respectively, denoted by $\uscrD(X)$ and $\oscrD(X)$.
	\label{def:mmsedim}
\end{definition}

It is shown in \cite[Theorem 8]{mmse.dim.IT} that the information dimensions are sandwiched between the MMSE dimensions: if \prettyref{eq:dfinite} is satisfied, then
\begin{equation}
	0 \leq \uscrD(X) \leq \ud(X) \leq \od(X) \leq \oscrD(X) \leq 1. 	
	\label{eq:mmse.renyi}
\end{equation}
For discrete-continuous mixtures, the MMSE dimension coincides with the information dimension:
\begin{theorem}[{\cite[Theorem 15]{mmse.dim.IT}}]
	If $X$ has a discrete-continuous mixed distribution as in \prettyref{eq:dcmix}, then $\scrD(X) = \gamma$.
	\label{thm:Dmix}
\end{theorem}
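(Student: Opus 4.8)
\emph{Proof proposal.} The plan is to route everything through the I--MMSE relation $\fracd{}{\snr} I(X;\qsnr X+N)=\tfrac12\mmse(X,\snr)$, which turns the limit defining $\scrD(X)$ into an asymptotic statement about $I(X;\qsnr X+N)$ that can be read off from the mixture structure. Introduce a latent label $T\in\{0,1\}$ with $\Prob[T=1]=\gamma$ and $X\mid\{T=0\}\sim P_{\rm d}$, $X\mid\{T=1\}\sim P_{\rm c}$, and write $X_{\rm d},X_{\rm c}$ for the two conditional variables. Since $Y=\qsnr X+N$ depends on $(X,T)$ only through $X$, we have $Y\indep T\mid X$, so the chain rule gives
\begin{equation}
I(X;\qsnr X+N)=I(T;Y)+I(X;Y\mid T)=I(T;Y)+(1-\gamma)\,I(X_{\rm d};\qsnr X_{\rm d}+N)+\gamma\,I(X_{\rm c};\qsnr X_{\rm c}+N).
\label{eq:Idec}
\end{equation}

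Next I would show $I(X;\qsnr X+N)=\tfrac\gamma2\log\snr+c+o(1)$ for a finite constant $c$, by handling the three terms of \prettyref{eq:Idec} separately. The term $I(T;Y_\snr)$ is non-decreasing in $\snr$: for $\snr_1<\snr_2$ the output $Y_{\snr_1}$ is a stochastically degraded version of $Y_{\snr_2}$ (rescale and add independent $\calN(0,1-\snr_1/\snr_2)$ noise), so $T\to Y_{\snr_2}\to Y_{\snr_1}$ is Markov; being bounded by $H(T)\le\log 2$, it converges. The term $I(X_{\rm d};\qsnr X_{\rm d}+N)$ increases to $H(X_{\rm d})$, finite under the standing hypothesis $H(\floor{X})<\infty$, because at high $\snr$ the observation $Y/\qsnr$ identifies the atom, so $H(X_{\rm d}\mid Y)\to 0$ by Fano's inequality plus uniform integrability of $\log$ against $P_{\rm d}$. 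The term $I(X_{\rm c};\qsnr X_{\rm c}+N)$ equals $h(\qsnr X_{\rm c}+N)-\tfrac12\log(2\pi e)=\tfrac12\log\snr+h(X_{\rm c}+N/\qsnr)-\tfrac12\log(2\pi e)$, and $h(X_{\rm c}+N/\qsnr)\to h(X_{\rm c})$ as the added Gaussian vanishes (a de~Bruijn/entropy-continuity argument, valid when $h(P_{\rm c})$ is finite, e.g.\ for $P_{\rm c}$ of finite variance). Adding up, $c=\lim_\snr I(T;Y_\snr)+(1-\gamma)H(X_{\rm d})+\gamma\bigl(h(X_{\rm c})-\tfrac12\log(2\pi e)\bigr)$.

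The conclusion is then a short Tauberian squeeze. Fix $\lambda>1$; by the previous step and I--MMSE,
\begin{equation}
\int_{\snr}^{\lambda\snr}\mmse(X,t)\,\diff t=2\bigl[I(X;\sqrt{\lambda\snr}\,X+N)-I(X;\qsnr X+N)\bigr]=\gamma\log\lambda+o(1).
\label{eq:Tauber}
\end{equation}
Since $t\mapsto\mmse(X,t)$ is non-increasing, $(\lambda-1)\snr\cdot\mmse(X,\lambda\snr)\le\int_{\snr}^{\lambda\snr}\mmse(X,t)\,\diff t\le(\lambda-1)\snr\cdot\mmse(X,\snr)$; combining with \prettyref{eq:Tauber} and reindexing yields
\[
\frac{\gamma\log\lambda}{\lambda-1}\le\liminf_{\snr\to\infty}\snr\cdot\mmse(X,\snr)\le\limsup_{\snr\to\infty}\snr\cdot\mmse(X,\snr)\le\frac{\lambda\gamma\log\lambda}{\lambda-1},
\]
and letting $\lambda\downarrow 1$ (so $\tfrac{\log\lambda}{\lambda-1}\to 1$) pins $\scrD(X)=\gamma$. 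As a consistency check, the sandwich \prettyref{eq:mmse.renyi} together with \prettyref{thm:dmix} already supplies $\uscrD(X)\le\gamma\le\oscrD(X)$.

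The step I expect to be the real obstacle is the middle one: the three limits in \prettyref{eq:Idec} must hold with $o(1)$, not merely $O(1)$, corrections, since otherwise \prettyref{eq:Tauber} degrades to $\gamma\log\lambda+O(1)$ and the squeeze collapses; the delicate piece is the continuous part, where one must quantify the rate in $h(X_{\rm c}+N/\qsnr)\to h(X_{\rm c})$, and the fully general singular-free case is handled in \cite{mmse.dim.IT}. A hands-on alternative bypassing mutual information, which I would fall back on, writes the posterior of $X$ given $Y$ as $(1-\alpha(Y))P_{X_{\rm d}\mid Y}+\alpha(Y)P_{X_{\rm c}\mid Y}$ with $\alpha(y)=\Prob[T=1\mid Y=y]$, uses the variance-of-a-mixture identity to get $\mmse(X,\snr)=(1-\gamma)\mmse(X_{\rm d},\snr)+\gamma\,\mmse(X_{\rm c},\snr)+\int\gamma(1-\gamma)\tfrac{p_{\rm d}p_{\rm c}}{p_X}\bigl(\Expect[X_{\rm c}\mid Y]-\Expect[X_{\rm d}\mid Y]\bigr)^2\,\diff y$, and then invokes $\snr\cdot\mmse(X_{\rm d},\snr)\to 0$ (exponential decay of a discrete MMSE), $\snr\cdot\mmse(X_{\rm c},\snr)\to 1$ (a local Bernstein--von Mises argument: the rescaled posterior $\qsnr(X_{\rm c}-Y/\qsnr)$ converges to $\calN(0,1)$ at Lebesgue points of the density of $P_{\rm c}$, plus Fatou), and a cross-term estimate showing both conditional means collapse onto $Y/\qsnr$ on the overlap of the two supports; in that route, controlling the cross term when the atoms of $P_{\rm d}$ accumulate is the crux.
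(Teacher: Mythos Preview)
The paper does not prove this statement; it is quoted verbatim from \cite[Theorem~15]{mmse.dim.IT} and used as a black box. So there is no in-paper proof to compare against, and your proposal should be read as an independent attempt.

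Your main route (chain-rule decomposition \prettyref{eq:Idec}, high-$\snr$ expansion of each term, then the monotonicity-based Tauberian squeeze) is correct in outline and would yield $\scrD(X)=\gamma$ once the $o(1)$ corrections are secured. The Tauberian step is clean and exactly right: monotonicity of $\mmse(X,\cdot)$ is all that is needed, and the $\lambda\downarrow 1$ squeeze is the standard way to pass from integrated asymptotics to pointwise ones. You have also correctly identified the only genuine obstacle, namely that each of the three pieces must converge (not merely stay bounded). In particular, $I(X_{\rm d};\qsnr X_{\rm d}+N)\to H(P_{\rm d})$ requires $H(P_{\rm d})<\infty$, and $h(X_{\rm c}+N/\qsnr)\to h(X_{\rm c})$ requires $h(P_{\rm c})>-\infty$ (finite variance alone does not guarantee this). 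Under those side conditions your argument is complete; without them the $o(1)$ in \prettyref{eq:Tauber} degrades and the squeeze fails, as you note.

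The proof in \cite{mmse.dim.IT} proceeds along the lines of your ``hands-on alternative'': it works directly with $\snr\cdot\mmse(X,\snr)$ rather than detouring through mutual information, establishing separately that $\snr\cdot\mmse(X_{\rm d},\snr)\to 0$ for any discrete law and $\snr\cdot\mmse(X_{\rm c},\snr)\to 1$ for any absolutely continuous law, and then controlling the mixture cross term. That route avoids the finiteness hypotheses on $H(P_{\rm d})$ and $h(P_{\rm c})$ altogether, which is why the theorem is stated without them. Your I--MMSE/Tauberian argument is a pleasant and conceptually transparent alternative when those entropies are finite, and in fact gives the stronger conclusion that $I(X;\qsnr X+N)-\tfrac{\gamma}{2}\log\snr$ converges; but for the full generality of the statement you would indeed need to fall back on the direct variance-of-mixture analysis you sketch at the end.
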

It is possible that the MMSE dimension does not exist and the inequalities in \prettyref{eq:mmse.renyi} are strict. For example, consider the Cantor distribution in \prettyref{eq:cantor}. Then the product $\snr \cdot \mmse(X,\snr)$ oscillates periodically in $\log \snr$ between $\uscrD(X) \approx 0.62$ and $\oscrD(X) \approx 0.64$ \cite[Theorem 16]{mmse.dim.IT}.

\subsection{Minkowski dimension}
In fractal geometry, the Minkowski dimension (also known as the box-counting dimension) \cite{falconer2} gauges the fractality of a subset in metric spaces, defined as the exponent with which the covering number grows. The ($\epsilon$-)Minkowski dimension of a probability measure is defined as the lowest Minkowski dimension among all sets with measure at least $1-\epsilon$ \cite{pesin}.
\begin{definition}[Minkowski dimension]
Let $A$ be a nonempty bounded subset of $\reals^n$. For $\delta > 0$, denote by $N_{A}(\delta)$ the $\delta$-covering number of $A$, \ie, the smallest number of $\ell_2$-balls of radius $\delta$ needed to cover $A$. Define the (upper) Minkowski dimension of $A$ as
\begin{align}
	\uBdim{A} = \limsup_{\delta \to 0} \frac{\log N_{A}(\delta)}{\log \frac{1}{\delta}}.
	\label{eq:ubdim}
\end{align}
Let $\mu$ be a probability measure on $(\reals^n, \calB_{\reals^n})$. Define the ($\epsilon$-)Minkowski dimension of $\mu$ as
\begin{gather}
\uBdim^{\epsilon}(\mu) = \inf\{ \uBdim(A): \mu(A) \geq 1 - \epsilon\}.
\label{eq:ubdim2}
\end{gather}
\label{def:bdim}
\end{definition}

Minkowski dimension is always nonnegative and less than the ambient dimension $n$, with $\Bdim A  = 0$ for any finite set $A$ and $\Bdim A  = n$ for any bounded set $A$ with nonempty interior. An intermediate example is the middle-third Cantor set $C$ in the unit interval: $\Bdim C = \log_3 2$ \cite[Example 3.3]{falconer2}.

\section{Noiseless compressed sensing}
\label{sec:noiseless}

\subsection{Definitions}

\begin{definition}[Lipschitz continuity]
Let $U \subset \reals^n$ and $f: U \to \reals^k$. Define\footnote{Throughout the paper, $\norm{\cdot}$ denotes the $\ell_2$ norm on the Euclidean space. It should be noted that the proof in the present paper relies crucially on the inner product structure endowed by the $\ell_2$ norm. See \prettyref{rmk:hilbert}.} 
\begin{equation}
\Lip(f) \triangleq  \sup_{x \neq y} \frac{\norm{f(x) - f(y)}}{\norm{x -y}}.
\end{equation}
If $\Lip(f) \leq L$ for some $L \in \reals_+$, we say that $f$ is \emph{$L$-Lipschitz continuous}, and $\Lip(f)$ is called the \emph{Lipschitz constant} of $f$.
	\label{def:lipschitz}
\end{definition}

\begin{remark}
	$\Lip(\cdot)$ defines a pseudo-norm on the space of all functions.
	\label{rmk:pseudo}
\end{remark}

The Shannon-theoretic fundamental limits of noiseless compressed sensing are defined as follows.
\begin{definition}
Let $X^n$ be a random vector consisting of independent copies of $X$.
Define the minimum $\epsilon$-achievable rate to be the minimum of $R > 0$ such that there exists a sequence of encoders $f_n: \reals^n \to \reals^{\floor{Rn}}$ and decoders $g_n: \reals^{\floor{Rn}} \to \reals^n$, such that
\begin{equation}
\prob{g_n(f_n(X^n)) \neq X^n} \leq \epsilon
\label{eq:err.prob}
\end{equation}
for all sufficiently large $n$. The minimum $\epsilon$-achievable rate is denoted by $\rlinear(X,\epsilon), \rlip(X,\epsilon)$  and $\rll(X,\epsilon)$ depending on the class of allowable encoders and decoders as specified in \prettyref{tab:def}.\footnote{It was shown in \cite{renyi.ITtrans} that in the definition of $\rlinear$ and $\rlip$, the continuity constraint can be replaced by Borel measurability without changing the minimum rate.}
\begin{table}[!ht]
\renewcommand{\arraystretch}{1.3}
\caption{Regularity conditions of encoder/decoders and corresponding minimum $\epsilon$-achievable rates.}
\centering
\begin{tabular}{|l|l|c|}
	\hline
	Encoder & Decoder & Minimum $\epsilon$-achievable rate\\
	\hline
	\hline
	Linear & Continuous & $\rlinear(X, \epsilon)$\\
	\hline
	Continuous & Lipschitz & $\rlip(X, \epsilon)$\\
	\hline
	Linear & Lipschitz & $\rll(X, \epsilon)$\\
	\hline	
	\end{tabular}
	\label{tab:def}
\end{table}
	\label{def:main}
\end{definition}

\begin{remark}
In \prettyref{def:main}, $\rlip$ and $\rll$ are defined under the Lipschitz continuity assumption of the decoder, which does not preclude the case where the Lipschitz constants blow up as the dimension grows. For practical applications, decoders with bounded Lipschitz constants are desirable, which amounts to constructing a sequence of decoders with Lipschitz constant only depending on the rate and the input statistics. As we will show later, this is indeed possible for discrete-continuous mixtures.
\end{remark}

\subsection{Results}
	\label{sec:results.noiseless}
	
	The following general result holds for any input process \cite{renyi.ITtrans}:
\begin{theorem}
For any $X$ and any $0 \leq \epsilon \leq 1$, 
\begin{equation}
	\rlinear(X,\epsilon) \leq \rlip(X,\epsilon) \leq \rll(X,\epsilon) .
	\label{eq:ordering}
\end{equation}	
Moreover, \prettyref{eq:ordering} holds for arbitrary input processes that are not necessarily \iid.
	\label{thm:ordering}
\end{theorem}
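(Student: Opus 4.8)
The inequality $\rlip(X,\epsilon)\le\rll(X,\epsilon)$ is immediate: a linear map is continuous, so every code consisting of a linear encoder and a Lipschitz decoder is in particular a code consisting of a continuous encoder and a Lipschitz decoder, and the infimum defining $\rlip$ is thus over a larger family of codes than the one defining $\rll$. (The analogous remark on the decoder side also yields $\rlinear\le\rll$, but that is subsumed by the chain.)

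For $\rlinear(X,\epsilon)\le\rlip(X,\epsilon)$ the two code classes are genuinely incomparable --- on the left the encoder is more restricted but the decoder is less restricted --- so I would argue by an explicit conversion. Fix $R>\rlip(X,\epsilon)$ and $\delta>0$; it suffices to produce, for all large $n$, a \emph{linear} encoder $\reals^n\to\reals^{\floor{(R+\delta)n}}$ and a continuous decoder with error probability $\le\epsilon$, since then $\rlinear(X,\epsilon)\le R+\delta$, and letting $\delta\downarrow0$ and $R\downarrow\rlip(X,\epsilon)$ gives the claim. The construction proceeds in five steps. \emph{(i)} By definition of $\rlip$, pick a continuous $f_n:\reals^n\to\reals^{k_n}$ with $k_n=\floor{Rn}$ and a Lipschitz $g_n:\reals^{k_n}\to\reals^n$ such that $\prob{g_n(f_n(X^n))\ne X^n}\le\epsilon$, and set $G_n=\{x:g_n(f_n(x))=x\}$, a closed set with $\prob{X^n\in G_n}\ge1-\epsilon$. \emph{(ii)} Since every $x\in G_n$ equals $g_n(f_n(x))$, we have $G_n\subseteq g_n(\reals^{k_n})$, and because $g_n$ is Lipschitz --- this is the only place the Lipschitz property of the decoder is used --- $\Hdim(G_n)\le\Hdim(\reals^{k_n})=k_n$. \emph{(iii)} Take as the new, linear, encoder a random matrix $A_n\in\reals^{m_n\times n}$ with \iid $\calN(0,1)$ entries, $m_n=\floor{(R+\delta)n}>k_n$. \emph{(iv)} Invoke a projection/slicing lemma of Marstrand--Mattila type (used for exactly this purpose in \cite{renyi.ITtrans}, in the spirit of the prevalence results of Hunt and Kaloshin): because $\Hdim(G_n)<m_n$, for each fixed $x\in\reals^n$ the kernel of $A_n$ misses $(x-G_n)\setminus\{0\}$ with probability one, so averaging over $x\sim P_{X^n}$ by Fubini gives that for Lebesgue-almost-every realization $A$ of $A_n$ one has $\prob{\exists\,y\in G_n\setminus\{X^n\}:AX^n=Ay}=0$ --- morally, a random linear map of rank above $\Hdim(G_n)$ almost surely does not merge the true signal with any other point of $G_n$. \emph{(v)} Fix such an $A$ and use it as the encoder; put $H_n=\{x\in G_n:Ax\ne Ay\text{ for all }y\in G_n\setminus\{x\}\}$ (universally measurable, being $G_n$ minus the projection of a Borel set), which then has probability $\ge1-\epsilon$ and on which $A$ is injective; pass to a compact subset $B_n\subseteq H_n$ of essentially the same probability, so that $A|_{B_n}$ is a homeomorphism onto its compact image, and extend its inverse to a continuous $h_n:\reals^{m_n}\to\reals^n$ by Tietze's theorem. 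Then $h_n(AX^n)=X^n$ whenever $X^n\in B_n$, so the error is $\le\epsilon$. (By the footnoted fact that Borel decoders do not change $\rlinear$, one may instead skip the compactness reduction and use the Lusin--Souslin theorem to get a Borel $h_n$ directly, with no loss of probability.)

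Nothing in this conversion uses independence of the coordinates of $X^n$, only that $X^n$ has a Borel law on $\reals^n$; hence \prettyref{eq:ordering} holds for arbitrary input processes.

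The step I expect to be the main obstacle is \emph{(iv)}: bounding, over a random linear map, the probability that the true signal fails to be separated from the rest of $G_n$. This is where the $\ell_2$/inner-product structure is genuinely needed, and where one must import (or reprove) the relevant slicing/projection estimate of geometric measure theory. By contrast the descriptive-set-theory bookkeeping in \emph{(v)} (projections of Borel sets are analytic, hence universally measurable; Lusin--Souslin for measurability of the decoder) and the passage to a compact subset are routine, though they must be done carefully so as not to inflate the error probability.
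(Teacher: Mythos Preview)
Your argument is correct. The paper does not prove the first inequality here; it merely cites \cite[Section V]{renyi.ITtrans} and records the surrounding non-asymptotic Theorems~\ref{thm:dimB.finite} and~\ref{thm:lipconv.f}, whose combination yields $\rlinear\le\rlip$ as follows: a Lipschitz decoder forces $k\ge\uBdim^{\epsilon}(P_{X^n})$ (\prettyref{thm:lipconv.f}), and any $m>\uBdim^{\epsilon}(P_{X^n})$ admits a linear encoder with \Holder-continuous decoder (\prettyref{thm:dimB.finite}). Your route is structurally the same---Lipschitz decoding bounds a dimension, which a generic linear map of larger rank then separates---but differs in two ways worth noting.

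First, you work with the Hausdorff dimension of the success set $G_n\subset g_n(\reals^{k_n})$, whereas the paper factors through the $\epsilon$-Minkowski dimension of the \emph{measure} $P_{X^n}$. Your choice is cleaner for this particular inequality: Hausdorff dimension is defined for unbounded sets and is monotone under Lipschitz images without further ado, and you never need to identify $\uBdim^{\epsilon}(P_{X^n})$ as a separate object. The paper's detour through $\uBdim^{\epsilon}$ buys something else: it isolates a single non-asymptotic quantity that simultaneously lower-bounds $k$ for Lipschitz decoding and upper-bounds it for linear encoding, which is useful elsewhere (\eg\ \prettyref{rmk:lip}).

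Second, your step~\emph{(iv)}---that for each fixed $x$ the random kernel almost surely misses $(x-G_n)\setminus\{0\}$ once $m_n>\Hdim(G_n)$---is indeed the crux, and your self-assessment is accurate. It reduces, after radial projection to $S^{n-1}$, to the estimate $\gamma_{n,n-m}\big(\{V:\mathrm{dist}(p,V)\le r\}\big)\le C_n r^{m}$, which follows from the Beta$(m/2,(n-m)/2)$ law of $\|P_{V^\perp}p\|^2$ and a covering argument over a set of vanishing $\calH^{m}$-content. This is the Hausdorff-dimension analogue of the Minkowski-based embedding in \cite{hk.linear} underlying \prettyref{thm:dimB.finite}. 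Finally, your handling of step~\emph{(v)} is right to prefer the Borel route via Lusin--Souslin (passing first to a Borel $H_n'\subset H_n$ of equal $P_{X^n}$-mass), since the compactness reduction incurs an $\eta$-loss in error probability that the definition of $\rlinear$ does not obviously absorb.
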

The second inequality in \prettyref{eq:ordering} follows from the definitions, since 
\[
\rll(X, \epsilon) \geq \max\{\rlinear(X, \epsilon), \rlip(X, \epsilon)\}.
\]
Far less intuitive is the first inequality, proved in \cite[Section V]{renyi.ITtrans}, which states that robust reconstruction is always harder to achieve than linear compression.

 The following result is a finite-dimensional version of the general achievability result of linear encoding in \cite[Theorem 18]{renyi.ITtrans}, which states that sets of low Minkowski dimension can be linearly embedded into low-dimensional Euclidean space probabilistically. This is a probabilistic generalization of the embeddability result in \cite{hk.linear}.
\begin{theorem}
	Let $X^n$ be a random vector with $\uBdim^{\epsilon}(P_{X^n}) \leq k$.
	Let $m > k$. Then for Lebesgue almost every $\bfA \in \reals^{m \times n}$, there exists a $\pth{1-\frac{k}{m}}$-H\"older continuous function $g: \reals^m \to \reals^n$, \ie, $\|g(x)-g(y)\| \leq L \|x-y\|^{1-\frac{k}{m}}$ for some $L >0$ and all $x,y$, such that $\prob{g(\bfA X^n) \neq X^n} \leq \epsilon$.
	\label{thm:dimB.finite}
\end{theorem}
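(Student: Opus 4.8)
The plan is to realize $g$ as a Hölder extension of the inverse of $\bfA$ restricted to a high‑probability set of small box dimension, so the real content is to show that such a Hölder inverse exists for Lebesgue‑almost every $\bfA$. First I would extract the set: box dimension is defined only for bounded sets, so the hypothesis $\uBdim^\epsilon(P_{X^n})\le k<m$ together with \prettyref{def:bdim} gives, for every $s\in(k,m)$, a bounded Borel set $B$ with $P_{X^n}(B)\ge 1-\epsilon$ and $\uBdim(B)<s$; replacing $B$ by its closure changes neither the box dimension nor decreases the mass, so I may take a compact $A\subseteq\reals^n$ with $P_{X^n}(A)\ge 1-\epsilon$ and $\uBdim(A)<s<m$. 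It then suffices to find, for a.e.\ $\bfA$, a Borel set $A''\subseteq A$ with $P_{X^n}(A'')=P_{X^n}(A)$ on which $\bfA$ is injective and whose inverse $(\bfA|_{A''})^{-1}\colon\bfA(A'')\to\reals^n$ is $\bigl(1-\tfrac sm\bigr)$‑Hölder (indeed, Hölder of every exponent below $1-\tfrac sm$): a coordinatewise McShane/Kirszbraun extension then produces $g\colon\reals^m\to\reals^n$ with the same Hölder exponent (and a constant larger by at most $\sqrt n$), and $g(\bfA x)=x$ on $A''$ forces $\prob{g(\bfA X^n)\neq X^n}\le1-P_{X^n}(A'')\le\epsilon$. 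Letting $s\downarrow k$ yields the exponent $1-\tfrac km$ (the endpoint being reached e.g.\ whenever the infimum defining $\uBdim^\epsilon$ is attained below $k$). All genericity arguments below are run against the standard Gaussian law on $\reals^{m\times n}$, which is equivalent to Lebesgue measure, so ``a.e.\ $\bfA$'' statements transfer.

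The injectivity ingredient is a transversality/Fubini argument. For a fixed $x$ the set $T_x:=(x-A)\setminus\{0\}$ has $\uBdim(T_x)=\uBdim(A)<m$; a dimension count on the incidence set $\{(v,V):v\in T_x,\ v\in V\}$ — fibered over $T_x$, each fiber $\{V\ni v\}$ having dimension $m(n-m-1)$ inside the Grassmannian of $(n-m)$‑planes (total dimension $m(n-m)$) — shows that this incidence set has box dimension at most $\uBdim(A)+m(n-m-1)<m(n-m)$, hence its projection to the Grassmannian is Lebesgue‑null; pulling back along $\bfA\mapsto\Ker\bfA$ gives $\Ker\bfA\cap T_x=\emptyset$ for a.e.\ $\bfA$. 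Integrating the collision indicator against $P_{X^n}(dx)$ and using Tonelli, the set $C_{\bfA}=\{x\in A:\exists\,y\in A,\ y\neq x,\ \bfA x=\bfA y\}$ satisfies $P_{X^n}(C_{\bfA})=0$ for a.e.\ $\bfA$; then $\bfA$ is injective on $A':=A\setminus C_{\bfA}$, which has full mass $P_{X^n}(A')=P_{X^n}(A)$.

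The Hölder ingredient is the quantitative refinement: for a.e.\ $\bfA$ there is a Borel $A''\subseteq A'$ with $P_{X^n}(A'')=P_{X^n}(A)$ and $\norm{\bfA(x-y)}\ge c(\bfA)\norm{x-y}^{1/\alpha}$ for all $x\in A''$, $y\in A$ and any fixed $\alpha<1-\tfrac sm$ — precisely $\alpha$‑Hölder continuity of $(\bfA|_{A''})^{-1}$. Again I would fix $x$, write $T=x-A\subseteq B(0,D)$, cover $T$ at dyadic scales $2^{-j}$ by at most $2^{js}$ balls (valid for small scales since $\uBdim(T)<s$), and use the Gaussian small‑ball estimate $\Prob\{\norm{\bfA v}\le t\norm{v}\}\le C_m t^m$. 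Choosing the covering at scale $2^{-j}$ fine enough (radius $\approx 2^{-j/\alpha}$) that $\opnorm{\bfA}$ times the radius is dominated by the target $2^{-j/\alpha}$, the probability that some $v\in T$ with $\norm v\approx 2^{-j}$ violates the bound is at most the number of centers times the small‑ball probability, which is $2^{js/\alpha}\cdot 2^{jm(1-1/\alpha)}=2^{j(s+m\alpha-m)/\alpha}$; this is summable in $j$ exactly because $\alpha<1-\tfrac sm$ makes $s+m\alpha-m<0$, so a Borel–Cantelli argument — controlling the coarse scales through the tail of $\opnorm{\bfA}$ and the nondegeneracy of $\bfA$ on $T$ established above — yields the bound for a.e.\ $\bfA$ and this fixed $x$, and Tonelli promotes it to the set $A''$. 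The reason the exponent is $1-\tfrac sm$ rather than the $1-\tfrac{2s}m$ of a genuine bi‑Hölder embedding of all of $A$ is that fixing $x$ first turns the union bound into a sum over a single $2^{js}$‑point net instead of over pairs of net points.

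Assembling: extend $(\bfA|_{A''})^{-1}$ coordinatewise to $\reals^m$, call it $g$; it is $\alpha$‑Hölder and reconstructs on $A''$, so $\prob{g(\bfA X^n)\neq X^n}\le\epsilon$, and $\alpha\uparrow 1-\tfrac sm$, $s\downarrow k$ finish. I expect the main obstacle to be the bookkeeping in the quantitative step: calibrating the net fineness at each scale as the correct power of the scale so that $\opnorm{\bfA}$ times the radius is dominated by the target lower bound, handling the finitely many coarse scales (large $\norm v$, where only nondegeneracy of $\bfA$ on $T$ is available) uniformly in $x$ so that the resulting Hölder constant of $(\bfA|_{A''})^{-1}$ does not degenerate over $x\in A''$, and verifying measurability of $c(\bfA)$ and of $A''$ so that the final Tonelli step is legitimate. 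By comparison, the transversality dimension count and the Gaussian‑to‑Lebesgue transfer are routine.
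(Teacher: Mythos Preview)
The paper does not actually prove \prettyref{thm:dimB.finite} here; it is quoted as a finite-dimensional instance of \cite[Theorem~18]{renyi.ITtrans} and described as ``a probabilistic generalization of the embeddability result in \cite{hk.linear}'', with only \prettyref{rmk:dec} sketching the decoder. Your outline is exactly that Hunt--Kaloshin template --- extract a compact $A$ with $P_{X^n}(A)\ge 1-\epsilon$ and $\uBdim(A)<s<m$, run a covering plus Gaussian small-ball Borel--Cantelli argument to lower-bound $\norm{\bfA(x-y)}$, then extend the inverse --- so the approaches match. You have also correctly located where the improved exponent comes from: fixing $x$ and integrating against $P_{X^n}$ replaces the union bound over $N(r)^2$ pairs of net points (which gives the deterministic Hunt--Kaloshin exponent $1-2s/m$) by a union bound over a single net of size $N(r)$, yielding $1-s/m$; this is precisely the ``probabilistic generalization'' the paper alludes to.

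The obstacle you flag is the real crux, and as written your Tonelli step does not close it. Tonelli delivers: for a.e.\ $\bfA$, $c(x,\bfA)=\inf_{y\in A\setminus\{x\}}\norm{\bfA(x-y)}/\norm{x-y}^{1/\alpha}>0$ for $P_{X^n}$-a.e.\ $x$. It does \emph{not} deliver $\inf_{x\in A''}c(x,\bfA)>0$ on any full-measure $A''$, and without a uniform lower bound you cannot McShane-extend to a single H\"older $g$ on $\reals^m$. The fix is already implicit in your estimate: because $N_{x-A}(\delta)=N_A(\delta)$, your small-ball bound is uniform in $x$, giving $\Prob\{c(x,\bfA)<c\}\le C'c^{m}$ for every $x\in A$. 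Hence $\Expect_{\bfA}\big[P_{X^n}\{x\in A:c(x,\bfA)<c\}\big]\le C'c^{m}\to 0$, so for a.e.\ $\bfA$ the level sets $A_c=\{x\in A:c(x,\bfA)\ge c\}$ satisfy $P_{X^n}(A_c)\uparrow P_{X^n}(A)$ as $c\downarrow 0$, and on each $A_c$ the inverse is genuinely $\alpha$-H\"older with a fixed constant, so the extension goes through. This settles the theorem whenever you can pick $A$ with $P_{X^n}(A)>1-\epsilon$ strictly; the borderline case $P_{X^n}(A)=1-\epsilon$ needs one more remark (e.g.\ approaching through slightly larger-mass sets, or observing that the exponent statement for every $\alpha<1-k/m$ already suffices for the application). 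Your separate injectivity paragraph is correct in spirit but redundant, since positivity of $c(x,\bfA)$ already forces injectivity on $A_c$.
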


\begin{remark}
	In \prettyref{thm:dimB.finite}, the decoder can be chosen as follows: by definition of $\uBdim^{\epsilon}$, there exists $U \subset \reals^n$, such that $\uBdim(U) \leq k$. Then if $x^n$ is the \emph{unique} solution to the linear equation $\bfA x^n = y^k$ in $U$, the decoder outputs $g(y^k) = x^n$; otherwise $g(y^k) = 0$.
	\label{rmk:dec}
\end{remark}

Generalizing \cite[Theorem 9]{renyi.ITtrans}, a non-asymptotic converse for Lipschitz decoding is the following:
\begin{theorem}
	For any random vector $X^n$, if there exists a Borel measurable $f: \reals^n \to \reals^k$ and a Lipschitz continuous $g: \reals^k \to \reals^n$ such that $\prob{g(f(X^n)) \neq X^n} \leq \epsilon$, then
\begin{equation}
k \geq \uBdim^{\epsilon}(P_{X^n}) \geq \od(X^n)  - \epsilon n.
\label{eq:lipconv.f}
\end{equation}
	\label{thm:lipconv.f}
\end{theorem}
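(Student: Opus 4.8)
The plan is to prove the two inequalities in \prettyref{eq:lipconv.f} separately. The second inequality, $\uBdim^{\epsilon}(P_{X^n}) \geq \od(X^n) - \epsilon n$, should be a general fact relating the $\epsilon$-Minkowski dimension of a measure on $\reals^n$ to its upper information dimension; I would prove it by a covering argument. Fix a set $A \subset \reals^n$ with $P_{X^n}(A) \geq 1-\epsilon$ and $\uBdim(A)$ close to $\uBdim^{\epsilon}(P_{X^n})$. For small $\delta = 1/m$, cover $A$ by $N_A(\delta) \leq \delta^{-(\uBdim(A)+o(1))}$ balls of radius $\delta$. The discretized variable $\floor{m X^n}$ takes at most (const)$\cdot N_A(1/m)$ values on the event $\{X^n \in A\}$, and at most $(2\diam + 1)^n \cdot m^n$-ish values off it, but the entropy contribution of that complementary event is at most $h(\epsilon) + \epsilon \cdot n\log(m \cdot \text{const})$ by the standard splitting $H(\floor{mX^n}) \leq H(\indc{X^n \in A}) + \prob{X^n\in A} H(\floor{mX^n} \mid X^n \in A) + \prob{X^n \notin A} H(\floor{mX^n} \mid X^n \notin A)$, where the conditional entropy given $X^n \notin A$ must be controlled — this requires either a boundedness/moment assumption on $X^n$ or a more careful argument (e.g.\ restricting to a large ball first, whose probability one can make $\geq 1-\epsilon$). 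Dividing by $\log m$ and taking $\limsup$ yields $\od(X^n) \leq \uBdim(A) + \epsilon n$; optimizing over $A$ gives the claim.

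For the first inequality, $k \geq \uBdim^{\epsilon}(P_{X^n})$, the idea is exactly the content of \prettyref{rmk:dec} run in reverse: given a Borel $f:\reals^n\to\reals^k$ and $L$-Lipschitz $g:\reals^k\to\reals^n$ with $\prob{g(f(X^n)) \neq X^n} \leq \epsilon$, set $U \triangleq \{x^n : g(f(x^n)) = x^n\}$, so $P_{X^n}(U) \geq 1-\epsilon$. It then suffices to show $\uBdim(U) \leq k$, because $U$ is an admissible set in the infimum \prettyref{eq:ubdim2}. The key observation is that $U$ is contained in the image $g(\reals^k)$ — indeed $U = g(f(U)) \subseteq g(V)$ where $V = f(U) \subseteq \reals^k$ — and $g$ is $L$-Lipschitz. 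A Lipschitz map does not increase Minkowski dimension and maps a set of dimension (at most) $k$ in $\reals^k$ to a set of dimension at most $k$: covering $V \cap [-r,r]^k$ by $O((r/\delta)^k)$ balls of radius $\delta$, their images under $g$ are sets of diameter $\leq 2L\delta$, hence contained in balls of radius $L\delta$, giving $N_{g(V)\cap(\text{ball})}(L\delta) \leq O((r/\delta)^k)$, so $\uBdim(g(V)) \leq k$. (One must handle the fact that $V$ need not be bounded by intersecting with larger and larger boxes and using that $\uBdim$ of a countable union is the sup; alternatively, first localize $X^n$ to a large ball carrying probability $\geq 1-\epsilon'$.) Therefore $\uBdim(U) \leq \uBdim(g(V)) \leq k$.

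Combining the two parts gives \prettyref{eq:lipconv.f}. The main obstacle I anticipate is the bookkeeping around unboundedness: both halves of the argument are clean for compactly supported $X^n$, but for general $X^n$ one needs the standard trick of choosing a ball $B(0,r)$ with $P_{X^n}(B(0,r)) \geq 1 - \epsilon$ (or splitting $\epsilon$ into two halves, one for the Lipschitz-failure event and one for the tail) and verifying that the resulting sets remain admissible in the definition of $\uBdim^{\epsilon}$ with the same $\epsilon$, or with an $\epsilon$ that can be sent to the nominal one by a limiting argument. The Lipschitz-does-not-increase-Minkowski-dimension lemma and the measure-theoretic inequality $\uBdim^{\epsilon}(\mu) \geq \ofrakd$-style bound are each routine once the domain is bounded; I would isolate them as short lemmas before assembling the proof.
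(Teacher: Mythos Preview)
Your approach to the first inequality $k \geq \uBdim^{\epsilon}(P_{X^n})$ is exactly the paper's: set $C = f(U) \subset \reals^k$ (your $V$), use that Minkowski dimension never exceeds the ambient dimension, that Lipschitz maps do not increase Minkowski dimension, and that $g(C) \supseteq U$ has $P_{X^n}$-measure at least $1-\epsilon$. The boundedness concern you raise is legitimate given that the paper defines $\uBdim$ only for bounded sets, but the paper's proof simply writes $k \geq \uBdim(C) \geq \uBdim(g(C)) \geq \uBdim^{\epsilon}(P_{X^n})$ without further comment; your proposed localization is a reasonable way to make this rigorous.

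For the second inequality the paper takes a shorter route than your direct covering/entropy computation. Rather than bounding $H(\floor{mX^n})$ by splitting on $\{X^n \in A\}$ and worrying about the conditional entropy on the complement, the paper invokes two black boxes from \cite{renyi.ITtrans}: (i) the concavity-type bound for information dimension under mixtures, $\od(X^n) \leq P_{X^n}(E)\,\od(P_{X^n|X^n\in E}) + P_{X^n}(E^c)\,\od(P_{X^n|X^n\notin E})$, and (ii) the fact that the information dimension of any measure is bounded by the Minkowski dimension of its support. Applying (i) with $E$ chosen so that $\uBdim(E)$ is close to $\uBdim^{\epsilon}(P_{X^n})$, and bounding the second term by $n$, immediately gives $\od(X^n) \leq \uBdim^{\epsilon}(P_{X^n}) + \epsilon n$. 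This sidesteps precisely the difficulty you flag: you never have to control $H(\floor{mX^n}\mid X^n\notin A)$ directly, because the bound $\od(P_{X^n|X^n\notin E}) \leq n$ holds automatically whenever the information dimension is finite. Your argument would also go through, but you would end up re-deriving these two lemmas in the process.
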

\begin{proof}
	\prettyref{sec:pf.a}.
\end{proof}
\begin{remark}
	An immediate consequence of \prettyref{eq:lipconv.f} is that for general input processes, we have
\begin{equation}
\rlip(X, \epsilon) \geq \limsup_{n \to \infty} \frac{\od(X^n)}{n}  - \epsilon.
\label{eq:lipconv.weak}
\end{equation}
which, for \iid inputs, becomes
\begin{equation}
\rlip(X, \epsilon) \geq \od(X)  - \epsilon.
\label{eq:lipconv.weak.m}
\end{equation}
In fact, combining the left inequality in \prettyref{eq:lipconv.f} and the following concentration-of-measure result \cite[Theorem 14]{renyi.ITtrans}: for any $0 < \epsilon < 1$,
\begin{equation}
	\liminf_{n\to\infty} \frac{\uBdim^{\epsilon}(P_{X^n})}{n} \geq \od(X),
	\label{eq:mink.conv}
\end{equation}
\prettyref{eq:lipconv.weak.m} can be superseded by the following \emph{strong converse}: 
\begin{equation}
\rlip(X, \epsilon) \geq \od(X) .
\label{eq:lipconv}
\end{equation}
General achievability results for $\rlip(X, \epsilon)$ rely on rectifiability results from geometric measure theory \cite{federer}. See \cite[Section VII]{renyi.ITtrans}.
	\label{rmk:lip}
\end{remark}

For discrete-continuous mixtures, we show that linear encoders and Lipschitz decoders can be realized \emph{simultaneously} with \emph{bounded} Lipschitz constants.
\begin{theorem}[Linear encoding: discrete-continuous mixture]
Let $P_X$ be a discrete-continuous mixed distribution of the form \prettyref{eq:dcmix}, with the weight of the continuous part equal to $\gamma$.	Then
\begin{equation}
	\rlinear(X, \epsilon) = {d}(X) = \gamma
	\label{eq:rlinear.dc}
	\end{equation}
for all $0 < \epsilon <1$. Moreover, if the discrete part $P_{\rm d}$ has finite entropy, then for any rate $R > d(X)$, the decompressor can be chosen to be Lipschitz continuous with respect to the $\ell_2$-norm with a Lipschitz-constant independent of $n$:
\begin{equation}
	L = \frac{\sqrt{R}}{R-\gamma} \pth{\frac{R}{\gamma}}^{\frac{\gamma}{R-\gamma}} \exp\pth{ \frac{H(P_{\rm d}) (1 - \gamma)+ h(\gamma)}{R-\gamma} + \frac{1}{2}},
	\label{eq:lipconst}
\end{equation}
Consequently, 
\begin{equation}
	\rll(X, \epsilon) = \rlinear(X, \epsilon) = {d}(X) = \gamma.
	\label{eq:rll.dc}
\end{equation}
	\label{thm:linear.mix}
\end{theorem}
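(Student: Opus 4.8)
The plan is to prove the matching bounds $\gamma\le\rlip(X,\epsilon)\le\rll(X,\epsilon)\le\gamma$ together with $\rlinear(X,\epsilon)=\gamma$, and in the process to exhibit an achievability scheme with a \emph{linear} encoder and a \emph{Lipschitz} decoder whose constant is exactly \prettyref{eq:lipconst}; the identities \prettyref{eq:rlinear.dc} and \prettyref{eq:rll.dc} then drop out of \prettyref{thm:ordering}. For the converse, note first that a linear map is continuous, so $\rll\ge\rlip$, and \prettyref{eq:lipconv} in \prettyref{rmk:lip} combined with \prettyref{thm:dmix} gives $\rlip(X,\epsilon)\ge\od(X)=\gamma$, hence $\rll(X,\epsilon)\ge\gamma$. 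For the linear-encoding converse $\rlinear(X,\epsilon)\ge\gamma$ I would argue directly: if $\bfA\in\reals^{k\times n}$ with $k<n$ and $S_0$ denotes the random set of coordinates of $X^n$ that are drawn from $P_{\rm c}$, then on $\{|S_0|>k\}$ the restriction of $\bfA$ to the coordinates in $S_0$ has a nontrivial kernel, and since the conditional law of those coordinates is absolutely continuous, conditioned on everything else no measurable decoder can recover $X^n$ exactly; thus $\prob{g(\bfA X^n)=X^n}\le\prob{|S_0|\le k}$, which tends to $0$ when $k=\floor{Rn}$ and $R<\gamma$, forcing $R\ge\gamma$ for a vanishing error probability. (Alternatively one may invoke the linear-encoding converse of \cite{renyi.ITtrans}.)

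\textbf{Achievability with a continuous decoder.} This establishes $\rlinear(X,\epsilon)\le\gamma$ with no entropy hypothesis. For small $\delta>0$ fix a finite $F\subset\supp P_{\rm d}$ with $P_{\rm d}(F)\ge 1-\delta$, and let $T_n\subset\reals^n$ consist of the vectors having at most $(\gamma+\delta)n$ ``continuous-type'' coordinates, the remaining coordinates lying in $F$, and Euclidean norm at most $M_n$ for a suitably chosen $M_n<\infty$. By the weak law of large numbers $\prob{X^n\in T_n}\ge 1-\epsilon$ for $n$ large, and $T_n$ is a finite union of bounded pieces of affine subspaces of dimension $\le(\gamma+\delta)n$, so $\uBdim(T_n)\le(\gamma+\delta)n$ and hence $\uBdim^{\epsilon}(P_{X^n})\le(\gamma+\delta)n<\floor{Rn}$ as soon as $R>\gamma$ and $\delta<R-\gamma$. \prettyref{thm:dimB.finite} then produces, for Lebesgue-a.e.\ $\bfA\in\reals^{\floor{Rn}\times n}$, a H\"older continuous decoder (of the form in \prettyref{rmk:dec}) with $\prob{g(\bfA X^n)\ne X^n}\le\epsilon$, so $\rlinear(X,\epsilon)\le R$ for every $R>\gamma$, i.e.\ $\rlinear(X,\epsilon)=\gamma$.

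\textbf{Achievability with a dimension-free Lipschitz constant.} Now assume $H(P_{\rm d})<\infty$ and fix $R>\gamma$. Take $\bfA$ to be an $m\times n$, $m=\floor{Rn}$, Gaussian matrix, normalized so that it roughly preserves $\ell_2$ norm, and enlarge the definition of $T_n$ by also requiring the discrete coordinates to form a $P_{\rm d}$-typical string (on each support there are at most $\exp\pth{n[(1-\gamma)H(P_{\rm d})+o(1)]}$ of these, by the asymptotic equipartition property --- the sole place where $H(P_{\rm d})<\infty$ is used) and by trimming it further as explained below; one still has $\prob{X^n\in T_n}\ge 1-\epsilon$, and $T_n$ is a union of $N\le\exp\pth{n[h(\gamma)+(1-\gamma)H(P_{\rm d})+o(1)]}$ bounded pieces, each a translate of a coordinate subspace of dimension $\le(\gamma+\delta)n$. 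The key estimate is a \emph{uniform restricted-invertibility} bound: for a.e.\ $\bfA$ there is a constant $c>0$, independent of $n$, with $\norm{\bfA(x-y)}\ge c\norm{x-y}$ for every difference $x-y$ that actually occurs for $x,y\in T_n$. This follows by combining the concentration of the least singular value of an $m\times(\gamma+\delta)n$ Gaussian block (which exceeds $1-\sqrt{(\gamma+\delta)/R}$ up to a Gaussian-tail fluctuation) with a union bound over the $N$ pieces; tracking the two competing exponents --- $h(\gamma)+(1-\gamma)H(P_{\rm d})$ from the number of pieces against $R-\gamma$ worth of excess dimension --- and then optimizing over $\delta$ is exactly what distills the constant $L$ of \prettyref{eq:lipconst}. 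Defining $g$ on $\bfA(T_n)$ as the inverse of $\bfA$ (well defined off a $P_{X^n}$-null collision set) gives an $L$-Lipschitz map, which extends to all of $\reals^m$ with the \emph{same} constant $L$ by Kirszbraun's extension theorem; this last step is where the inner-product structure of $\ell_2$ is indispensable, as stressed in the footnote to \prettyref{def:lipschitz}.

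\textbf{Where the difficulty lies.} The crux is obtaining the uniform restricted-invertibility bound with a dimension-free $c$ when $R$ exceeds $\gamma$ by only a small margin: the secant set $T_n-T_n$ is ``morally'' $2\gamma n$-dimensional and genuinely meets a neighborhood of $\Ker\bfA$, so a naive restricted-isometry argument would require $R\gtrsim 2\gamma$, or even $R\gtrsim\gamma\log(1/\gamma)$, rather than $R>\gamma$. The remedy --- the technical heart of the proof --- is to trim $T_n$ (via the norm constraint and by discarding the small-measure set of signals whose measurement is too close to that of a typical signal carrying an \emph{incompatible} discrete skeleton) so that the secants that actually arise stay quantitatively away from $\Ker\bfA$, while still keeping $\prob{X^n\in T_n}\ge 1-\epsilon$; one then verifies that the decoder above recovers $X^n$ for such signals and has the claimed Lipschitz constant. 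Granting this, the soft combination in the first paragraph gives $\gamma\le\rlip\le\rll\le\gamma$ and $\rlinear=\gamma$, hence $\rll(X,\epsilon)=\rlinear(X,\epsilon)=d(X)=\gamma$, which completes the proof of \prettyref{thm:linear.mix}.
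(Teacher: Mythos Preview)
Your overall architecture matches the paper's: define a high-probability set $T_n$ that is a union of $N_n$ affine pieces of dimension at most $(\gamma+\delta)n$, use a Gaussian sensing matrix, control least singular values via an Edelman-type tail bound, invert on the resulting good set, and extend by Kirszbraun. The converses and the continuous-decoder achievability are fine.

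The gap is in how you handle cross-piece secants. Your third paragraph asserts uniform invertibility on all of $T_n-T_n$ from a union bound over the $N$ pieces together with the least singular value of an $m\times(\gamma+\delta)n$ block; as you yourself note in the next paragraph, this cannot work directly because a generic secant has $\approx 2\gamma n$ free coordinates. But the ``trimming'' remedy you sketch is both vague and unnecessary. The paper's resolution (\prettyref{lmm:affine} and \prettyref{lmm:lip1}) is that one never needs to control $T_n-T_n$ globally: for each \emph{fixed} $x$, the translate $T_n-x$ is again a union of $N_n$ affine subspaces of dimension $\le m=(\gamma+\delta)n$, and for any $m$-dimensional \emph{affine} subspace $U$ one has $\inf_{z\in U\setminus\{0\}}\norm{\bsA z}/\norm{z}=\sigma_{\min}(\bsA|_{\Span(U)})$ with $\dim\Span(U)\le m+1$. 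By rotational invariance of the Gaussian ensemble this is distributed as the least singular value of a $k\times(m{+}1)$ Gaussian matrix, so the relevant aspect ratio in Edelman's bound is $\gamma/R$, not $2\gamma/R$. A union bound over the $N_n$ pieces of $T_n-x$ (not $N_n^2$ pairs) followed by integration over the law of $X^n$ (Fubini, using independence of $X^n$ and $\bsA$) shows that the \emph{matrix-dependent} set
\[
U_{\bfK}=\Big\{x\in T_n:\ \inf_{y\in T_n\setminus\{x\}}\frac{\norm{\bfK(y-x)}}{\norm{y-x}}\ge t\Big\}
\]
satisfies $\prob{X^n\in U_{\bfK}}\ge 1-\epsilon-N_n\,F_{k,m}(R^{-1/2}t)$. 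No a-priori trimming of $T_n$, no norm cap, and no ``incompatible discrete skeleton'' argument is required; the $t^{-1}$-Lipschitz inverse lives on $U_{\bfK}$ and is then extended by Kirszbraun. Choosing $t$ so that the exponent in \prettyref{lmm:edelman} (which diverges as $t\downarrow 0$) beats $\limsup_n\frac{1}{n}\log N_n=(1-\gamma)H(P_{\rm d})+h(\gamma)$, and then sending $\delta,\delta'\to 0$, is exactly what yields the constant in \prettyref{eq:lipconst}.
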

\begin{proof}
 \prettyref{sec:pf.a}.
\end{proof}

Combining \prettyref{thm:linear.mix} and \cite[Theorem 10]{renyi.ITtrans} yields the following tight result: for any \iid input with a common distribution of the discrete-continuous mixture form in \prettyref{eq:dcmix}, whose discrete component has finite entropy, we have
\begin{equation}
	\rlinear(X, \epsilon) = \rlip(X, \epsilon) = \rll(X, \epsilon) = \gamma
	\label{eq:tight}
	\end{equation}
for all $0 < \epsilon <1$. In the special case of sparse signals ($P_{\rm d} = \delta_0$) with $s = \gamma n$ non-zeros, this implies that roughly $s$ linear measurements are sufficient to recover the unknown vector with high probability. This agrees with the well-known result that $s+1$ measurements are both necessary and sufficient to reconstruct an $s$-sparse vector probabilistically (see, \eg, \cite{FB96}).

\begin{remark}
In the achievability proof of \prettyref{thm:linear.mix}, our construction of a sequence of Lipschitz decoders with bounded Lipschitz constants independent of the dimension $n$ only works for recovery performance measured in the $\ell_2$ norm. 
	The reason is two-fold: First, the Lipschitz constant of a linear mapping with respect to the $\ell_2$ norm is given by its maximal singular value, whose behavior for random measurement matrices is well studied. Second, Kirszbraun's theorem states that any Lipschitz mapping from a subset of a Hilbert space to a Hilbert space  can be extended to the whole space with the same Lipschitz constant \cite[Theorem 1.31, p. 21]{NLFA.Schwartz}. This result fails for general Banach spaces, in particular, for $\reals^n$ equipped with any $\ell_p$-norm ($p \neq 2$) \cite[p. 20]{NLFA.Schwartz}. Of course, by the equivalence of norms on finite-dimensional spaces, it is always possible to extend to a Lipschitz function with a larger Lipschitz constant; however, such soft analysis does not control the size of the Lipschitz constant, which may possibly blow up as the dimension increases.
	Nevertheless, \prettyref{eq:lipconv} shows that even if we allow a sequence of decompressors with Lipschitz constants that diverges as $n \to \infty$, the compression rate is still lower bounded by $\od(X)$.
\label{rmk:hilbert}
\end{remark}

\begin{remark}[Behavior of the Lipschitz constant]
The Lipschitz constant of the decoder is a proxy to gauge the decoding robustness. It it interesting to investigate what is the smallest attainable Lipschitz constant as a for a given rate $R > \gamma$. Note that the constant in \prettyref{eq:lipconst} depends exponentially on $\frac{1}{R-\gamma}$, which implies that the decoding becomes increasingly less robust as the rate approaches the fundamental limit.
For sparse signals ($P_{\rm d} = \delta_0$ hence $H(P_{\rm d})=0$), 
\prettyref{eq:lipconst} reduces to
\begin{align}
L 
= & ~ \frac{\sqrt{\eexp R}}{R-\gamma} \pth{ \frac{R}{\gamma^2}}^{\frac{\gamma}{R-\gamma}} \pth{ \frac{1}{1-\gamma}}^{\frac{1-\gamma}{R-\gamma}}.	\label{eq:lipconst.sparse}
\end{align}
It is unclear whether it is possible to achieve a Lipschitz constant that diverges polynomially as $R \to \gamma$.

	\label{rmk:lipconst}
\end{remark}

\begin{remark}

Although too computationally intensive and numerically unstable (in fact discontinuous in general), in the conventional compressed sensing setup, the optimal decoder is an $\ell_0$-minimizer that seeks that sparsest solution compatible with the linear measurements. In our Bayesian setting, such a decoder does not necessarily minimize the probability of selecting the wrong signal.
However, the $\ell_0$-minimization decoder does achieve the asymptotic fundamental limit $\sfR^*(X,\epsilon)$ for any sparse $P_X = (1-\gamma) \delta_0 + \gamma P_{\rm c}$, since it is, in fact, even better than the asymptotically optimum decoder described in Remark 3. 
The optimality of the $\ell_0$-minimization decoder for sparse signals has also been observed in \cite[Section IV-A1]{KWT10} based on replica heuristics.
	\label{rmk:l0}
\end{remark}

\begin{remark}
Converse results for any linear encoder and decoder pair have been proposed before in other compressed sensing setups. For example, the result in \cite[Theorem 3.1]{DIPW10} assumes noiseless measurement with arbitrary sensing matrices and recovery algorithms, dealing with 
best sparse approximation under $\ell_1/\ell_1$-stability guarantee. The following non-asymptotic lower bound on the number of measurements is shown: if there exist a sensing matrix $\bfA \in \reals^{k \times n}$, a decoder $g: \reals^k \to B_0^n(s)$\footnote{$B_0^n(s) = \{x\in \reals^n: \|x\|_0 \leq s\}$ denotes the collection of all $s$-sparse $n$-dimensional vectors.} and a constant $C>0$, such that
\begin{equation}
\|x-g(\bfA x)\|_1 \leq \min_{z \in B_0^n(s)} C \|x-z\|_1
\end{equation}
for any $x \in \reals^n$, then $k \geq \frac{s \log \frac{n}{2s}}{\log(4+2C)}$.
However, this result does not directly apply to our setup because we are dealing with $\ell_2/\ell_2$-stability guarantee with respect to the measurement noise, instead of the sparse approximation error of the input vector.
	\label{rmk:doba}
\end{remark}

\section{Noisy compressed sensing}
\label{sec:noisy}

%
%

%

%


\subsection{Setup}
\label{sec:noisy.cs}

The basic setup of noisy compressed sensing is a joint source-channel coding problem as shown in \prettyref{fig:noisy.cs}, 
\begin{figure}[ht]
	\centering
\begin{tikzpicture}[scale=1.2,transform shape,node distance=2.5cm,auto,>=latex']
    \node [int] (a) {$\substack{\textrm{~~~~Encoder~~~~} \\ f_n:~\mathbb{R}^n \to \mathbb{R}^k}$};
    \node (b) [left of=a,node distance=2cm, coordinate] {};
    \node [dot, pin={[init]above:{$\sigma N^k$}}] (d) [right of=a] {$+$};    
    \node [int] (c) [right of=d] {$\substack{\textrm{~~~~Decoder~~~~} \\ g_n:~\mathbb{R}^k \to \mathbb{R}^n}$};    
    \node [coordinate] (end) [right of=c, node distance=2cm]{};
    \path[->] (b) edge node {$X^n$} (a);
    \path[->] (a) edge node {$Y^k$} (d);
    \path[->] (d) edge node {$\hat Y^k$} (c);
    \draw[->] (c) edge node {$\hat X^n$} (end) ;
\end{tikzpicture}
	\caption{Noisy compressed sensing setup.}
	\label{fig:noisy.cs}
\end{figure}
 where we assume that

\begin{itemize}
	\item The source $X^n$ consists of \iid copies of a real-valued random variable $X$ with unit variance.
	\item The channel is stationary memoryless with \iid additive Gaussian noise $\sigma N^k$ where $N^k \sim \calN(0, \bfI_k)$.
	\item 
	Unit average power constraint on the encoded signal:
	\begin{equation}
	\frac{1}{k} \Expect[\lnorm{f_n(X^n)}{2}^2] \leq 1.
	\label{eq:avg.power}
\end{equation}

	
	\item The reconstruction error is gauged by the per-symbol MSE distortion:
	\begin{equation}
	d(x^n, \hat{x}^n) = \frac{1}{n} \|\hat{x}^n-x^n\|_{2}^2.
	\label{eq:distortion.mse}
\end{equation}
	
\end{itemize}

In this setup, the fundamental question is: For a given noise variance and measurement rate, what is the lowest reconstruction error? For a given encoder $f$, the corresponding optimal decoder $g$ is the \emph{MMSE estimator} of the input $X^n$ given the channel output $\hat Y^k = f(X^n) + \sigma N^k$. Therefore the optimal distortion achieved by encoder $f$ is
\begin{equation}
 \inf_g \expect{\|X^n - g(\hY^k)\|^2} =	\mmse(X^n | f(X^n) + \sigma N^k).
	\label{eq:distortion.f}
\end{equation}

In the case of noiseless compressed sensing, the interesting regime of measurement rates is between zero and one.
 When the measurements are noisy, in principle it makes sense to consider measurement rates greater than one in order to combat the noise. Nevertheless, the optimal phase transition for noise sensitivity is always less than one, because with $k=n$ and an invertible measurement matrix, the linear MMSE estimator achieves bounded noise sensitivity for any noise variance.



\subsection{Distortion-rate tradeoff}
For a fixed noise variance $\sigma^2$, we define three distortion-rate functions that correspond to \emph{optimal} encoding, \emph{optimal linear} encoding and \emph{random linear} encoding, respectively. In the remainder of this section, we fix $k = \floor{R n}$.

%
%
%
%

\subsubsection{Optimal encoder}
\begin{definition}
The minimal distortion achieved by the optimal encoding scheme is given by:
\begin{align}
	\Dstar(X, R,\sigma^2) \triangleq & ~ \limsup_{n \to \infty} \frac{1}{n} \inf_f
	\sth{\mmse(X^n | f(X^n)+ \sigma N^k)\colon \Expect[\lnorm{f(X^n)}{2}^2] \leq k}.
	\label{eq:Dstar}
\end{align}
	\label{def:Dstar}
\end{definition}
For stationary ergodic sources, the asymptotic optimization problem in \prettyref{eq:Dstar} can be solved by applying Shannon's joint source-channel coding separation theorem \cite[Section XI]{shannon49}, which states that the lowest rate, $R$, that achieves distortion $D$ is given by
\begin{equation}
	R = \frac{R_X(D)}{C(\sigma^2)},
	\label{eq:sepa}
\end{equation}
where $R_X(\cdot)$ is the rate-distortion function of $X$ in \prettyref{eq:RD} and $C(\sigma^2) = \frac{1}{2} \log (1+\sigma^{-2})$ is the AWGN channel capacity. By the monotonicity of the rate-distortion function, we have
\begin{equation}
	\Dstar(X, R,\sigma^2) = R_X^{-1}\left(\frac{R}{2} \log(1+\sigma^{-2})\right).
	\label{eq:Dstar.sepa}
\end{equation}
In general, optimal joint source-channel encoders are nonlinear \cite{massey.JSC}. In fact, Shannon's separation theorem states that the composition of an optimal lossy source encoder and an optimal channel encoder is asymptotically optimal when blocklength $n \to \infty$. Such a construction results in an encoder that is finite-valued, hence nonlinear. For fixed $n$ and $k$, linear encoders are in general suboptimal. 

\subsubsection{Optimal linear encoder}
To analyze the fundamental limit of conventional noisy compressed sensing, we restrict the encoder $f$ to be a linear mapping, denoted by a matrix $\bfH \in \reals^{k \times n}$. Since $X^n$ are \iid with zero mean and unit variance, the input power constraint \prettyref{eq:avg.power} simplifies to
	\begin{equation}
\Expect[\lnorm{\bfH X^n}{2}^2] = \Expect[{X^n}\trans \bfH\trans \bfH X^n] =	\Tr(\bfH\trans \bfH) = \Fnorm{\bfH}^2 \leq k,
	\label{eq:power.tr}
\end{equation}
where $\Fnorm{\cdot}$ denotes the Frobenius norm. 
\begin{definition}
Define the optimal distortion achievable by linear encoders as:
\begin{align}
\DLstar(X, R,\sigma^2) 
\triangleq & ~ \limsup_{n \to \infty} \frac{1}{n} \inf_{\bfH} \sth{ \mmse(X^n | \bfH X^n+ \sigma N^k)\colon \Fnorm{\bfH}^2 \leq k}.
	\label{eq:DLstar}
\end{align}
	\label{def:DLstar}
\end{definition}

\subsubsection{Random linear encoder}
We consider the ensemble performance of random linear encoders and relax the power constraint in \prettyref{eq:power.tr} to hold on average:
\begin{equation}
\Expect[\Fnorm{\bsA}^2] \leq k.
	\label{eq:power.travg}
\end{equation}
In particular, we focus on the following ensemble of random sensing matrices, for which \prettyref{eq:power.travg} holds with equality:

\begin{definition}
Let $\bsA_n$ be a $k \times n$ random matrix with \iid entries of zero mean and variance $\frac{1}{n}$. The minimal expected distortion achieved by this ensemble of linear encoders is given by:
\begin{align}
\DL(X, R,\sigma^2)
\triangleq & ~  \limsup_{n \to \infty} \frac{1}{n} \mmse(X^n| (\bsA_n X^n + \sigma N^k, \bsA_n)) \label{eq:DL} \\
= & ~ \limsup_{n \to \infty} \mmse(X_1| (\bsA_n X^n + \sigma N^k, \bsA_n)) \label{eq:DL.1}
\end{align}
where \prettyref{eq:DL.1} follows from symmetry and $\mmse(\cdot | \cdot)$ is defined in \prettyref{eq:m}.\footnote{The MMSE on the right-hand side of \prettyref{eq:DL} and \prettyref{eq:DL.1} can be computed by first fixing the sensing matrix $\bsA_n$ then averaging with respect to its distribution.}
\label{def:DL}
\end{definition}

General formulae for $\DL(X, R,\sigma^2)$ and $\DLstar(X, R,\sigma^2)$ are yet unknown. One example where they can be explicitly computed is given in \prettyref{sec:gaussian.jscc} -- the Gaussian source.

\subsubsection{Properties}
\begin{theorem}
\begin{enumerate}
\item For fixed $\sigma^2$, $\Dstar(X, R,\sigma^2)$ and $\DLstar(X, R,\sigma^2)$ are both decreasing, convex and continuous in $R$ on $(0,\infty)$. 
\label{prop1}

\item For fixed $R$, $\Dstar(X, R,\sigma^2)$ and $\DLstar(X, R,\sigma^2)$ are both decreasing, convex and continuous in $\frac{1}{\sigma^2}$ on $(0,\infty)$.
\label{prop2}


	\item
	\begin{equation}
	\Dstar(X, R,\sigma^2) \leq \DLstar(X, R,\sigma^2) \leq \DL(X, R,\sigma^2) \leq 1.
	\label{eq:rank}
\end{equation}
\label{prop3}
\end{enumerate}

\label{thm:distortion.property}
\end{theorem}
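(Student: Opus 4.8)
The plan is to prove the three items in the listed order, since the convexity in $1/\sigma^2$ (item~\ref{prop2}) will invoke continuity in $R$ from item~\ref{prop1}, and the ordering (item~\ref{prop3}) will invoke continuity in $1/\sigma^2$ from item~\ref{prop2}, with no reverse dependencies. Throughout, $0\le\Dstar,\DLstar,\DL\le1$ because $\mmse(X^n\,|\,\cdot)\le\var(X^n)=n$; since a real-valued convex function on an open interval is automatically continuous, for items~\ref{prop1}--\ref{prop2} it suffices to prove monotonicity and convexity. Monotonicity in $R$ is immediate: enlarging $R$ only enlarges the feasible set (zero-pad the encoder output — the extra coordinates carry useless pure noise — and zero-padding preserves linearity), so $\Dstar,\DLstar$ are non-increasing in $R$. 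For $\sigma_2<\sigma_1$ and any fixed encoder $f$, one has $f(X^n)+\sigma_1 N^k\eqdistr f(X^n)+\sigma_2 N^k+\sqrt{\sigma_1^2-\sigma_2^2}\,\tilde N^k$, so $X^n\to f(X^n)+\sigma_2 N^k\to f(X^n)+\sigma_1 N^k$ is Markov and $\mmse(X^n\,|\,f(X^n)+\sigma_1 N^k)\ge\mmse(X^n\,|\,f(X^n)+\sigma_2 N^k)$; taking an infimum over $f$ and a $\limsup$ in $n$ gives monotonicity in $1/\sigma^2$.

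For convexity in $R$ I would use a time-sharing/concatenation argument. Given $R=\theta R_1+(1-\theta)R_2$ (WLOG $R_1\ge R_2$), split $X^n$ into blocks of sizes $n_1=\floor{\theta n}$ and $n_2=n-n_1$, apply independently a near-optimal rate-$R_i$ encoder to the $i$-th block, and concatenate them block-diagonally (which keeps the encoder linear when the pieces are). The concatenated output dimension is $\floor{R_1n_1}+\floor{R_2n_2}\le R_1n_1+R_2n_2\le Rn$, hence $\le\floor{Rn}$ since it is an integer, so after zero-padding it is an admissible rate-$R$ encoder; by independence of the two blocks its MMSE is the sum of the two blocks' MMSEs, and since $\DLstar$ (resp.\ $\Dstar$) is a $\limsup$, each block's per-coordinate MMSE is eventually $\le\DLstar(X,R_i,\sigma^2)+\epsilon$ (resp.\ $\Dstar(X,R_i,\sigma^2)+\epsilon$). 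Dividing by $n$, letting $n\to\infty$ (with $n_i/n\to\theta,1-\theta$) and then $\epsilon\downarrow0$ yields convexity in $R$. For $\Dstar$ one may instead invoke \prettyref{eq:Dstar.sepa}: $R_X$ is convex and non-increasing, hence so is $R_X^{-1}$, and composing $R_X^{-1}$ with the linear map $R\mapsto\tfrac R2\log(1+\sigma^{-2})$ preserves convexity.

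The step I expect to be the main obstacle is convexity in $1/\sigma^2$. For $\Dstar$ it is again immediate from \prettyref{eq:Dstar.sepa}: $s\mapsto\tfrac R2\log(1+s)$ is concave and the composition of a convex non-increasing function ($R_X^{-1}$) with a concave function is convex. For $\DLstar$ there is no closed form, and for a fixed $\bfH$ the map $\sigma^{-2}\mapsto\mmse(X^n\,|\,\bfH X^n+\sigma N^k)$ need not be convex, so I would time-share over the channel. Put $s=\sigma^{-2}=\theta s_1+(1-\theta)s_2$ with $s_i=\sigma_i^{-2}$, say $s_1\le s_2$; split $X^n$ as above and split the $k=\floor{Rn}$ channel uses into $k_1=\ceil{\theta k}$ and $k_2=k-k_1$; on the $i$-th block use a near-optimal rate-$(k_i/n_i)$, noise-$\sigma_i$ linear encoder $\bfH_i$ (with $\Fnorm{\bfH_i}^2\le k_i$) rescaled to $\tfrac{\sigma}{\sigma_i}\bfH_i$, which reproduces on the actual $\sigma$-channel the $\sigma_i$-performance since $\mmse(X^{n_i}\,|\,\tfrac{\sigma}{\sigma_i}\bfH_i X^{n_i}+\sigma N^{k_i})=\mmse(X^{n_i}\,|\,\bfH_i X^{n_i}+\sigma_i N^{k_i})$. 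The concatenated encoder uses all $k$ channel uses and has power $\le\tfrac{s_1k_1+s_2k_2}{s}=\tfrac{s_2k-(s_2-s_1)\ceil{\theta k}}{s}\le\tfrac{s_2k-(s_2-s_1)\theta k}{s}=k$, so it is admissible, and by independence its MMSE splits. Since $k_i/n_i\to R$, I would fix the near-optimality by monotonicity in $R$ (take $\bfH_i$ near-optimal for a fixed rate $R-\delta\le k_i/n_i$ and zero-pad), take $\limsup_n$, and then let $\epsilon\downarrow0$, $\delta\downarrow0$, invoking continuity in $R$ from item~\ref{prop1}, to obtain $\DLstar(X,R,\sigma^2)\le\theta\DLstar(X,R,\sigma_1^2)+(1-\theta)\DLstar(X,R,\sigma_2^2)$. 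The delicate points are the integer rounding of $k_1,k_2,n_1,n_2$ (absorbed by this monotonicity-plus-continuity-in-$R$ device) and keeping the power accounting tight.

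Finally, for the ordering: $\Dstar\le\DLstar$ because, for zero-mean unit-variance \iid $X^n$, a linear $\bfH$ with $\Fnorm{\bfH}^2\le k$ has $\Expect\|\bfH X^n\|^2=\Fnorm{\bfH}^2\le k$, so linear encoders are a subclass of the power-constrained encoders in \prettyref{eq:Dstar}; and $\DL\le1$ was already noted. For $\DLstar\le\DL$, note $\mmse(X^n\,|\,(\bsA_n X^n+\sigma N^k,\bsA_n))=\Expect_{\bsA_n}\mmse(X^n\,|\,\bsA_n X^n+\sigma N^k)$; by the weak law of large numbers $\Fnorm{\bsA_n}^2/k\to1$ in probability, so for any $\delta>0$, $p_n:=\prob{\Fnorm{\bsA_n}^2>(1+\delta)k}\to0$, and on the complement $\mmse(X^n\,|\,\bsA_n X^n+\sigma N^k)\ge\inf_{\Fnorm{\bfH}^2\le(1+\delta)k}\mmse(X^n\,|\,\bfH X^n+\sigma N^k)$ while on the bad event $\tfrac1n\mmse\le1$; hence $\tfrac1n\Expect_{\bsA_n}\mmse\ge(1-p_n)\tfrac1n\inf_{\Fnorm{\bfH}^2\le(1+\delta)k}\mmse(X^n\,|\,\bfH X^n+\sigma N^k)$. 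Using $\inf_{\Fnorm{\bfH}^2\le(1+\delta)k}\mmse(X^n\,|\,\bfH X^n+\sigma N^k)=\inf_{\Fnorm{\bfH}^2\le k}\mmse(X^n\,|\,\bfH X^n+\tfrac{\sigma}{\sqrt{1+\delta}}N^k)$ and taking $\limsup_n$ gives $\DL\ge\DLstar(X,R,\sigma^2/(1+\delta))$, and letting $\delta\downarrow0$ with continuity in $1/\sigma^2$ from item~\ref{prop2} completes the chain. Besides the convexity in $1/\sigma^2$, this last step is the only other place requiring care, since the random matrix meets the power budget only in expectation and the $\Theta(1)$-probability overshoot must be absorbed by the $(1+\delta)$ slack.
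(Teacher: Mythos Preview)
Your proof is correct and follows essentially the same approach as the paper: monotonicity via zero-padding and Gaussian infinite divisibility, convexity via time-sharing, and the middle inequality of \prettyref{eq:rank} via the weak law of large numbers on $\Fnorm{\bsA_n}^2$ together with continuity in $\sigma^2$ (the paper's \prettyref{app:middle}). The one noteworthy difference is in item~\ref{prop2}: rather than your two-parameter block split (in $n$ and in $k$) with rescaled encoders and a subsequent appeal to continuity in $R$, the paper rescales the encoder once to absorb $\sigma^{-2}$ into the power budget via the equivalent form \prettyref{eq:Dstar.alt}, after which convexity in $\sigma^{-2}$ becomes literally the same time-sharing argument as convexity in $R$. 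Your route is correct but carries more bookkeeping; the paper's shortcut eliminates the dependency of item~\ref{prop2} on item~\ref{prop1}.
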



\begin{proof}
\begin{enumerate}
\item 
Fix $\sigma^2$. Monotonicity with respect to the measurement rate $R$ is straightforward from the definition of $\Dstar$ and $\DLstar$. Convexity follows from time-sharing between two encoding schemes. Finally, convexity on the real line implies continuity.

\item Fix $R$. For any $n$ and any encoder $f\colon \reals^n \to \reals^k$, $\sigma^2 \mapsto \mmse(X^n | f(X^n) + \sigma N^k)$ is increasing. This is a consequence of the infinite divisibility of the Gaussian distribution as well as the data processing lemma of MMSE \cite{zamir.fisher}. Consequently, $\sigma^2 \mapsto \Dstar(X, R,\sigma^2)$ is also increasing. Since $\Dstar$ can be equivalently defined as
\begin{equation}
	\Dstar(X, R,\sigma^2) = \limsup_{n \to \infty} \frac{1}{n} \inf_f \sth{\mmse(X^n | f(X^n)+ N^k): \Expect[\lnorm{f(X^n)}{2}^2] \leq \frac{k}{\sigma^2} },
	\label{eq:Dstar.alt}
\end{equation}
convexity in $\frac{1}{\sigma^2}$ follows from time-sharing. The results on $\DLstar$ follows analogously.

\item
	The leftmost inequality in \prettyref{eq:rank} follows directly from the definition, while the rightmost inequality follows because we can always discard the measurements and use the mean as an estimate.
	Although the best sensing matrix will beat the average behavior of any ensemble, the middle inequality in \prettyref{eq:rank} is not quite trivial because the power constraint in \prettyref{eq:avg.power} is not imposed on each matrix in the ensemble. The proof of this inequality can be found in \prettyref{app:middle}.
 \qedhere
	 \end{enumerate}
\end{proof}

\begin{remark}
	Alternatively, the convexity properties of $\Dstar$ can be derived from \prettyref{eq:Dstar.sepa}. Since $R_X(\cdot)$ is decreasing and concave, $R_X^{-1}(\cdot)$ is decreasing and convex, which, composed with the concave mapping $\sigma^{-2} \mapsto \frac{R}{2} \log (1+\sigma^{-2})$, gives a convex function $\sigma^{-2} \mapsto \Dstar(X, R,\sigma^2)$ \cite[p. 84]{boyd}. The convexity of $R \mapsto \Dstar(X, R,\sigma^2)$ can be similarly proved.
\end{remark}

\begin{remark}
Note that the time-sharing proofs of \prettyref{thm:distortion.property}.\ref{prop1} and \ref{thm:distortion.property}.\ref{prop2} do not work for $\DL$, because time-sharing between two random linear encoders results in a block-diagonal matrix with diagonal submatrices each filled with \iid entries. This ensemble is outside the scope of random matrices with \iid entries considered in \prettyref{def:DL}. Therefore, proving the convexity of $R \mapsto \DL(X, R,\sigma^2)$ amounts to showing that replacing all zeroes in the block-diagonal matrix with independent entries of the same distribution always helps with the estimation. This is certainly not true for individual matrices.
\end{remark}

\subsection{Phase transition of noise sensitivity}
One of the main objectives of noisy compressed sensing is to achieve robust reconstruction, obtaining a reconstruction error that is proportional to the noise variance. To quantify robustness, we analyze \emph{noise sensitivity}, namely the ratio between the mean-square error and the noise variance, at a given $R$ and $\sigma^2$. As a succinct characterization of robustness, we focus particular attention on the worst-case noise sensitivity:
\begin{definition}
The worst-case noise sensitivity of optimal encoding is defined as
\begin{equation}
	\zeta^*(X,R) = 	\sup_{\sigma^2 > 0} \frac{\Dstar(X, R,\sigma^2)}{\sigma^2}.
	\label{eq:zstar}
\end{equation}
For linear encoding, $\zLstar(X,R)$ and $\zL(X,R)$ are analogously defined with $\Dstar$ in \prettyref{eq:zstar} replaced by $\DLstar$ and $\DL$, respectively.
	\label{def:wcns}
\end{definition}
\begin{remark}
	In the analysis of LASSO and the AMP algorithms \cite{maleki.noise.sens}, 
	the noise sensitivity is defined in a minimax fashion where a further supremum is taken over all input distributions that have an atom at zero of mass at least $1-\epsilon$.  
	In contrast, the sensitivity in \prettyref{def:wcns} is a Bayesian quantity where we fix the input distribution. Similar notion of sensitivity has been defined in \cite[Equation (49)]{RG12}. 
\end{remark}

The phase transition threshold of the noise sensitivity is defined as the minimal measurement rate $R$ such that the noise sensitivity is bounded for all noise variance \cite{maleki.noise.sens,noisyCS.poster}:




\begin{definition}
Define
	\begin{equation}
\Rstar(X) \triangleq \inf\sth{R > 0\colon \zeta^*(X,R) < \infty}.
	\label{eq:Rstar}
\end{equation}
For linear encoding, $\RLstar(X)$ and $\RL(X)$ are analogously defined with $\zstar$ in \prettyref{eq:Rstar} replaced by $\zLstar$ and $\zL$.
	\label{def:pt.noisy}
\end{definition}

By \prettyref{eq:rank}, the phase transition thresholds in \prettyref{def:pt.noisy} are ordered naturally as
\begin{equation}
	0 \leq \Rstar(X) \leq \Rstar_{\rm L}(X) \leq \RL(X) \leq 1,
	\label{eq:rankR}
\end{equation}
where the rightmost inequality is shown below (after \prettyref{thm:worstG}).

\begin{remark}
	In view of the convexity properties in \prettyref{thm:distortion.property}.\ref{prop2}, the three worst-case sensitivities in \prettyref{def:wcns} are all (extended real-valued) convex functions of $R$.
	\label{rmk:wcns.cvx}
\end{remark}

\begin{remark}
Alternatively, we can consider the \emph{asymptotic noise sensitivity} by replacing the supremum in \prettyref{eq:zstar}
with the limit as $\sigma^2 \to 0$, denoted by $\xstar, \xLstar$ and $\xL$ respectively. Asymptotic noise sensitivity characterizes the convergence rate of the reconstruction error as the noise variance vanishes. Since $\Dstar(X, R,\sigma^2)$ is always bounded above by $\var X = 1$, we have
\begin{equation}
\zstar(X, R) < \infty \Leftrightarrow \xstar(X, R) < \infty.
	\label{eq:suplim}
\end{equation}
Therefore $\Rstar(X)$ can be equivalently defined as the infimum of all rates $R > 0$, such that
\begin{equation}
	\Dstar(X, R,\sigma^2) = \Bigo{\sigma^2}, \quad \sigma^2 \to 0.
	\label{eq:Rstar.equiv}
\end{equation}
This equivalence also applies to $\DLstar$ and $\DL$. It should be noted that although finite worst-case noise sensitivity is equivalent to finite asymptotic noise sensitivity, the supremum in \prettyref{eq:suplim} need not be achieved as $\sigma^2 \to 0$. An example is given by the Gaussian input analyzed in \prettyref{sec:gaussian.jscc}.


\label{rmk:suplim}
\end{remark}

\subsection{Least-favorable input: Gaussian distribution}
\label{sec:gaussian.jscc}
In this section we compute the distortion-rate tradeoffs for the Gaussian input distribution. Although Gaussian input distribution is not directly relevant for compressed sensing due to its lack of sparsity, it is still interesting to investigate the distortion-rate tradeoff in the Gaussian case for the following reasons:
\begin{enumerate}
	\item As the least-favorable input distribution, Gaussian distribution simultaneously maximizes all three distortion-rate functions subject to the variance constraint and provides upper bounds for non-Gaussian inputs.
	\item Connections are made to classical joint-source-channel-coding problems in information theory about transmitting Gaussian sources over Gaussian channels and (sub)optimality of linear coding (\eg, \cite{Goblick65, Ziv70,GRV03}).
	\item It serves as an concrete illustration of the phenomenon of coincidence of all thresholds defined in Definitions \ref{def:Dstar} -- \ref{def:DL},
	which are fully generalized in \prettyref{sec:results.noisy} to the mixture model.
\end{enumerate}

\begin{theorem}
Let $\XGn \sim \calN(0,1)$. Then for any $R, \sigma^2$ and $X$ of unit variance,
\begin{align}
\Dstar(X, R,\sigma^2) \leq \Dstar(X_{\sf G}, R,\sigma^2) = & ~ \frac{1}{(1+ \sigma^{-2})^R}. \label{eq:Dstar.g} \\
\DLstar(X, R,\sigma^2) \leq \DLstar(X_{\sf G}, R,\sigma^2) = & ~ 1 - \frac{R}{\sigma^2 + \max\{1,R\}}
	\label{eq:DLstar.g}
	\\
\DL(X, R,\sigma^2) \leq	\DL(X_{\sf G}, R,\sigma^2) = & ~  \frac{1}{2} \left(1-R-\sigma ^2+\sqrt{(1-R)^2+2 (1+R) \sigma ^2+\sigma ^4}\right)
\label{eq:DL.g}
\end{align}
	\label{thm:worstG}
\end{theorem}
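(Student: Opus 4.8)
Both halves of each claim — the inequality saying the Gaussian is the least-favorable unit-variance input, and the closed form for the Gaussian source — I would establish together, one formulation at a time. The inequalities rest on two different extremal facts: for $\Dstar$ that the Gaussian maximizes the quadratic rate-distortion function among sources of a given variance, and for $\DLstar$ and $\DL$ that the MMSE never exceeds the linear MMSE, which in turn depends on the input only through its covariance.

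\emph{Optimal encoder (\prettyref{eq:Dstar.g}).} Start from the separation-theorem identity \prettyref{eq:Dstar.sepa}, $\Dstar(X,R,\sigma^2)=R_X^{-1}(\tfrac R2\log(1+\sigma^{-2}))$, with $R_X$ the rate-distortion function \prettyref{eq:RD}. The Shannon upper bound $R_X(D)\le\tfrac12\log\tfrac1D$ for $0<D\le\var X=1$ (with equality for the Gaussian), together with the monotonicity of $R_X$, gives $R_X^{-1}(r)\le \eexp^{-2r}$; evaluating at $r=\tfrac R2\log(1+\sigma^{-2})$ yields $\Dstar(X,R,\sigma^2)\le(1+\sigma^{-2})^{-R}$, with equality for $\XGn$ since $R_{\XGn}^{-1}(r)=\eexp^{-2r}$. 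This is the only place rate-distortion theory enters.

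\emph{Linear encoders (\prettyref{eq:DLstar.g} and \prettyref{eq:DL.g}).} For any fixed $\bfH\in\reals^{k\times n}$ and any unit-variance (and, without loss of generality, zero-mean) $X$, one has $\mmse(X^n | \bfH X^n + \sigma N^k)\le\lmmse(X^n | \bfH X^n + \sigma N^k)$, and the right-hand side is a function of the covariance $\bfI_n$ alone, hence equals $\mmse(\XGn^n | \bfH \XGn^n + \sigma N^k)$; applying $\inf_{\bfH:\Fnorm{\bfH}^2\le k}$ (for $\DLstar$) or $\Expect_{\bsA_n}$ (for $\DL$) and then $\limsup_n\tfrac1n$ preserves this, so only the Gaussian case needs to be computed. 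For a Gaussian input the MMSE equals the linear MMSE, and the matrix-inversion lemma puts the per-letter value for a fixed linear map into the form $\tfrac{\sigma^2}{n}\Tr((\sigma^2\bfI_n+\bfH\trans\bfH)^{-1})=\tfrac{\sigma^2}{n}\sum_{i=1}^n(\theta_i+\sigma^2)^{-1}$, with $\theta_i$ the eigenvalues of $\bfH\trans\bfH$. For $\DLstar$ I would minimize this over $(\theta_i)$ subject to $\sum_i\theta_i\le k$ and at most $\min(k,n)$ nonzero entries: monotonicity of $\theta\mapsto(\theta+\sigma^2)^{-1}$ forces $\sum_i\theta_i=k$, its convexity forces the nonzero $\theta_i$ to be equal, and splitting into $k\le n$ and $k>n$ gives the limit $1-\tfrac{R}{\sigma^2+\max\{1,R\}}$, attained by a scaled partial identity matrix. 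For $\DL$, with $\bfH=\bsA_n$ the per-letter quantity is $\sigma^2\cdot\tfrac1n\Tr((\sigma^2\bfI_n+\bsA_n\trans\bsA_n)^{-1})$; since $k/n\to R$, the Marchenko--Pastur theorem makes this converge almost surely to $1-R+R\sigma^2 s(-\sigma^2)$, where $s$ is the Stieltjes transform of the Marchenko--Pastur law with ratio $R$, and, being bounded by $1$, the convergence also holds in expectation. Plugging $z=-\sigma^2$ into the defining quadratic $Rzs^2+(z+R-1)s+1=0$ and taking the root with $s(-\sigma^2)>0$ gives $R\sigma^2 s(-\sigma^2)=\tfrac12(R-1-\sigma^2+\sqrt{(1-R)^2+2(1+R)\sigma^2+\sigma^4})$, and adding $1-R$ reproduces \prettyref{eq:DL.g}.

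\emph{Where the work lies.} The first two formulations are essentially bookkeeping once the right reformulation is in place (the separation identity; the matrix-inversion lemma plus a one-page finite-dimensional convex optimization). The substantive part is the random-matrix argument for $\DL$: invoking the Marchenko--Pastur law under only the finite-variance hypothesis on the entries of $\bsA_n$, so that no spurious moment condition creeps into the statement; upgrading almost-sure convergence of the resolvent trace to convergence in expectation, which is immediate here from boundedness; and correctly evaluating the Stieltjes transform of the Marchenko--Pastur law at the \emph{negative} argument $-\sigma^2$, including the regime $R>1$ where the law carries an atom of mass $1-R^{-1}$ at the origin, and checking that the algebra collapses to the stated radical. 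As sanity checks I would verify the limits $\sigma^2\to0$, $\sigma^2\to\infty$, $R\to0$ of the three formulas, and confirm that \prettyref{eq:DL.g} is $O(\sigma^2)$ as $\sigma^2\to0$ whenever $R>1$ — which, combined with \prettyref{eq:rank}, yields the rightmost inequality $\RL(X)\le1$ in \prettyref{eq:rankR}.
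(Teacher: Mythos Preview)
Your proposal is correct and follows essentially the same route as the paper: Gaussian extremality via the Shannon upper bound on $R_X(D)$ for $\Dstar$ and via the linear-MMSE bound for $\DLstar$ and $\DL$, the convex eigenvalue optimization for $\DLstar$ (splitting on $k\gtrless n$), and the Marchenko--Pastur law for $\DL$. The only cosmetic difference is that you evaluate the MP integral through the Stieltjes transform and its defining quadratic, whereas the paper quotes the resulting closed form directly from a reference.
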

\begin{proof}
Since the Gaussian distribution maximizes the rate-distortion function pointwise under the variance constraint \cite[Theorem 4.3.3]{berger}, the inequality in \prettyref{eq:Dstar.g} follows from \prettyref{eq:Dstar.sepa}. 
For linear encoding, linear estimators are optimal for Gaussian inputs since the channel output and the input are jointly Gaussian, but suboptimal for non-Gaussian inputs. Moreover, the linear MMSE depends only on the input variance. Therefore the inequalities in \prettyref{eq:DLstar.g} and \prettyref{eq:DL.g} follow.
The distortion-rate functions of $\XGn$ are computed in \prettyref{app:gaussian.linear}.
\end{proof}


The Gaussian distortion-rate tradeoffs in \prettyref{eq:Dstar.g} -- \prettyref{eq:DL.g} are plotted in Figs. \ref{fig:plot} and \ref{fig:plot.snr}. 
We see that linear encoders are optimal for lossy encoding of Gaussian sources in Gaussian channels if and only if $R = 1$, \ie,
\begin{equation}
	D^*(\XGn,1,\sigma^2) = \DLstar(\XGn,1,\sigma^2),
\end{equation}
which is a well-known fact \cite{Goblick65,Ziv70}. As a result of \prettyref{eq:DL.g}, the rightmost inequality in \prettyref{eq:rankR} follows. 

\begin{figure}[htp]
	\centering
	\begin{overpic}[
	width=.9\columnwidth]{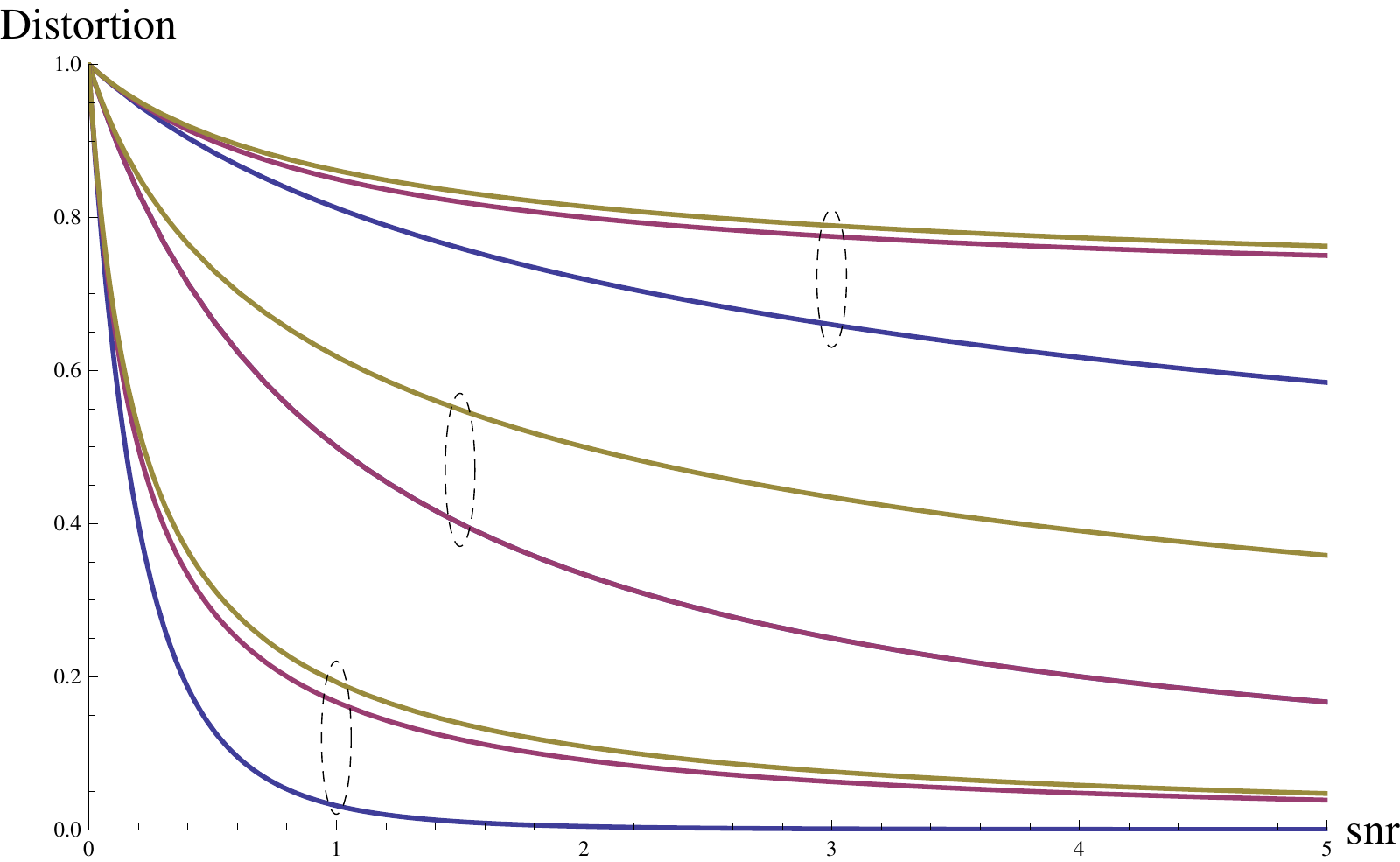}
\put(26,6){$R = 5$}
\put(35,26){$R = 1$}
\put(62,40){$R = 0.3$}
\end{overpic}
\caption{$\Dstar(\XGn, R,\sigma^2), \DLstar(\XGn, R,\sigma^2), \DL(\XGn, R,\sigma^2)$ against $\snr = \sigma^{-2}$.}
	\label{fig:plot}
\end{figure}

\begin{figure}[htp]
	\centering
	\begin{overpic}[
	width=.9\columnwidth]{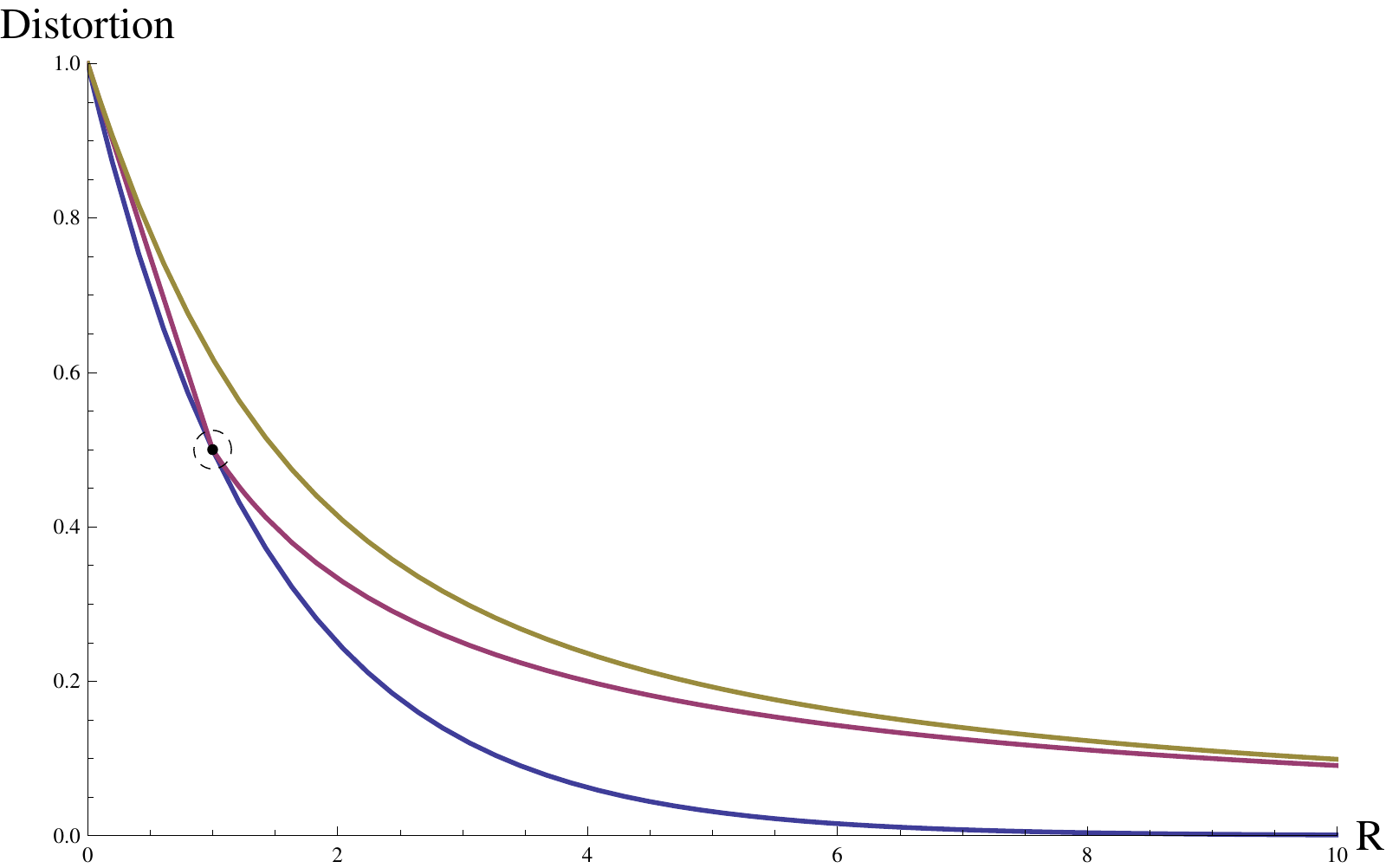}
\put(20,40){\vector(-1,-2){4.8}}
\put(20,40){Linear coding is optimal iff $R = 1$}
\put(55,20){\color{mathyellow}\vector(-1,-2){3.4}}
\put(55,20){\textcolor{mathyellow}{Random linear}}
\put(60,15){\color{mathred}\vector(-1,-2){2.3}}
\put(60,15){\textcolor{mathred}{Optimal linear}}
\put(75,10){\color{mathblue}\vector(-1,-2){3.7}}
\put(75,10){\textcolor{mathblue}{Optimal}}
\end{overpic}
\caption{$\Dstar(\XGn, R,\sigma^2), \DLstar(\XGn, R,\sigma^2), \DL(\XGn, R,\sigma^2)$ against $R$ when $\sigma^2 = 1$.}
	\label{fig:plot.snr}
\end{figure}

%

Next, using straightforward limits, we analyze the high-SNR asymptotics of \prettyref{eq:Dstar.g} -- \prettyref{eq:DL.g}. The smallest among the three, $\Dstar(\XGn, R, \sigma^2)$ vanishes polynomially in $\sigma^2$ according to
\begin{equation}
	\Dstar(\XGn, R,\sigma^2) = \sigma^{2R} + O(\sigma^{2R+2}), \quad \sigma^2 \to 0
	\label{eq:Dstar.g.asymp}
\end{equation}
regardless of how small $R > 0$ is. For linear encoding, we have
\begin{align}
\DLstar(X, R,\sigma^2) & ~ = \begin{cases}
	1 - R + R \sigma^2 + \Bigo{\sigma^4} & 0 \leq R < 1, \\
	\sigma^2 + O(\sigma^4)& R = 1, \\
	\frac{\sigma^2}{R}  + O(\sigma^4) & R > 1.
	\end{cases}
	\label{eq:DLstar.g.asymp} \\
	& \nonumber \\
\DL(\XGn, R,\sigma^2) & ~ = \begin{cases}
	1 - R + \frac{R}{1-R} \sigma^2 + O(\sigma^4) & 0 \leq R < 1, \\
	\sigma - \frac{\sigma^2}{2} + O(\sigma^3) & R = 1, \\
	\frac{\sigma^2}{R-1}  + O(\sigma^4) & R > 1.
	\end{cases}
	\label{eq:DL.g.asymp}
\end{align}
The weak-noise behavior of $\DLstar$ and $\DL$ are compared in different regimes of measurement rates:
\begin{itemize}
\item $0 \leq R < 1$:
both $\DLstar$ and $\DL$	converge to $1-R > 0$. This is an intuitive result, because even in the absence of noise, the orthogonal projection of the input vector onto the nullspace of the sensing matrix cannot be recovered, which contributes a total mean-square error of $(1-R)n$;
	Moreover, $\DL$ has strictly worse second-order asymptotics than $\DLstar$, especially when $R$ is close to 1.
\item $R = 1$: $\DL = \sigma(1+o(1))$ is much worse than $\DLstar = \sigma^2(1+o(1))$, which is achieved by choosing the encoding matrix to be identity. In fact, with nonnegligible probability, the optimal estimator that attains \prettyref{eq:DL.g}	blows up the noise power when inverting the random matrix;
\item $R > 1$:
	both $\DLstar$ and $\DL$ behave according to $\Theta(\sigma^2)$, but the scaling constant of $\DLstar$ is strictly worse, especially when $R$ is close to 1.
\end{itemize}

The foregoing high-SNR analysis shows that the average performance of random sensing matrices with \iid~entries is much worse than that of optimal sensing matrices, except if $R \ll 1$ or $R \gg 1$. Although this conclusion stems from the high-SNR asymptotics, we test it with several numerical results.
\prettyref{fig:plot} ($R = 0.3$ and 5) and \prettyref{fig:plot.snr} ($\sigma^2 = 1$) illustrate that the superiority of optimal sensing matrices carries over to the regime of non-vanishing $\sigma^2$. However, as we will see, randomly selected matrices are as good as the optimal matrices (and in fact, optimal nonlinear encoders) as far as the phase transition threshold of the worst-case noise sensitivity is concerned.


From \prettyref{eq:Dstar.g.asymp} and \prettyref{eq:DL.g.asymp}, we observe that both $\DLstar$ and $\DL$ exhibit a sharp phase transition near the critical rate $R = 1$:
	\begin{align}
\lim_{\sigma^2 \to 0} \DLstar(X, R,\sigma^2)	
= & ~ \lim_{\sigma^2 \to 0} \DL(X, R,\sigma^2)	\\
= & ~ (1-R)^+. 
\end{align}
where $x^+ \triangleq \max\{0,x\}$. Moreover, from \prettyref{eq:Dstar.g} -- \prettyref{eq:DL.g} we obtain the worst-case and asymptotic noise sensitivity functions for the Gaussian input as follows:
\begin{equation}
	\zstar(\XGn,R) = 
\begin{cases}	
\exp\pth{ - R h\left(\frac{1}{R}\right) }
& R \geq 1\\
\infty & R < 1
\end{cases}
,
\end{equation}
\begin{equation}
\xstar(\XGn,R) = 
\begin{cases}	
0 & R > 1\\
1 & R = 1\\
\infty & R < 1
\end{cases}
	\label{eq:zstar.g}
\end{equation}
and
\begin{equation}
	\zLstar(\XGn,R) = \xLstar(\XGn,R) = \begin{cases}	
\frac{1}{R} & R \geq 1\\
\infty & R < 1
\end{cases}
,
	\label{eq:zLstar.g}
\end{equation}
\begin{equation}
	\zL(\XGn,R) = \xL(\XGn,R) = \begin{cases}	
\frac{1}{R-1} & R > 1\\
\infty & R \leq 1
\end{cases}
	\label{eq:zL.g}
\end{equation}
The worst-case noise sensitivity functions are plotted in \prettyref{fig:worstsens.g} against the measurement rate $R$. Note that \prettyref{eq:zstar.g} provides an example for \prettyref{rmk:suplim}: for Gaussian input and $R > 1$, the asymptotic noise sensitivity for optimal coding is zero, while the worst-case noise sensitivity is always strictly positive.

\begin{figure}[ht]
	\centering
	\begin{overpic}[
	width=.9\columnwidth]{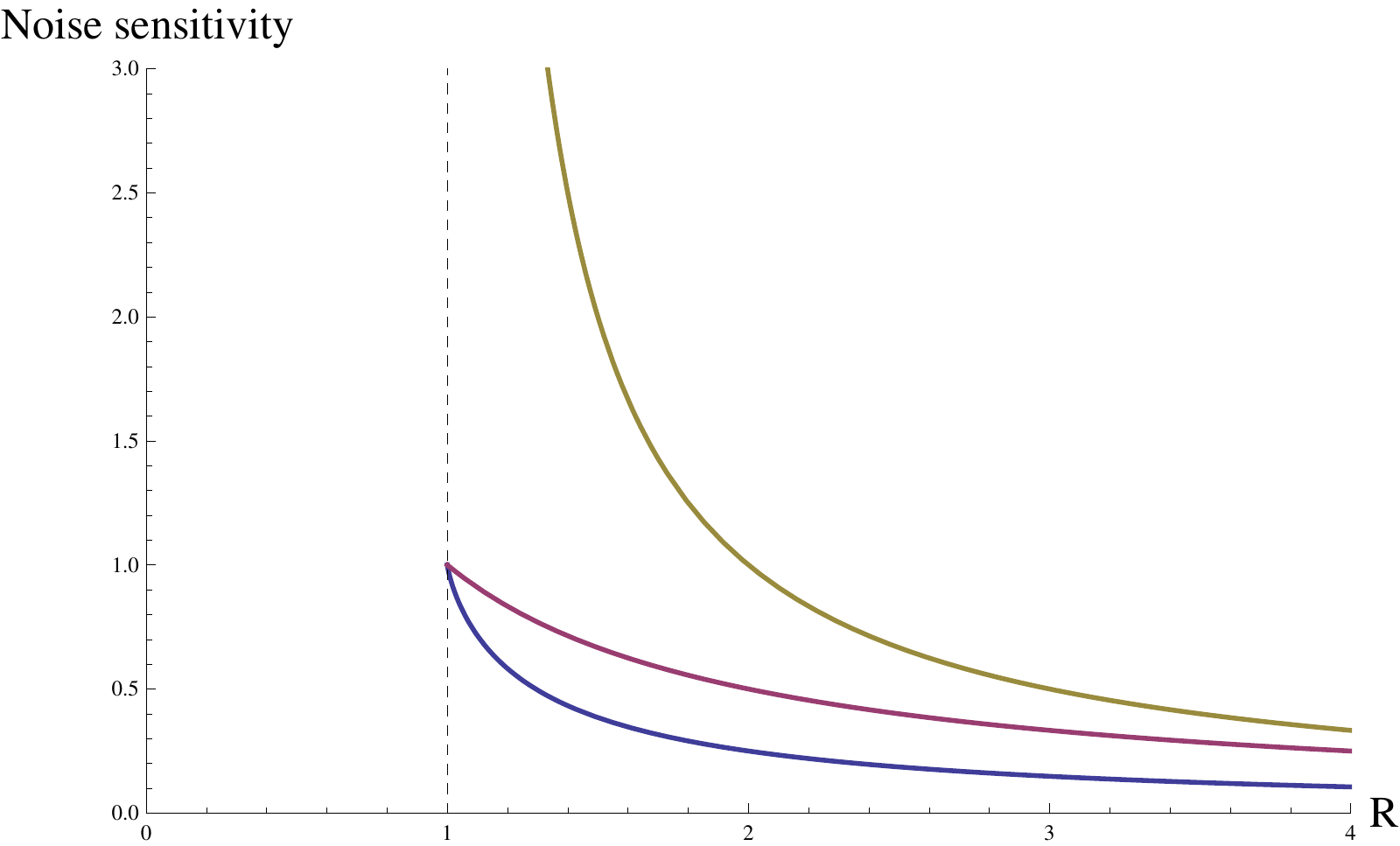}
	\put(41,22){\color{mathred} \vector(-1,-2){3}}
\put(41,22){\color{mathred} $\zLstar = \frac{1}{R}$}
\put(50,34){\color{mathyellow} \vector(-1,-2){3}}
\put(50,34){\color{mathyellow} $\zL = \frac{1}{R-1}$}
\put(82,17){\color{mathblue} \vector(-1,-2){6}}
\put(82,17){\color{mathblue} $\zstar = \frac{(R-1)^{R-1}}{R^R}$}
\put(20,20){{\large $+\infty$}}
\put(15,30){\textbf{unstable}}
\put(70,30){\textbf{stable}}
\end{overpic}
	\caption{Worst-case noise sensitivity $\zstar, \zLstar$ and $\zL$ for the Gaussian input, which all become infinity when $R < 1$ (the unstable regime).}
	\label{fig:worstsens.g}
\end{figure}

In view of \prettyref{eq:zstar.g} -- \prettyref{eq:zL.g}, the phase-transition thresholds in the Gaussian signal case are:
\begin{equation}
	\Rstar(\XGn) = \RLstar(\XGn) = \RL(\XGn) = 1.
	\label{eq:threshold.g}
\end{equation}
The equality of the three phase-transition thresholds turns out to hold well beyond the Gaussian signal model. In the next subsection, we formulate and prove the existence of the phase thresholds for all three distortion-rate functions and discrete-continuous mixtures, which turn out to be equal to the information dimension of the input distribution.

\subsection{Non-Gaussian inputs}
	\label{sec:results.noisy}
	This subsection contains our main results, which show that the phase transition thresholds are equal to the information dimension of the input, under rather general conditions. Therefore, the optimality of random sensing matrices in terms of the worst-case sensitivity observed in \prettyref{sec:gaussian.jscc} carries over well beyond the Gaussian case.
	Proofs are deferred to \prettyref{sec:pf.b}.
	
The phase transition threshold for \emph{optimal encoding} is given by the upper \emph{information dimension} of the input:
\begin{theorem}
For any $X$ that satisfies \prettyref{eq:dfinite}, 
	\begin{equation}
	\Rstar(X) = \od(X)
	\label{eq:RstarX}
\end{equation}
Moreover, if $P_X$ is a discrete-continuous mixture as in \prettyref{eq:dcmix}, then for any $R \geq \gamma$, as $\sigma \to 0$,
\begin{equation}
\Dstar(X,R,\sigma^2) =  \frac{\exp\pth{2 H(P_{\rm d}) \frac{1-\gamma}{\gamma} - 2 \calD(P_{\rm c})}}{(1-\gamma)^{\frac{2(1-\gamma)}{\gamma}}\gamma} \sigma^{\frac{2R}{\gamma}} 
 (1+o(1))
	\label{eq:Dstar.mix}
\end{equation}
where $\calD(\cdot)$ denotes the non-Gaussianness of a probability measure, defined as its relative entropy with respect to a Gaussian distribution with the same mean and variance.
Consequently, the asymptotic noise sensitivity of optimal encoding is 
\begin{equation}
	\xstar(X,R) = \begin{cases}
	\infty & R < \gamma \\	
	\frac{\exp\pth{2 H(P_{\rm d}) \frac{1-\gamma}{\gamma} - 2 \calD(P_{\rm c})}}{(1-\gamma)^{\frac{2(1-\gamma)}{\gamma}}\gamma}
& R = \gamma \\	
	0 & R > \gamma.
	\end{cases}
	\label{eq:xstar.mix}
\end{equation}
\label{thm:Rstar}
\end{theorem}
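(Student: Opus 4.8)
The whole statement reduces, via the single-letter separation identity \prettyref{eq:Dstar.sepa}, $\Dstar(X,R,\sigma^2)=R_X^{-1}\pth{\tfrac R2\log(1+\sigma^{-2})}$, to the behavior of the rate-distortion function $R_X(D)$ as $D\downarrow 0$. My plan has three steps: (i) derive $\Rstar(X)=\od(X)$ from the coarse asymptotics of $R_X$ encoded in \prettyref{eq:dembo}; (ii) derive \prettyref{eq:Dstar.mix} from a refined expansion $R_X(D)=\tfrac\gamma2\log\tfrac1D+c+o(1)$ for discrete-continuous mixtures; (iii) read off \prettyref{eq:xstar.mix} from \prettyref{eq:Dstar.mix}.

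For (i) I would use \prettyref{rmk:suplim}: $\zstar(X,R)<\infty$ iff $\Dstar(X,R,\sigma^2)=\Bigo{\sigma^2}$ as $\sigma\to 0$. By \prettyref{eq:dembo} in its $\limsup$ form, $\od(X)=\limsup_{D\downarrow 0}R_X(D)/(\tfrac12\log\tfrac1D)$. If $R>\od(X)$, then for all small $D$ one has $R_X(D)\le(\od(X)+\epsilon)\tfrac12\log\tfrac1D$ with $\od(X)+\epsilon<R$, so the $D$ solving $R_X(D)=\tfrac R2\log(1+\sigma^{-2})$ obeys $D\le(1+\sigma^{-2})^{-R/(\od(X)+\epsilon)}=o(\sigma^2)$, giving $\xstar(X,R)=0$. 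If $R<\od(X)$, pick $D_k\downarrow 0$ with $R_X(D_k)\ge(\od(X)-\epsilon)\tfrac12\log\tfrac1{D_k}$ and $\od(X)-\epsilon>R$; the matching $\sigma_k\to 0$ then has $\Dstar(X,R,\sigma_k^2)=D_k$ decaying strictly slower than $\sigma_k^2$, so $\zstar(X,R)=\infty$. Hence $\Rstar(X)=\od(X)$; for a discrete-continuous mixture $\od(X)=\gamma$ by \prettyref{thm:dmix}, which already gives the $R<\gamma$ line of \prettyref{eq:xstar.mix}.

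For (ii) I would prove, for $P_X$ as in \prettyref{eq:dcmix},
\[ R_X(D)=\frac\gamma2\log\frac1D+h(\gamma)+(1-\gamma)H(P_{\rm d})+\gamma\, h(P_{\rm c})-\frac\gamma2\log\frac{2\pi e}{\gamma}+o(1),\qquad D\downarrow 0, \]
$h$ being binary or differential entropy as appropriate. For the upper bound, on a length-$n$ block describe the pattern of which $\approx(1-\gamma)n$ coordinates come from $P_{\rm d}$ (cost $\approx n\,h(\gamma)$), reproduce those exactly (cost $\approx n(1-\gamma)H(P_{\rm d})$, zero distortion), and lossily encode the $\approx\gamma n$ continuous coordinates at per-letter distortion $D/\gamma$ at rate $\approx\gamma R_{P_{\rm c}}(D/\gamma)=\gamma[h(P_{\rm c})-\tfrac12\log(2\pi e D/\gamma)]+o(1)$ using the asymptotic tightness of the Shannon lower bound for $P_{\rm c}$. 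For the lower bound, with $B\in\{0,1\}$ the mixture-component label, $B-X-\hat X$ is Markov, so $I(X;\hat X)=I(B;\hat X)+I(X;\hat X\mid B)$; the first term tends to $H(B)=h(\gamma)$ since a near-optimal $\hat X$ recovers $B$ with vanishing error probability (the $P_{\rm c}$-mass in a shrinking neighborhood of the atoms of $P_{\rm d}$ vanishes), and the second is $\ge(1-\gamma)R_{P_{\rm d}}(D_0)+\gamma R_{P_{\rm c}}(D_1)$ over admissible splits $(1-\gamma)D_0+\gamma D_1\le D$, whose optimum for small $D$ has $D_0$ in the flat part of $R_{P_{\rm d}}$ (so $R_{P_{\rm d}}(D_0)=H(P_{\rm d})$) and $D_1=D/\gamma$. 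Inverting, $R_X^{-1}(t)=e^{2c/\gamma}e^{-2t/\gamma}(1+o(1))$; plugging $t=\tfrac R2\log(1+\sigma^{-2})$, so $e^{-2t/\gamma}=(1+\sigma^{-2})^{-R/\gamma}=\sigma^{2R/\gamma}(1+o(1))$, and rewriting $h(P_{\rm c})=\tfrac12\log(2\pi e\,\var(P_{\rm c}))-\calD(P_{\rm c})$, yields \prettyref{eq:Dstar.mix}. Then (iii) is immediate: $\Dstar(X,R,\sigma^2)/\sigma^2$ equals the constant in \prettyref{eq:Dstar.mix} times $\sigma^{2R/\gamma-2}(1+o(1))$, which tends to $0$, to that constant, or to $\infty$ according as $R>\gamma$, $R=\gamma$, or $R<\gamma$, which is \prettyref{eq:xstar.mix}.

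The main obstacle is the sharp converse for the mixed-source rate-distortion function, \ie, pinning down the $O(1)$ term: this needs the asymptotic tightness of the Shannon lower bound for $P_{\rm c}$ (to obtain the $-\tfrac12\log(2\pi e D)$ term rather than merely $\tfrac12\log\tfrac1D+O(1)$) together with the asymptotic recoverability of the component label $B$ from the reconstruction, which produces the $h(\gamma)$ term. The achievability construction and the inversion/substitution steps are comparatively routine.
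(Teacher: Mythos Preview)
Your approach is essentially identical to the paper's: both reduce everything to the separation identity \prettyref{eq:Dstar.sepa}, use the $\limsup$ version of \prettyref{eq:dembo} for part (i), and invoke the refined expansion $R_X(D)=\tfrac{\gamma}{2}\log\tfrac{1}{D}+c+o(1)$ for discrete-continuous mixtures for part (ii), from which (iii) follows by substitution. The only difference is that the paper does not prove the refined expansion but simply cites it as \cite[Corollary 1]{gyorgy}, whereas you sketch its proof; your achievability sketch (encode the pattern, lossless on the discrete part, lossy on the continuous part via asymptotic tightness of the Shannon lower bound) is exactly the heuristic the paper records after \prettyref{eq:RD.mix}, and your converse sketch via the Markov decomposition $I(X;\hX)=I(B;\hX)+I(X;\hX\mid B)$ is the standard route taken in \cite{gyorgy}. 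You correctly flag the delicate step: showing $I(B;\hX)\to h(\gamma)$ for \emph{every} test channel achieving distortion $D$, not just a well-behaved one, is where the work lies (and requires some care when $P_{\rm d}$ has infinitely many atoms).
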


The next result shows that random linear encoders with \iid \emph{Gaussian} coefficients also achieve information dimension for any discrete-continuous mixtures, which, in view of \prettyref{thm:Rstar}, implies that, at least asymptotically, (random) linear encoding suffices for robust reconstruction as long as the input distribution contains no singular component.
\begin{theorem}
Assume that $X$ has a discrete-continuous mixed distribution as in \prettyref{eq:dcmix}, where the discrete component $P^{\rm d}$ has finite entropy. Then
	\begin{equation}
	\Rstar(X) = \Rstar_{\rm L}(X) = \RL(X) = \gamma. 
	\label{eq:noisy.dc}
\end{equation}
Moreover, 
\begin{enumerate}
	\item \prettyref{eq:noisy.dc} holds for any non-Gaussian noise distribution with finite non-Gaussianness.
	\item For any $R>\gamma$, the worst-case noise sensitivity of Gaussian sensing matrices is upper bounded by
	\begin{equation}
	\xL(X,R) \leq
	\frac{R^2}{(R-\gamma)^2} \pth{\frac{R}{\gamma}}^{\frac{2 \gamma}{R-\gamma}} \exp\pth{ \frac{2 H(P_{\rm d}) (1 - \gamma)+ 2 h(\gamma)}{R-\gamma} + 1}.
	\label{eq:xL.mix}
\end{equation}
\end{enumerate}
	\label{thm:noisy.dc}
\end{theorem}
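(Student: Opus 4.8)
The plan is to derive all the assertions from a single achievability construction, namely feeding the noiseless Lipschitz decoder of \prettyref{thm:linear.mix} into the noisy channel. Fix $R>\gamma$. The proof of \prettyref{thm:linear.mix} (\prettyref{sec:pf.a}, via \prettyref{thm:dimB.finite}) yields, with probability tending to one over the \iid\ Gaussian ensemble $\bsA_n$ (entries $\calN(0,1/n)$, so the average-power constraint $\Expect[\|\bsA_n X^n\|^2]=k$ holds), a decoder $g_n\colon\reals^{k}\to\reals^{n}$ with $\Lip(g_n)\le L$ --- the constant in \prettyref{eq:lipconst} --- such that $g_n(\bsA_n X^n)=X^n$ with conditional probability $1-\epsilon_n$, $\epsilon_n\to0$ (the error probability vanishes since $R$ is strictly above $d(X)=\gamma$). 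I would turn this into an estimator for the noisy problem by composing with the $1$-Lipschitz Euclidean projection $\Pi_n$ onto the ball $B(0,2\sqrt n)$: set $\hat X_n=\Pi_n\circ g_n$ on the event that $\bsA_n$ is ``good'' and $\hat X_n\equiv 0$ otherwise. Since the MMSE is the minimum over all estimators, $\DL(X,R,\sigma^2)\le\limsup_n\tfrac1n\Expect\|\hat X_n(\bsA_nX^n+\sigma N^k,\bsA_n)-X^n\|^2$.

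I would bound the right-hand side by splitting on $G_n=\{\bsA_n\text{ good}\}\cap\{g_n(\bsA_nX^n)=X^n\}\cap\{\|X^n\|\le 2\sqrt n\}$. On $G_n$ one has $\Pi_n(g_n(\bsA_nX^n))=\Pi_n(X^n)=X^n$, so the Lipschitz property gives $\|\hat X_n(\hat Y^k)-X^n\|=\|\Pi_ng_n(\bsA_nX^n+\sigma N^k)-\Pi_ng_n(\bsA_nX^n)\|\le L\sigma\|N^k\|$; as $N^k$ is independent of $(X^n,\bsA_n)$ with $\Expect\|N^k\|^2=k$, the $G_n$-part contributes at most $L^2\sigma^2 k$. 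On $G_n^c$, using $\|\hat X_n\|\le 2\sqrt n$, we get $\|\hat X_n(\hat Y^k)-X^n\|^2\le 8n+2\|X^n\|^2$, with expectation $8n\,\prob{G_n^c}+2\,\Expect[\|X^n\|^2\indc{G_n^c}]$; here $\prob{G_n^c}\to 0$ (each of the three events has vanishing probability, the last because $\tfrac1n\|X^n\|^2\to\var X=1$ by the law of large numbers), and since $\tfrac1n\|X^n\|^2\to1$ also in $L^1$ the family $\{\tfrac1n\|X^n\|^2\}$ is uniformly integrable, so $\tfrac1n\Expect[\|X^n\|^2\indc{G_n^c}]\to0$. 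Dividing by $n$, using $k/n\to R$, and taking $\limsup_n$ yields $\DL(X,R,\sigma^2)\le L^2R\,\sigma^2$ for \emph{every} $\sigma^2>0$; dividing by $\sigma^2$ and substituting \prettyref{eq:lipconst} gives precisely \prettyref{eq:xL.mix} (in fact a bound of the same form on $\zL(X,R)$).

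In particular $\zL(X,R)<\infty$ for every $R>\gamma$, so $\RL(X)\le\gamma$; together with $\Rstar(X)=\od(X)=\gamma$ (\prettyref{thm:Rstar} and \prettyref{thm:dmix}, unit variance ensuring \prettyref{eq:dfinite}) and the ordering \prettyref{eq:rankR} this forces $\Rstar(X)=\RLstar(X)=\RL(X)=\gamma$. For a non-Gaussian noise $N$ of unit variance with $\calD(N)<\infty$, the achievability above is unchanged --- only $\Expect\|N^k\|^2=k$ was used --- so $\RL(X)\le\gamma$ still holds. For the matching converse $\Rstar(X)\ge\gamma$ I would rerun the rate--distortion argument behind \prettyref{thm:Rstar}: the data-processing inequality gives $I(X^n;f(X^n)+\sigma N^k)\le k\,[\,C(\sigma^2)+\calD(N)\,]$, since an entropy-power bound on $h(\sigma N)$ shows the capacity of the additive-$N$ channel under unit power exceeds the AWGN capacity $C(\sigma^2)=\tfrac12\log(1+\sigma^{-2})$ by at most $\calD(N)$; because $R_{X^n}(D)=nR_X(D)$ for \iid\ sources, an encoder with $\tfrac1n\mmse(X^n\mid\cdot)\le c\sigma^2$ forces $R_X(c\sigma^2)\le R\,[\,C(\sigma^2)+\calD(N)\,]$, and letting $\sigma\to0$ in \prettyref{eq:dembo} (with $d(X)=\gamma$) the $\sigma$-independent term $\calD(N)$ is negligible next to $\tfrac12\log\tfrac1{\sigma^2}$, yielding $R\ge\gamma$.

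The main obstacle is handling the failure event in the achievability step under only a second-moment assumption on $X$. One cannot split $\Expect[\|\hat X_n-X^n\|^2\indc{G_n^c}]$ by Cauchy--Schwarz as $(\Expect\|\hat X_n-X^n\|^4)^{1/2}\prob{G_n^c}^{1/2}$ because that needs a fourth moment; instead the estimator must be a priori bounded, which is why we project onto $B(0,2\sqrt n)$ --- harmless on the high-probability event $\{\|X^n\|\le 2\sqrt n\}$ --- and combine this with uniform integrability of $\tfrac1n\|X^n\|^2$ to make the failure contribution $o(n)$. A secondary, more routine point is to extract from \prettyref{sec:pf.a} that a typical matrix of the \iid\ Gaussian ensemble (and not merely Lebesgue-almost-every matrix) admits the decoder of \prettyref{thm:linear.mix} with Lipschitz constant at most $L$ and vanishing error probability; this holds because the Gaussian law is absolutely continuous and, for $R>\gamma$, the relevant error probability already tends to zero.
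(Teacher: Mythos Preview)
Your proposal is correct and follows the paper's approach: use the Lipschitz decoder of \prettyref{thm:linear.mix} as a suboptimal estimator in the noisy channel to obtain $\DL(X,R,\sigma^2)\le R L^2\sigma^2$, then combine with \prettyref{thm:Rstar} and the ordering \prettyref{eq:rankR}. The only difference is the truncation device on the failure event---the paper normalizes $g_{\bsA_n}(0)=0$ (so $\|g_{\bsA_n}(y)\|\le L\|y\|$) and invokes uniform integrability of both $\tfrac1n\|X^n\|^2$ and $\tfrac1n\|\bsA_nX^n+\sigma N^k\|^2$, whereas your projection onto $B(0,2\sqrt n)$ needs only the former; both are valid and yield the same bound \prettyref{eq:xL.mix}.
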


\begin{remark}
The achievability proof of $\RL(X)$ is a direct application of \prettyref{thm:linear.mix}, where the Lipschitz decompressor in the noiseless case is used as a suboptimal estimator in the noisy case. The outline of the argument is as follows: suppose that we have obtained a sequence of linear encoders and $L_R$-Lipschitz decoders $\{(\bfA_n ,g_n)\}$ with rate $R$ and error probability $\epsilon_n \to 0$ as $n \to \infty$. Then
	\begin{equation}
	\expect{\big\|g_n(\bfA_n X^n + \sigma N^k) - X^n\big\|^2} \leq L^2(R) \sigma^2 \expect{\big\|N^k\big\|^2} + \epsilon_n  = k L^2(R) \sigma^2 \var N + \epsilon_n,
	\label{eq:lipbased}
\end{equation}
which implies that robust reconstruction is achievable at rate $R$ and the worst-case noise sensitivity is upper bounded by $L^2_R R$.

Notice that the above achievability approach applies to any noise with finite variance, without requiring that the noise be additive, memoryless or that it have a density.
 In contrast, replica-based results rely crucially on the fact that the additive noise is memoryless Gaussian. Of course, in order for the converse (via $\RLstar$) to hold, the non-Gaussian noise needs to have finite non-Gaussianness. The disadvantage of this approach is that currently it lacks an explicit construction because the extendability of Lipschitz functions (Kirszbraun's theorem) is only an existence result which relies on the Hausdorff maximal principle \cite[Theorem 1.31, p. 21]{NLFA.Schwartz}, which is equivalent to the axiom of choice. On Euclidean spaces it is possible to obtain an explicit construction by applying the results in \cite{Brehm81,AT08} to a countable dense subset of the domain. However, such a construction is far from being practical.

	
\end{remark}



\begin{remark}
We emphasize the following ``universality'' aspects of \prettyref{thm:noisy.dc}:
	\begin{itemize}
	\item Gaussian random sensing matrices achieve the optimal transition threshold for any discrete-continuous mixture, as long as it is known at the decoder;
	\item The fundamental limit depends on the input statistics only through the weight on the analog component, regardless of the specific discrete and continuous components. In the conventional sparsity model \prettyref{eq:sparseP} where $P_X$ is the mixture of an absolutely continuous distribution and a mass of $1-\gamma$ at 0, the fundamental limit is $\gamma$;
	\item The suboptimal estimator used in the achievability proof comes from the noiseless Lipschitz decoder, which does not depend on the noise distribution, or even its variance;
	\item The conclusion holds for non-Gaussian noise as long as it has finite non-Gaussianness.
\end{itemize}
\end{remark}

\subsection{Results relying on replica heuristics}
	\label{sec:replica}
Based on the statistical-physics approach in \cite{tanaka.2004,guo.cdma}, the decoupling principle results in \cite{guo.cdma} were imported into the compressed sensing setting in \cite{dongning.noisyCS} to \emph{postulate} the following formula for $\DL(X,R,\sigma^2)$. Note that this result is based on replica heuristics currently lacking a rigorous justification.
\begin{RSP}[{\cite[Corollary 1, p.5]{dongning.noisyCS}}]
\begin{equation}
	\DL(X, R,\sigma^2) = 
	\mmse(X, \eta R \,\sigma^{-2}),
	\label{eq:om.mmse}
\end{equation}
where $0 < \eta < 1$ satisfies the following equation \cite[(12) -- (13), pp. 4 -- 5]{dongning.noisyCS}:\footnote{In the notation of \cite[(12)]{dongning.noisyCS}, $\gamma$ and $\epsilon \mu$ correspond to $R \sigma^{-2}$ and $R$ in our formulation.}
\begin{equation}
	\frac{1}{\eta} = 1 + \frac{1}{\sigma^2} \mmse(X, \eta R \,\sigma^{-2}).
	\label{eq:eta}
\end{equation}
When \prettyref{eq:eta} has more than one solution, $\eta$ is chosen to minimize the free energy
\begin{equation}
	I(X; \sqrt{\eta R \sigma^{-2}} X + N) + \frac{R}{2} (\eta - 1 - \log \eta).
	\label{eq:eta.min}
\end{equation}
\label{clm:replica}
\end{RSP}

In view of the the I-MMSE relationship \cite{guo.immse}, the solutions to \prettyref{eq:eta} are precisely the stationary points of the free energy \prettyref{eq:eta.min} as a function of $\eta$. In fact it is possible for \prettyref{eq:om.mmse} to have arbitrarily many solutions. For an explicit example, see \prettyref{rmk:cantor} in \prettyref{sec:pf.b}.

Note that the solution in \prettyref{eq:om.mmse} does \emph{not} depend on the distribution of the random measurement matrix $\bsA$, as long as its entries are \iid with zero mean and variance $\frac{1}{n}$. Therefore it is possible to employ a random sparse measurement matrix so that each encoding operation involves only a relatively few signal components, for example, 
\begin{equation}
A_{ij} \sim \frac{p}{2} \delta_{\frac{-1}{\sqrt{p n}}} + (1-p) \delta_0 + \frac{p}{2} \delta_{\frac{1}{\sqrt{p n}}}
	\label{eq:sparseA}
\end{equation}
for some $0 < p < 1$. In fact, in the special case of $p=\frac{\log n}{n}$, the replica symmetry postulate can be rigorously proved \cite[Sec. IV]{dongning.noisyCS} (see also \cite{GW08,Montanari08}).

Assuming the validity of the replica symmetry postulate, it can be shown that the phase transition threshold for random linear encoding is always sandwiched between the lower and the upper \emph{MMSE dimension} of the input. The relationship between the MMSE dimension and the information dimension in \prettyref{eq:mmse.renyi} plays a key role in analyzing the minimizer of the free energy \prettyref{eq:eta.min}.\footnote{It can be shown that in the limit of $\sigma^2 \to 0$, the minimizer of \prettyref{eq:eta.min} when $R > \oscrD(X)$ and $R < \uscrD(X)$ corresponds to the largest and the smallest root of the fixed-point equation \prettyref{eq:om.mmse} respectively.}
\begin{theorem}
Assume that the replica symmetry postulate holds for $X$. Then for any \iid random measurement matrix $\bsA$ whose entries have zero mean and variance $\frac{1}{n}$,
	\begin{equation}
	\uscrD(X) \leq \RL(X) \leq \oscrD(X).
	\label{eq:RLstarX.bd}
\end{equation}
Therefore if $\scrD(X)$ exists, we have
\begin{equation}
	\RL(X) = \scrD(X) = d(X),
	\label{eq:RLstarX}
\end{equation}
and in addition, 
\begin{equation}
	\DL(X,R,\sigma^2) = \frac{d(X)}{R - d(X)} \sigma^2 (1	+ o(1)).
	\label{eq:sens.mix}
\end{equation}
\label{thm:RLstar}
\end{theorem}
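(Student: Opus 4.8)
\emph{Proof plan.} The replica symmetry postulate reduces the computation of $\DL$ to the analysis of a scalar fixed-point equation, so the whole proof is a study of that equation as $\sigma^2\to0$. Write the effective SNR $s=\eta R\sigma^{-2}$, so that $\eta=s\sigma^2/R$ and \prettyref{eq:eta} becomes $\sigma^2=\frac Rs-\mmse(X,s)$; i.e. the admissible values of $s$ are the roots of $\phi(s)\triangleq\frac Rs-\mmse(X,s)=\sigma^2$, and by \prettyref{eq:om.mmse} one has $\DL(X,R,\sigma^2)=\mmse(X,s^\ast)$ at the root $s^\ast$ picked out by the free-energy criterion \prettyref{eq:eta.min}, which in these variables is $I(X;\sqrt sX+N)+\frac R2\bigl(\tfrac{s\sigma^2}{R}-1-\log\tfrac{s\sigma^2}{R}\bigr)$. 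Recall that $\mmse(X,\cdot)$ is continuous and strictly decreasing, with $\mmse(X,0)=\var X=1$ and $\mmse(X,s)\to0$, and that by \prettyref{def:mmsedim} the quantity $s\,\mmse(X,s)$ has $\liminf$ equal to $\uscrD(X)$ and $\limsup$ equal to $\oscrD(X)$. Since $R/s=\sigma^2+\mmse(X,s)\le1+\sigma^2$, every root satisfies $s\ge R/(1+\sigma^2)\ge R/2$ for $\sigma^2\le1$; moreover $s\phi(s)=R-s\,\mmse(X,s)$, so when $R>\oscrD(X)$ we get $\phi(s)>0$ for all large $s$ with $\phi(s)\to0^+$, hence a largest root $s^\ast(\sigma^2)$ exists and $s^\ast(\sigma^2)\to\infty$ as $\sigma^2\to0$, whereas when $R<\uscrD(X)$ we get $\phi(s)<0$ for all $s\ge S$ with $S=S(R)$ independent of $\sigma^2$, so every root is confined to $[R/2,S)$.

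\emph{Lower bound $\RL(X)\ge\uscrD(X)$.} Fix $R<\uscrD(X)$. For $\sigma^2$ small the selected root---whichever root the free-energy rule picks---satisfies $R/2\le s^\ast<S$, hence $\DL(X,R,\sigma^2)=\mmse(X,s^\ast)\ge\mmse(X,S)>0$ because $X$ is non-degenerate. Therefore $\DL(X,R,\sigma^2)/\sigma^2\to\infty$, so $\zL(X,R)=\infty$ and $R\le\RL(X)$; letting $R\uparrow\uscrD(X)$ gives the left inequality of \prettyref{eq:RLstarX.bd}. Note that this direction does not use the free-energy selection rule at all.

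\emph{Upper bound $\RL(X)\le\oscrD(X)$, and the formula \prettyref{eq:sens.mix}.} Fix $R>\oscrD(X)$ and a small $\epsilon>0$ with $R-\oscrD(X)-\epsilon>0$ and $R>\od(X)+\epsilon$ (possible because $\oscrD(X)\ge\od(X)$ by \prettyref{eq:mmse.renyi}). Here we must verify that the free energy \prettyref{eq:eta.min} is minimized at the divergent root $s^\ast(\sigma^2)$ of the first paragraph---the claim in the footnote following \prettyref{thm:RLstar}. By the I-MMSE relation the stationary points of the free energy in $\eta$ are exactly the roots of $\phi(s)=\sigma^2$; any divergent root $s^\ast$ obeys $s^\ast\sigma^2=R-s^\ast\,\mmse(X,s^\ast)\in(R-\oscrD(X)-\epsilon,\ R-\uscrD(X)+\epsilon)$, so $\eta^\ast=s^\ast\sigma^2/R$ stays in a fixed compact subinterval of $(0,1)$, whence the penalty term is $O(1)$, while $I(X;\sqrt{s^\ast}X+N)\le\tfrac{\od(X)+\epsilon}{2}\log s^\ast\le\tfrac{\od(X)+\epsilon}{2}\log\tfrac1{\sigma^2}+O(1)$ by \prettyref{eq:i.renyi}; a root $s^\dagger$ trapped in a fixed compact set instead has $\eta^\dagger=s^\dagger\sigma^2/R=\Theta(\sigma^2)\to0$, so its penalty, hence its free energy, is at least $\tfrac R2\log\tfrac1{\sigma^2}-O(1)$. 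Since $R>\od(X)+\epsilon$, for small $\sigma^2$ the divergent root strictly wins, so $\DL(X,R,\sigma^2)=\mmse(X,s^\ast)$; combining $\mmse(X,s^\ast)\le(\oscrD(X)+\epsilon)/s^\ast$ with $\sigma^2\ge(R-\oscrD(X)-\epsilon)/s^\ast$ gives $\DL/\sigma^2\le(\oscrD(X)+\epsilon)/(R-\oscrD(X)-\epsilon)$ for small $\sigma^2$, while $\DL/\sigma^2\le1/\sigma^2$ for $\sigma^2$ bounded away from $0$ by \prettyref{eq:rank}. Hence $\zL(X,R)<\infty$, so $\RL(X)\le R$; letting $R\downarrow\oscrD(X)$ completes \prettyref{eq:RLstarX.bd}. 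If $\scrD(X)$ exists then $\uscrD(X)=\oscrD(X)=\scrD(X)$, the sandwich forces $\RL(X)=\scrD(X)$, and \prettyref{eq:mmse.renyi} forces $\scrD(X)=d(X)$, which is \prettyref{eq:RLstarX}; moreover for $R>\scrD(X)$ the divergent selected root then satisfies $s^\ast=(R-\scrD(X)+o(1))/\sigma^2$, so $\DL(X,R,\sigma^2)=\mmse(X,s^\ast)=(\scrD(X)+o(1))/s^\ast=\frac{d(X)}{R-d(X)}\sigma^2(1+o(1))$, which is \prettyref{eq:sens.mix}.

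\emph{Main obstacle.} The only nonroutine point is the selection claim used above: that \prettyref{eq:eta.min} is minimized at the divergent root when $R>\oscrD(X)$. The free-energy comparison sketched here is the right idea, but making it airtight requires (i) a clean classification of \emph{all} roots of $\phi(s)=\sigma^2$ into a divergent cluster (every divergent root having $s^\ast\sigma^2$ trapped between $R-\oscrD(X)$ and $R-\uscrD(X)$) and a residual set confined to one fixed compact interval, which is delicate precisely when $\scrD(X)$ fails to exist and $s\,\mmse(X,s)$ oscillates (as for the Cantor distribution), and (ii) uniform control of the $\log s$-asymptotics of $I(X;\sqrt sX+N)$ and of $\mmse(X,s)$, i.e. the contents of \prettyref{eq:i.renyi} and \prettyref{def:mmsedim}, along these roots. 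No property of the law of $\bsA$ beyond zero mean and variance $1/n$ enters, since the replica formulae \prettyref{eq:om.mmse}--\prettyref{eq:eta.min} do not depend on it.
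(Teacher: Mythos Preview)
Your proof is correct and follows essentially the same strategy as the paper's: both recast the replica fixed-point equation in terms of the effective SNR (your $s$ is the paper's $\beta$), identify the roots of the scalar equation as stationary points of the free energy, and for the achievability direction show via a free-energy comparison that the minimizing root must diverge as $\sigma^2\to0$ whenever $R>\oscrD(X)$; the converse in both cases is the immediate observation that bounded roots force $\mmse$ to stay bounded away from zero.

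The organizational difference is that the paper introduces the auxiliary functions $f(\beta)=I(X;\sqrt\beta X+N)-\tfrac R2\log\beta$ and $g(\beta)=f(\beta)+\tfrac{R\beta}{2s}$, first locates a root $\beta_s'$ minimizing $f$ beyond $\delta s/R$, and then argues by contradiction that the true minimizer $\beta_s$ of $g$ cannot lag behind $\beta_s'$ by more than a constant factor $e^{-R/\delta}$; this yields an explicit quantitative lower bound $\beta_s/s\ge(\delta/R)e^{-R/\delta}$. Your route is more direct: you split all roots at a fixed threshold $S_0$ into ``bounded'' and ``divergent'' clusters and compare the leading $\log(1/\sigma^2)$ coefficients of the free energy (namely $\tfrac R2$ versus $\tfrac{\od(X)+\epsilon}{2}$). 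Your worry in the ``Main obstacle'' paragraph about the root classification being delicate is overstated: once $S_0$ is chosen so that $s\,\mmse(X,s)\in(\uscrD(X)-\epsilon,\oscrD(X)+\epsilon)$ for $s\ge S_0$, every root with $s\ge S_0$ automatically has $s\sigma^2=R-s\,\mmse(X,s)$ trapped in $[R-\oscrD(X)-\epsilon,R-\uscrD(X)+\epsilon]$, so the dichotomy is clean even when $\scrD(X)$ fails to exist. One small imprecision: when $\uscrD(X)=0$ your $\eta^\ast$ need not stay in a compact subinterval of $(0,1)$ (it can approach $1$), but since $\eta-1-\log\eta$ is bounded on compact subsets of $(0,1]$ this does not affect the conclusion.
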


The general result in \prettyref{thm:RLstar} holds for any input distribution but relies on the conjectured validity of the replica symmetry postulate. For the special case of discrete-continuous mixtures in \prettyref{eq:dcmix}, in view of \prettyref{thm:Dmix}, \prettyref{thm:RLstar} predicts (with the caveat of the validity of the replica symmetry postulate) that the phase-transition threshold for \emph{Gaussian} sensing matrices is $\gamma$, which agrees with the rigorously proven result in \prettyref{thm:noisy.dc}. Therefore, the only added benefit of \prettyref{thm:RLstar} is to allow singular components in the input distribution.

\begin{remark}
In statistical physics, the phase transition near the threshold often behaves according to a power law with certain universal exponent, known as the \emph{critical exponent} \cite[Chapter 3]{Stanley71}. According to \prettyref{eq:sens.mix}, as the measurement rate $R$ approaches the fundamental limit $d(X)$, the replica method suggests that the optimal noise sensitivity blows up according as the power law $\frac{1}{R-d(X)}$, where the unit exponent holds universally for all mixture distributions. It remains a open question whether this power law behavior can be rigorously proven and whether the optimal exponent is one. 
Note that by using the Lipschitz extension scheme in the proof \prettyref{thm:noisy.dc}, we can achieve the noise sensitivity in \prettyref{eq:xL.mix}, which blows up exponentially as the $R - d(X)$ vanishes and is likely to highly suboptimal.

	\label{rmk:critexp}
\end{remark}

\begin{remark}
	In fact, the proof of \prettyref{thm:RLstar} shows that the converse part (left inequality) of \prettyref{eq:RLstarX.bd} holds in a much stronger sense: as long as there is no residual error in the weak-noise limit, that is, if $\DL(X,R,\sigma^2) = o(1)$ as $\sigma^2 \to 0$, then $R \geq \uscrD(X)$ must hold. Therefore, the converse part of \prettyref{thm:RLstar} still holds even if we weaken the right-hand side of \prettyref{eq:Rstar.equiv} from $O(\sigma^2)$ to $o(1)$.
\end{remark}

\begin{remark}
Assume the validity of the replica symmetry postulate. Combining \prettyref{thm:Rstar}, \prettyref{thm:RLstar} and \prettyref{eq:rankR} gives an \emph{operational} proof for 
	$\od(X) \leq \oscrD(X)$,
the fourth inequality in \prettyref{eq:mmse.renyi}, which has been proven analytically in \cite[Theorem 8]{mmse.dim.IT}. 
%
\end{remark}

\section{Comparisons to LASSO and AMP algorithms}
	\label{sec:compare}
Widely popular in the compressed sensing literature, the LASSO \cite{lasso,CDS99} and the approximate message passing (AMP) algorithms \cite{maleki.pnas} are low-complexity reconstruction procedures, which are originally obtained as solutions to the conventional minimax setup in compressed sensing. In this section, we compare the phase transition thresholds of LASSO and AMP achieved in the Bayesian setting to the optimal thresholds derived in Sections \ref{sec:noiseless} -- \ref{sec:noisy}. Similar Bayesian analysis has been performed in \cite{donoho.tanner.hypercube,RFG12,maleki.pnas,bayati.lasso,complex.lasso}.

	\subsection{Signal models}
	\label{sec:model5}
	The following three families of input distributions are considered \cite[p. 18915]{maleki.pnas}, indexed by $\chi = \pm, +$ and $\square$ respectively, which all belong to the family of input distributions of the mixture form in \prettyref{eq:dcmix}:
\begin{description}
	\item[$\pm$]: sparse signals \prettyref{eq:sparseP};
	\item[$+$]: sparse non-negative signals \prettyref{eq:sparseP} with the continuous component $P_{\rm c}$ supported on $\reals_+$.
	\item[$\square$]: simple signals (\prettyref{sec:model}) \cite[Section 5.2, p. 540]{donoho.tanner.hypercube}
\begin{equation}
P = (1-\gamma) \Big(\frac{1}{2} \delta_{0} + \frac{1}{2} \delta_{1}\Big)  + \gamma \, P_{\rm c}
	\label{eq:simpleP}
\end{equation}
where $P_{\rm c}$ is some absolutely continuous distribution supported on the unit interval.
\end{description}

\subsection{Noiseless measurements}
\label{sec:comp.a}
In the noiseless case, we consider 
linear programming (LP) 
decoders and the AMP decoder \cite{maleki.pnas} and the phase transition threshold of error probability. Phase transitions of greedy reconstruction algorithms have been analyzed in \cite{BCTT10}, which derived upper bounds (achievability results) for the transition threshold of measurement rate. We focus our comparison on algorithms whose phase transition thresholds are known exactly.

The following LP decoders are tailored to the three input distributions $\chi = \pm, +$ and $\square$ respectively (see Equations (P1), (LP) and (Feas) in \cite[Section I]{donoho.tanner.ieee10}):
	\begin{align}
g_{\scriptscriptstyle \pm}(y) = & ~ \arg\min\{ \lnorm{x}{1}: x \in \reals^n, \bsA x = y\}, \label{eq:l1.pm} \\
g_{\scriptscriptstyle +}(y) = & ~ \arg\min\{ \lnorm{x}{1}: x \in \reals^n_+, \bsA x = y\},
	\label{eq:l1.p} \\
g_{\scriptscriptstyle \square}(y) = & ~ \{ x: x \in [0,1]^n, \bsA x = y\}.	
	\label{eq:l1.sq}
\end{align}
For sparse signals, \prettyref{eq:l1.pm} -- \prettyref{eq:l1.p} are based on $\ell_1$-minimization (also known as Basis Pursuit \cite{CDS99}, which is the noiseless limit of LASSO defined in \prettyref{sec:comp.b}), while for simple signals, the decoder \prettyref{eq:l1.sq} solves an LP feasibility problem. In general the decoders in \prettyref{eq:l1.pm} -- \prettyref{eq:l1.sq} output a list of vectors upon receiving the measurement. The reconstruction is successful if and only if the output list contains only the true vector. The error probability is thus defined as $\prob{g_{\chi}(\bsA X^n) \neq \{X^n\}}$, evaluated with respect to the product measure $(P_X)^{n} \times P_{\bsA}$.

The phase transition thresholds of the reconstruction error probability for decoders \prettyref{eq:l1.pm} -- \prettyref{eq:l1.sq} are derived in \cite{donoho.tanner.2009} using combinatorial geometry. For sparse signals and $\ell_1$-minimization decoders \prettyref{eq:l1.pm} -- \prettyref{eq:l1.p}, the expressions of the corresponding thresholds $\sfR_{\scriptscriptstyle \pm}(\gamma)$ and $\sfR_{\scriptscriptstyle +}(\gamma)$ are quite involved, given implicitly in \cite[Definition 2.3]{donoho.tanner.2009}. As observed in \cite[Finding 1]{maleki.pnas}, 
$\sfR_{\scriptscriptstyle \pm}(\gamma)$ and $\sfR_{\scriptscriptstyle +}(\gamma)$ agree numerically with the following expressions:\footnote{In the series of papers \cite{donoho.tanner.hypercube, donoho.tanner.ieee10, maleki.pnas, maleki.noise.sens}, the phase diagrams are parameterized by $(\rho, \delta)$, where $\delta = R$ is the measurement rate and $\rho = \frac{\gamma}{R}$ is the ratio between the sparsity and rate. In this paper, the parameterization $(\gamma, R)$ is used instead. The ratio $\frac{\gamma}{\sfR_{\chi}(\gamma)}$ is denoted by $\rho(\gamma; \chi)$ in \cite{maleki.pnas}. The same parametrization is also used in \cite{DJoM11}.}
\begin{align}
\sfR_{\scriptscriptstyle \pm}(\gamma) & ~ = \min_{\alpha \geq 0} \gamma(1+\alpha^2) + 2(1-\gamma)((1+\alpha^2) \Phi(-\alpha)-\alpha \varphi(\alpha)) \label{eq:rp} \\
\sfR_{\scriptscriptstyle +}(\gamma) & ~ = \min_{\alpha \geq 0} \gamma(1+\alpha^2) + (1-\gamma)((1+\alpha^2) \Phi(-\alpha)-\alpha \varphi(\alpha)) \label{eq:rpm}
\end{align}
which is now rigorously established in view of the results in \cite{BM11}.
For simple signals, the phase transition threshold is proved to be \cite[Theorem 1.1]{donoho.tanner.hypercube}
\begin{equation}
	\sfR_{\scriptscriptstyle \square}(\gamma) = \frac{\gamma+1}{2}.
	\label{eq:rq}
\end{equation}
Moreover, substantial numerical evidence in \cite{maleki.pnas} suggests that the phase transition thresholds for the AMP decoder coincide with the LP thresholds for all three input distributions. The suboptimal thresholds obtained from \prettyref{eq:rp} -- \prettyref{eq:rq} are
 plotted in \prettyref{fig:dmm} along with the optimal threshold obtained from \prettyref{thm:linear.mix} which is $\gamma$.\footnote{A similar comparison between the suboptimal threshold $\sfR_{\scriptscriptstyle \pm}(\gamma)$ and the optimal threshold $\gamma$ has been provided in \cite[Fig. 2(a)]{KWT10} based on a replica-heuristic calculation.}
In the gray area below the diagonal in the $(\gamma,R)$-phase diagram, any sequence of sensing matrices and decompressors will fail to reconstruct the true signal with probability that tends to one. Moreover, we observe that the LP and AMP decoders are severely suboptimal unless $\gamma$ is close to one. 
\begin{figure}[ht]
	\centering
	\begin{overpic}[
	width=.6\columnwidth]{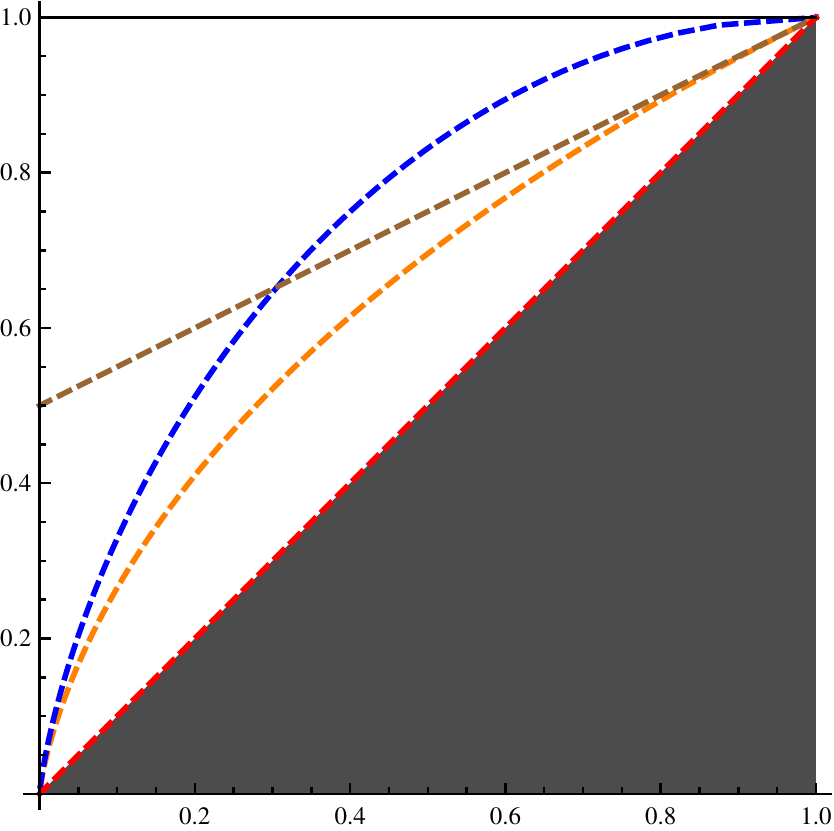}
\put(19,50){\textcolor{blue}{$\pm$}}
\put(25,40){\textcolor{orange}{$+$}}
\put(15,59){\textcolor{brown}{$\square$}}
\put(34,36){\turnbox{45}{{\small Optimal threshold}}}
	\put(50,0){$\gamma$}
  \put(-5,60){$R$}
\end{overpic}
\caption{Suboptimal thresholds \prettyref{eq:rp} -- \prettyref{eq:rq} obtained with LASSO and AMP v.s. optimal threshold for the three signal models in \prettyref{sec:model5}.}
  \label{fig:dmm}
\end{figure}

In the highly sparse regime which is most relevant to compressed sensing problems, it follows from \cite[Theorem 3]{donoho.tanner.ieee10} that for sparse signals ($\chi=\pm$ or $+$),
	\begin{equation}
\sfR_{\chi}(\gamma)
	= 2 \gamma \log_{\eexp} \frac{1}{\gamma} (1+o(1)), \text{ as } \gamma \to 0,
	\label{eq:l1.verysparse}
\end{equation}
which implies that $\sfR_{\chi}$ has infinite slope at $\gamma = 0$.
Therefore when $\gamma \ll 1$, the $\ell_1$ and AMP decoders require on the order of $2 s \log_{\eexp} \frac{n}{s}$ measurements to successfully recover the unknown vector, whose number of nonzero components is denoted by $s$. In contrast, $s$ measurements suffice when using an optimal decoder (or $\ell_0$-minimization decoder). The LP or AMP decoders are also highly suboptimal for simple signals, since $\sfR_{\scriptscriptstyle \square}(\gamma)$ converges to $\frac{1}{2}$ instead of zero as $\gamma \to 0$. This suboptimality is due to the fact that the LP feasibility decoder \prettyref{eq:l1.sq} simply finds any $x^n$ in the hypercube $[0,1]^n$ that is compatible with the linear measurements. Such a decoding strategy does not enforce the typical discrete structure of the signal, since most of the entries saturate at 0 or 1 equiprobably.
Alternatively, the following decoder achieves the optimal $\gamma$: define
\[
\sfT(x^n) = \frac{1}{n} \sum_{i=1}^n \pth{\indc{x_i = 0}, \indc{x_i \notin \{0,1\}}, \indc{x_i=1}}.
\]
The decoder outputs the solution to $\bsA x^n = y^k$ such that $\sfT(x^n)$ is closest to $\pth{\frac{1-\gamma}{2}, \gamma, \frac{1-\gamma}{2}}$ (in total variation distance for example).


\subsection{Noisy measurements}
\label{sec:comp.b}
In the noisy case, we consider the AMP decoder \cite{maleki.noise.sens} and the $\ell_1$-penalized least-squares (\ie LASSO) decoder \cite{lasso}:
\begin{equation}
	\tilde{g}(y, \bsA; \lambda) = \argmin_{x\in \reals^n} \frac{1}{2}\lnorm{y  -\bsA x}{2}^2 + \lambda \lnorm{x}{1},
	\label{eq:lasso}
\end{equation}
where $\lambda > 0$ is a regularization parameter. Note that in the limit of $\lambda \to 0$, LASSO reduces to the $\ell_1$-minimization decoder defined in \prettyref{eq:l1.pm}.
For Gaussian sensing matrices and Gaussian observation noise, the asymptotic mean-square error achieved by LASSO for a fixed $\lambda$ 
\begin{equation}
	D^{(\lambda)}(X,R,\sigma^2) 	
\triangleq	\lim_{n \to \infty} \frac{1}{n} \expect{\big\|X^n - \tilde{g}(\bsA X^n + \sigma  N^k; \lambda)\big\|^2}
\end{equation}
can be determined as a function of $P_X, \lambda$ and $\sigma$ by applying \cite[Corollary 1.6]{bayati.lasso}.\footnote{It should be noted that in \cite{maleki.noise.sens,bayati.lasso}, the entries of the sensing matrix is distributed according to $\calN(0,\frac{1}{k})$ (column normalization). While in the present paper the sensing matrix has $\calN(0,\frac{1}{n})$ entries  (row normalization) in order for the encoded signal to have unit average power. Therefore the expression in \prettyref{eq:sens.lasso} is equal to that in \cite[Equation (1.9)]{maleki.noise.sens} divided by the measurement $R$.}
In \prettyref{app:lasso}, we show that
for any $X$ distributed according to the mixture 
\begin{equation}
P_X = (1-\gamma)\delta_0+\gamma Q,	
	\label{eq:PXQ}
\end{equation}
where $Q$ is an arbitrary probability measure with no mass at zero,
the asymptotic
noise sensitivity of LASSO with optimized $\lambda$ is given by the following equation:
	\begin{align}
\tilde{\xi}(X,R) 
\triangleq  & ~ \inf_{\lambda} \lim_{\sigma^2 \to 0}\frac{D^{(\lambda)}(X,R,\sigma^2)}{\sigma^2} \\
= & ~ 	\begin{cases}
	\frac{\sfR_{\scriptscriptstyle \pm}(\gamma)}{R - \sfR_{\scriptscriptstyle \pm}(\gamma)} & R > \sfR_{\scriptscriptstyle \pm}(\gamma)\\
	\infty & R \leq  \sfR_{\scriptscriptstyle \pm}(\gamma)
	\end{cases}
	\label{eq:sens.lasso}
\end{align}
where $\sfR_{\scriptscriptstyle \pm}(\gamma)$ is given in \prettyref{eq:rpm}. By the same reasoning in \prettyref{rmk:suplim}, the worst-case noise sensitivity of LASSO is finite if and only if $R > \sfR_{\scriptscriptstyle \pm}(\gamma)$. 
Note that \prettyref{eq:sens.lasso} does not depend on $Q$ as long as $Q(\{0\})=0$. Therefore $\sfR_{\scriptscriptstyle \pm}(\gamma)$
also coincides with the phase transition threshold in a minimax sense, obtained in \cite[Proposition 3.1(1.a)]{maleki.noise.sens} by considering the lease favorable $Q$. 
Analogously, the LASSO decoder \prettyref{eq:lasso} can be adapted to other signal structures (see for example \cite[Sec. VI-A]{maleki.noise.sens}), resulting in the phase-transition threshold $\sfR_{\scriptscriptstyle +}(\gamma)$ and $\sfR_{\scriptscriptstyle \square}(\gamma)$ for sparse positive and simple signals, given by \prettyref{eq:rpm} and \prettyref{eq:rq}, respectively. Furthermore, these thresholds also apply to the AMP algorithm \cite{Montanari11}. 

Next, focusing on sparse signals, we compare the performance of LASSO and AMP algorithms to the optimum. In view of \prettyref{eq:sens.lasso}, the phase transition thresholds of noise sensitivity for the LASSO and AMP decoder are both $\sfR_{\scriptscriptstyle \pm}(\gamma)$ for any $X$ distributed according to \prettyref{eq:PXQ}. We discuss the following two special cases:
\begin{enumerate}
	\item $Q$ is absolutely continuous, or alternatively, $P_X$ is a discrete-continuous mixture given in \prettyref{eq:sparseP}. The optimal phase transition threshold is $\gamma$ as a consequence of \prettyref{thm:noisy.dc}. Therefore the phase transition boundaries are identical to \prettyref{fig:dmm} and the same observation in \prettyref{sec:comp.a} applies.
	\item $Q$ is discrete with no mass at zero, \eg, $Q = \frac{1}{2}(\delta_1+\delta_{-1})$. Since $P_X$ is discrete with zero information dimension, the optimal phase transition threshold is equal to zero, while $\sfR_{\scriptscriptstyle \pm}(\gamma)$ still applies to LASSO and AMP.
\end{enumerate}
For sparse signals of the form \prettyref{eq:sparseP} with $\gamma = 0.1$, \prettyref{fig:noisy.sparse} compares those expressions for the asymptotic noise sensitivity of LASSO (and AMP) algorithm to the optimal noise sensitivity predicted by \prettyref{thm:RLstar} based on replica heuristics. Note that the phase transition threshold of LASSO is approximately 3.3 times the optimal.



\begin{figure}[!t]
	\centering
\begin{overpic}[width=.9\columnwidth]{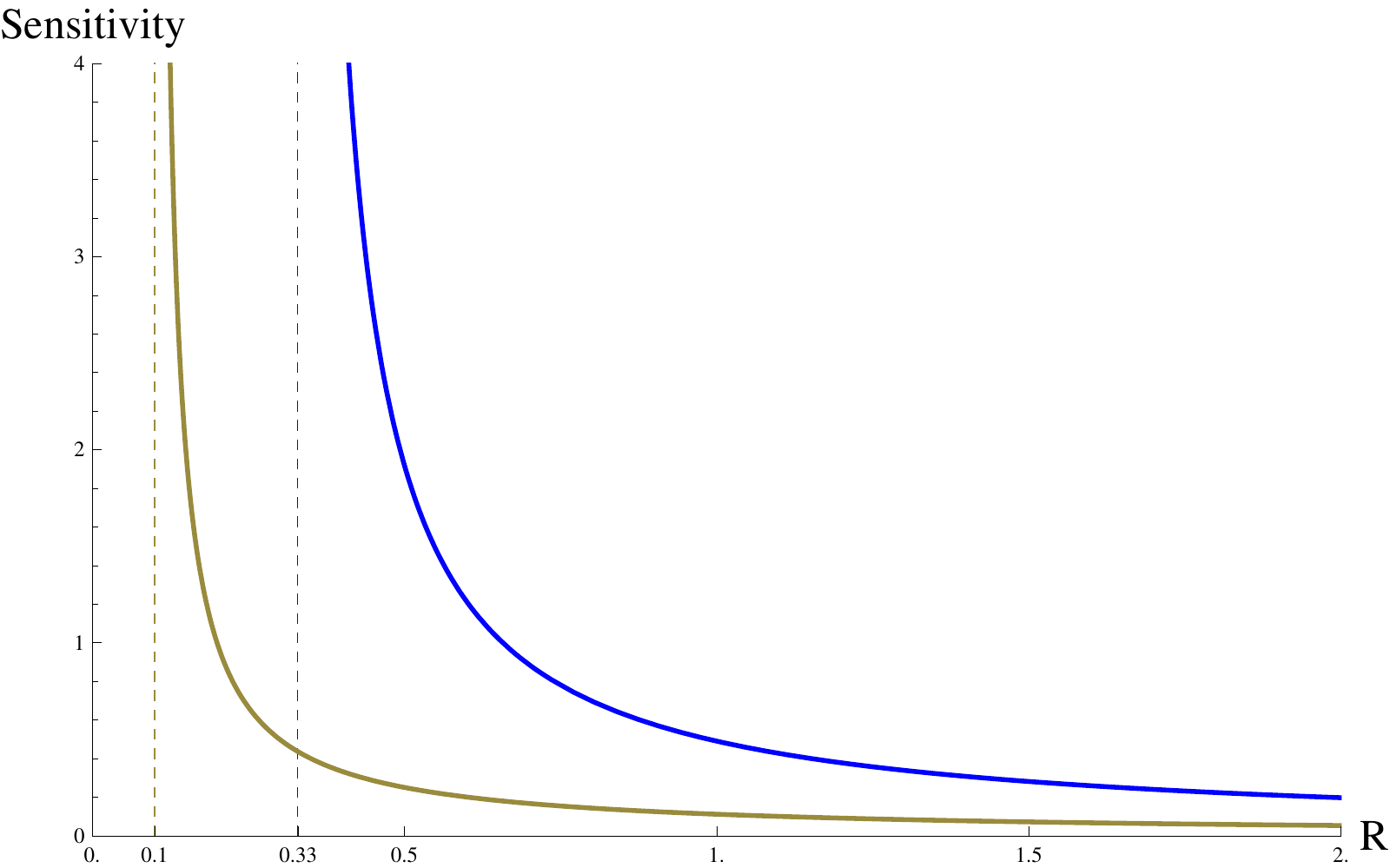}
\put(45,20){\color{blue} \vector(-1,-2){3.8}}
\put(45,20){\color{blue} LASSO decoder}
\put(75,10){\color{mathyellow} \vector(-1,-2){3.3}}
\put(75,10){\color{mathyellow} optimal decoder}
\end{overpic}
\caption{Asymptotic noise sensitivity  of the optimal decoder and the LASSO decoder exhibiting phase transitions: sparse signal model \prettyref{eq:sparseP} with $\gamma = 0.1$.}
	\label{fig:noisy.sparse}
\end{figure}

\section{Proofs}
	\label{sec:pf}
	\subsection{Auxiliary results}
	\label{sec:noiless.aux}
	
	We need the following large-deviations result on Gaussian random matrices.
%

\begin{lemma}
Let $\sigma_{\min}(\bsB_k)$ denote the smallest singular value of the $k \times m_k$ matrix $\bsB_k$ consisting of \iid Gaussian entries with zero mean and variance $\frac{1}{k}$. For any $t > 0$, denote 
\begin{equation}
F_{k,m_k}(t) \triangleq \prob{\sigma_{\min}(\bsB_k) \leq t}. 	
	\label{eq:Fkm}
\end{equation}
Suppose that $\frac{m_k}{k} \xrightarrow{k \to \infty} \alpha \in (0,1)$.
Then 
\begin{equation}
\liminf_{k \to \infty} \frac{1}{k} \log \frac{1}{F_{k,m_k}(t)} \geq \frac{(1-\alpha)}{2} \log \frac{(1-\alpha)^2}{\eexp t^2 } + \frac{\alpha}{2} \log \alpha.
	\label{eq:sigma1}
\end{equation}
	\label{lmm:edelman}
\end{lemma}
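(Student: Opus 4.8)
The plan is to recast \prettyref{eq:sigma1} as a large-deviation upper bound on the probability that the smallest eigenvalue of a real Wishart matrix drops below the left edge of the limiting Marchenko--Pastur support, and then to read off the rate by Laplace's method. Write $\bsB_k=k^{-1/2}\bsG_k$ with $\bsG_k$ having \iid standard Gaussian entries, so that $\sigma_{\min}(\bsB_k)^2=k^{-1}\lambda_{\min}(\bsW_k)$, where $\bsW_k\triangleq\bsG_k\trans\bsG_k$ is the real Wishart matrix $W_{m_k}(k)$ (a sum of $k$ independent outer products of $\calN(0,\bfI_{m_k})$ vectors), a.s.\ nonsingular once $k\geq m_k$, hence for all large $k$ since $\alpha<1$. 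Thus $F_{k,m_k}(t)=\prob{\lambda_{\min}(\bsW_k)\leq kt^2}$, and I would use the explicit density of its unordered eigenvalues $\mu_1,\dots,\mu_{m_k}>0$, proportional to $\prod_i\mu_i^{a}\eexp^{-\mu_i/2}\prod_{i<j}|\mu_i-\mu_j|$ with $a=(k-m_k-1)/2$. Because $\lambda_{\min}(\bsW_k)/k\to(1-\sqrt\alpha)^2$ almost surely, the bound is vacuous unless $t<1-\sqrt\alpha$; for larger $t$ the right-hand side of \prettyref{eq:sigma1} is nonpositive and there is nothing to prove.

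By exchangeability of the eigenvalue density and a union bound over which eigenvalue is the smallest one --- costing only a factor $m_k=\eexp^{o(k)}$ --- it suffices to bound $\prob{\mu_1\leq kt^2}$. Freezing $\mu_1=u$ and integrating out $\mu_2,\dots,\mu_{m_k}$ turns this into a single integral,
\[
\prob{\mu_1\leq kt^2}\ =\ \int_0^{kt^2} u^{a}\,\eexp^{-u/2}\,\rho_{m_k-1}(u)\,\diff u ,
\]
where $u^{a}\eexp^{-u/2}\rho_{m_k-1}(u)$ is the marginal density of $\mu_1$, so that $\rho_{m_k-1}(u)$ collects the Selberg-type integral over the remaining $m_k-1$ eigenvalues together with the overall normalization, and equals (up to constants) the expectation of $\prod_{i\geq 2}|u-\mu_i|$ under the conditional ensemble. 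This quantity is available in closed form in terms of generalized Laguerre polynomials of degree $n=m_k-1$ and parameter $\beta=k-m_k+1$, and, since $u$ will lie strictly to the left of the bulk, it is positive there.

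Next I would bound the integral by $kt^2$ times the supremum of its integrand over $[0,kt^2]$; since the large-deviation cost of the smallest eigenvalue increases the further one moves below the edge, this supremum is attained at the endpoint $u=kt^2$. The exponential order at $u=kt^2$ is obtained by combining (i) Stirling's formula for the Gamma-function constants appearing in $\rho_{m_k-1}$ and in the normalization, with (ii) the classical uniform exponential asymptotics of the generalized Laguerre polynomial $L^{(\beta)}_{n}$ when $n\sim\alpha k$ and $\beta\sim(1-\alpha)k$ grow together and the argument $u=kt^2$ sits below its smallest zero --- obtainable by Laplace's method on the Rodrigues/contour representation of $L^{(\beta)}_{n}$, or equivalently by a potential-theoretic computation against the Marchenko--Pastur equilibrium measure. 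The divergent $\log k$ contributions cancel across (i), (ii) and the factor $u^{a}\eexp^{-u/2}$, and the surviving elementary expression in $\alpha$ and $t$ simplifies to exactly $\frac{1-\alpha}{2}\log\frac{(1-\alpha)^2}{\eexp t^2}+\frac{\alpha}{2}\log\alpha$, which is \prettyref{eq:sigma1}.

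I expect the main obstacle to be item (ii): controlling the exponential growth of the generalized Laguerre polynomial uniformly when both its degree and its parameter scale linearly in $k$ and the argument is below the soft edge, and then making the Stirling accounting in (i) dovetail with it so that the output is exactly the stated right-hand side rather than merely a bound of the same form. A route that sidesteps re-deriving the Laguerre/Selberg identities is to start instead from Edelman's closed-form density for the smallest singular value of a rectangular Gaussian matrix and Laplace-estimate that directly; the required asymptotic analysis is of the same nature. One should also verify carefully the monotonicity used to collapse the $u$-integral onto its endpoint, which is exactly the place where the effective hypothesis $t<1-\sqrt\alpha$ is used.
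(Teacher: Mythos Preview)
Your approach is correct in principle but substantially more elaborate than the paper's. The paper takes essentially the alternative you mention in your last paragraph, but in an even simpler form: rather than using Edelman's closed-form density and Laplace-estimating it, the paper invokes Edelman's \emph{upper bound} on the density of $\lambda_{\min}(\bsH_k\trans\bsH_k)$ with $\bsH_k=\sqrt{k}\,\bsB_k$, namely
\[
f_{\lambda_{\min}}(x)\ \leq\ E_{k,m}\, x^{(k-m-1)/2}\,\eexp^{-x/2},
\]
where $E_{k,m}$ is an explicit ratio of Gamma functions. Integrating over $[0,kt^2]$ is then trivial: drop $\eexp^{-x/2}\leq 1$, integrate the monomial to get $\frac{2k^{(k-m+1)/2}E_{k,m}}{k-m+1}\,t^{k-m+1}$, and apply Stirling's formula to the Gamma functions in $E_{k,m}$. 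No Laguerre asymptotics, no potential theory, no monotonicity argument, no case split on $t$ versus $1-\sqrt{\alpha}$; the whole proof is three displayed lines plus Stirling.

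Your primary route via the joint eigenvalue density, union bound, and uniform asymptotics of $L_n^{(\beta)}$ with $n,\beta\asymp k$ would eventually land on the same exponent, but the item-(ii) obstacle you flag is exactly where all the labor sits, and Edelman's density bound makes it unnecessary. The endpoint-monotonicity step you rely on to collapse the $u$-integral is also avoided entirely.
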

\begin{proof}
For brevity let $\bsH_k = \sqrt{k} \bsB_k$ and suppress the dependence of $m_k$ on $k$. Then $\bsH_k\trans\bsH_k$ is an $m \times m$ Gaussian Wishart matrix. The minimum eigenvalue of $\bsH_k\trans\bsH_k$ has a density, which admits the following upper bound \cite[Proposition 5.1, p. 553]{Edelman88}.
	\begin{equation}
	f_{\lambda_{\min}(\bsH_k\trans\bsH_k)}(x) \leq E_{k,m} x^{\frac{k-m-1}{2}} \eexp^{-\frac{x}{2}}, \quad x \geq 0,
	\label{eq:lambdamin}
\end{equation}
where 
\begin{equation}
E_{k,m} \triangleq  \frac{\sqrt{\pi} 2^{-\frac{k-m+1}{2}} \Gamma\pth{\frac{k+1}{2}}}{\Gamma\pth{\frac{m}{2}} \Gamma\pth{\frac{k-m+1}{2}} \Gamma\pth{\frac{k-m+2}{2}}}.
	\label{eq:Emn}
\end{equation}
Then
\begin{align}
\prob{\sigma_{\min}(\bsB_n) \leq t}
= & ~ \prob{\lambda_{\min}(\bsH_n\trans\bsH_n) \leq k t^2}\\
\leq & ~ E_{k,m} \int_{0}^{k t^2} x^{\frac{k-m-1}{2}} \eexp^{-\frac{x}{2}} \diff x\\
\leq & ~ \frac{2 k^{\frac{k-m+1}{2}} E_{k,m}}{k-m+1} t^{k-m+1}. \label{eq:gammas}
\end{align}
Applying Stirling's approximation to \prettyref{eq:gammas} yields \prettyref{eq:sigma1}.
\end{proof}

\begin{remark}
More general non-asymptotic upper bound on $\prob{\sigma_{\min}(\bsB_k) \leq t}$ is given in \cite[Theorem 1.1]{RV09}, which holds universally for all sub-Gaussian distributions. Note that the upper bound in \cite[Equation (1.10)]{RV09} is of the form 
\begin{equation}
\prob{\sigma_{\min}(\bsB_k) \leq t} \leq (c_1 t)^{c_2 k} + \exp(-c_3 k)	
	\label{eq:RV09}
\end{equation}
where $c_1,c_2,c_3$ are constants. The second term in \prettyref{eq:RV09} is due to the fact that the least singular value for discrete ensembles (\eg Rademacher) always has a mass at zero, which is exponentially small in $k$ but independent of $t$.
For Gaussian ensembles, however, we have $\sigma_{\min}(\bsB_k) > 0$ almost surely. Indeed, \prettyref{lmm:edelman} indicates that the second term in \prettyref{eq:RV09} can be dropped, which provides a refinement of the general result in \cite[Theorem 1.1]{RV09} in the Gaussian case. As shown in \prettyref{sec:pf.a}, in order for the proof of \prettyref{thm:linear.mix} to work, it is necessary to use ensembles for which $\prob{\sigma_{\min}(\bsB_k) \leq t}$ can be upper bounded asymptotically by $\exp(- k E(t))$, where $E(t) \to \infty$ as $t$ vanishes.
	\label{rmk:RV09}
\end{remark}

The next lemma upper bounds the probability that a Gaussian random matrix shrinks the length of some vector in an affine subspace by a constant factor.
The point of this result is that the bound depends only on the dimension of the subspace but not the basis.
\begin{lemma}
	Let $\bsA$ be a $k \times n$ random matrix with \iid Gaussian entries with zero mean and variance $\frac{1}{n}$. Let $R = \frac{k}{n}$.
	Let $k > m$.
Then for any $m$-dimensional affine subspace $U$ of $\reals^n$,
	\begin{equation}
\prob{\inf_{x \in U \backslash \{0\}}	\frac{\norm{\bsA x}}{\norm{x}} \leq t} \leq F_{k,m}\big(R^{-\frac{1}{2}}t\big).
	\label{eq:affsigma}
\end{equation}
	\label{lmm:affine}
\end{lemma}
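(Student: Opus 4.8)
The plan is to reduce the statement to a single fixed $k\times m$ Gaussian matrix by exploiting the rotational invariance of the \iid Gaussian ensemble, and then to reconcile the two variance normalizations ($\tfrac1n$ in $\bsA$ versus $\tfrac1k$ in the definition of $F_{k,m}$). Since the infimum is taken over $U\setminus\{0\}$ we have $0\in U$, so the $m$-dimensional affine subspace $U$ is in fact a linear subspace and
$\inf_{x\in U\setminus\{0\}}\norm{\bsA x}/\norm{x}=\sigma_{\min}(\restrict{\bsA}{U})$ is precisely the smallest singular value of the restriction of $\bsA$ to $U$. (When the bound is needed for a genuine affine set $x_0+V$ with $V$ an $m$-dimensional linear subspace, one applies it to $V$ and uses that $x-x'\in V$ for all $x,x'\in x_0+V$, so it controls $\inf_{x\neq x'}\norm{\bsA(x-x')}/\norm{x-x'}=\sigma_{\min}(\restrict{\bsA}{V})$ as well.)

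First I would pick an orthonormal basis $u_1,\dots,u_m$ of $U$, extend it to an orthonormal basis of $\reals^n$, and collect these vectors as the columns of an orthogonal matrix $\bsO$. The substitution $x=\bsO z$ is norm-preserving and carries $U$ onto $\reals^m\times\{0\}^{n-m}$, so the quantity of interest equals the smallest singular value of the $k\times m$ submatrix of $\bsA\bsO$ formed by its first $m$ columns. The one real point is now that $\bsA\bsO\eqdistr\bsA$: the rows of $\bsA$ are \iid $\calN(0,\tfrac1n\bsI_n)$, hence rotationally invariant, so right-multiplication by the fixed orthogonal $\bsO$ leaves the law unchanged. Consequently that submatrix has the same distribution as the matrix $\bsA'$ built from the first $m$ columns of $\bsA$ itself, whose entries are \iid $\calN(0,\tfrac1n)$. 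This is exactly the basis-independence advertised before the statement: the law of $\sigma_{\min}(\restrict{\bsA}{U})$ depends on $U$ only through $\dim U=m$.

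It then remains to match normalizations. If $\bsB$ is $k\times m$ with \iid $\calN(0,\tfrac1k)$ entries, then $\sqrt{k/n}\,\bsB$ has \iid $\calN(0,\tfrac1n)$ entries, so $\bsA'\eqdistr\sqrt{R}\,\bsB$ and hence $\sigma_{\min}(\bsA')\eqdistr\sqrt{R}\,\sigma_{\min}(\bsB)$. Therefore
\[
\prob{\inf_{x\in U\setminus\{0\}}\frac{\norm{\bsA x}}{\norm{x}}\leq t}=\prob{\sqrt{R}\,\sigma_{\min}(\bsB)\leq t}=\prob{\sigma_{\min}(\bsB)\leq R^{-1/2}t}=F_{k,m}\!\left(R^{-1/2}t\right),
\]
so the argument will in fact yield equality, slightly stronger than claimed. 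I expect the only step needing any care to be the invariance $\bsA\bsO\eqdistr\bsA$; the rest is a change of variables and a rescaling. The hypothesis $k>m$ is used only implicitly, ensuring that a $k\times m$ Gaussian matrix has full column rank almost surely so that $F_{k,m}(t)\to 0$ as $t\to 0$.
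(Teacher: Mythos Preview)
Your inference that $0 \in U$ from the notation $U \setminus \{0\}$ is not justified: the lemma is stated for an arbitrary $m$-dimensional \emph{affine} subspace, and the exclusion of $0$ is written uniformly so that the ratio is well defined (when $0 \notin U$ it is simply vacuous). In the application (\prettyref{lmm:lip1}), one fixes $x$ in a union $T = \bigcup_i U_i$ of affine subspaces and bounds $\inf_{z \in (T - x) \setminus \{0\}} \norm{\bsA z}/\norm{z}$ by applying the present lemma to each $U_i - x$. Only the subspace containing $x$ becomes linear after translation; the others are genuinely affine, and your parenthetical about controlling differences via $\sigma_{\min}(\restrict{\bsA}{V})$ does not bound the quantity $\inf_{z \in (U_i - x)\setminus\{0\}} \norm{\bsA z}/\norm{z}$ that is actually needed there.

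For linear $U$ your rotational-invariance argument is correct and matches the paper exactly, yielding equality with $F_{k,m}(R^{-1/2}t)$. For $U = v + V$ with $v \notin V$, the paper exploits the scale-invariance of $\norm{\bsA x}/\norm{x}$ to show that the infimum over $U$ coincides with the minimum over the $(m{+}1)$-dimensional linear span $V' = \Span(v, V)$; the rotational-invariance step applied to $V'$ then gives $F_{k,m+1}(R^{-1/2} t)$ rather than $F_{k,m}$. (The paper closes the remaining gap by asserting that $m \mapsto F_{k,m}(t)$ is decreasing, but adding a column can only shrink $\sigma_{\min}$, so in fact $F_{k,m+1} \geq F_{k,m}$; either way the honest bound in the affine case is $F_{k,m+1}$, which is all the downstream union-bound argument needs.) The missing ingredient in your proposal is this passage from the affine set $U$ to its linear span.
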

\begin{proof}
By definition, 
there exists $v \in \reals^n$ and an $m$-dimensional linear subspace $V$ such that $U = v + V$. 
First assume that $v \notin V$. Then $0 \notin U$. Let $\{\ntok{v_0}{v_m}\}$ be an orthonormal basis for $V'=\Span(v,V)$. Set $\Psi = [\ntok{v_0}{v_m}]$. Then
	\begin{align}
\inf_{x \in U}	\frac{\norm{\bsA x}}{\norm{x}}
= & ~ \min_{x \in V'\backslash\{0\}}	\frac{\norm{\bsA x}}{\norm{x}} \label{eq:infU}\\
= & ~ \min_{y \in \reals^{m+1} \backslash\{0\}}	\frac{\norm{\bsA \Psi y}}{\norm{y}} \\
= & ~ \sigma_{\min}(\bsA \Psi), \label{eq:infU2}
\end{align}
where \prettyref{eq:infU} is due to the following reasoning: since $U \subset V'$, it remains to establish $\inf_{x \in U}	\frac{\norm{\bsA x}}{\norm{x}}
\leq \min_{x \in V'\backslash\{0\}}$. To see this, for any $x \in V'$, we have $x = \alpha v + \beta y$ for some $\alpha,\beta \in \reals$ and $y \in V$. Without loss of generality, we can assume that $\alpha \geq 0$. For each $\tau > 0$, define $x_\tau = (\alpha + \tau) v + \beta y \in V'$ which satisfies $\|x_\tau - x\| \to 0$ as $\tau \to 0$. Then $\frac{x_\tau}{\alpha + \tau} \in U$ and
\begin{equation}
	\frac{\norm{\bsA x}}{\norm{x}} = \lim_{\tau \downarrow 0} \frac{\norm{\bsA \frac{x_{\tau}}{\alpha + \tau}}}{\norm{\frac{x_{\tau}}{\alpha + \tau}}} \geq \inf_{x \in U}	\frac{\norm{\bsA x}}{\norm{x}},
	\label{eq:infU1}
\end{equation}
which, upon minimizing the left-hand side of \prettyref{eq:infU1} over $x \in V'$, implies the desired \prettyref{eq:infU}.
In view of \prettyref{eq:infU2}, \prettyref{eq:affsigma} holds with equality since $\bsA \Psi$ is a $k \times (m+1)$ random matrix with \iid normal entries of zero mean and variance $\frac{1}{n}$.\footnote{Note that the entries in the ensemble in \prettyref{lmm:edelman} have variance inversely proportional to the number of columns.} If $v \in V$, then \prettyref{eq:affsigma} holds with equality and $m+1$ replaced by $m$. The proof is then complete because $m \mapsto F_{k,m}(t)$ is decreasing.
\end{proof}

\begin{lemma}
	Let $T$ be a union of $N$ affine subspaces of $\reals^n$ with dimension not exceeding $m$. Let $\prob{X^n \in T} \geq 1 -\epsilon$. Let $\bsA$ be defined in \prettyref{lmm:affine} independent of $X^n$. Then 
	\begin{equation}
\prob{X^n \in T, \inf_{y \in T \backslash\{X^n\}} \frac{\norm{\bsA (y - X^n)}}{\norm{y-X^n}} \geq t} \geq 1-\epsilon',
	\label{eq:lip1}
\end{equation}
where \begin{equation}
	\epsilon' = \epsilon + N F_{k,m}\big(R^{-\frac{1}{2}}t\big).
	\label{eq:epsilonp}
\end{equation}
Moreover, there exists a subset $E \subset \reals^{k \times n}$ with $\prob{\bsA \in E} \geq 1 - \sqrt{\epsilon'}$, such that for any $\bfK \in E$, there exists a Lipschitz continuous function $g_{\bfK}: \reals^k \to \reals^n$ with $\Lip(g_{\bfK}) \leq \frac{1}{t}$ and
	\begin{equation}
	\prob{g_{\bfA}(\bfA X^n) \neq X^n} \geq 1-\sqrt{\epsilon'}.
	\label{eq:lip11}
\end{equation}
	\label{lmm:lip1}
\end{lemma}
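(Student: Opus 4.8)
\emph{Approach.} I would establish the two displayed claims in order: first \prettyref{eq:lip1}, by a translate-and-union-bound reduction to \prettyref{lmm:affine}; then \prettyref{eq:lip11}, by constructing for each ``good'' matrix a bi-Lipschitz inverse on its set of injectivity, extending it through Kirszbraun's theorem, and finally invoking Markov's inequality to pass from an average guarantee to a guarantee valid for most matrices. Write $T=\bigcup_{i=1}^{N}V_i$ with each $V_i$ affine of dimension at most $m$. Since $\bsA\indep X^n$, condition on $X^n=x\in T$: the displacement set is $T-x=\bigcup_{i=1}^{N}(V_i-x)$, again a union of $N$ affine subspaces of dimension at most $m$, and
\[
\inf_{y\in T\setminus\{x\}}\frac{\norm{\bsA(y-x)}}{\norm{y-x}}=\min_{1\le i\le N}\ \inf_{z\in(V_i-x)\setminus\{0\}}\frac{\norm{\bsA z}}{\norm{z}},
\]
with the convention $\inf\emptyset=+\infty$. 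Embedding each $V_i-x$ into an $m$-dimensional affine subspace (which only decreases the infimum) and applying \prettyref{lmm:affine} gives $\prob{\inf_{z\in(V_i-x)\setminus\{0\}}\norm{\bsA z}/\norm{z}\le t}\le F_{k,m}(R^{-1/2}t)$; a union bound over $i$ then yields $\prob{\inf_{y\in T\setminus\{x\}}\norm{\bsA(y-x)}/\norm{y-x}<t}\le N\,F_{k,m}(R^{-1/2}t)$ for every $x\in T$. Integrating this against the law of $X^n$ and using $\prob{X^n\in T}\ge 1-\epsilon$ gives \prettyref{eq:lip1} with $\epsilon'=\epsilon+N F_{k,m}(R^{-1/2}t)$.

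\emph{Decoder for a fixed matrix.} For a deterministic $\bfK\in\reals^{k\times n}$ let $T_{\bfK}$ be the set of $x\in T$ with $\inf_{y\in T\setminus\{x\}}\norm{\bfK(y-x)}/\norm{y-x}\ge t$, so that \prettyref{eq:lip1} says exactly $\prob{X^n\in T_{\bsA}}\ge 1-\epsilon'$. On $T_{\bfK}$ the linear map $x\mapsto\bfK x$ is injective, and its inverse $\bfK(T_{\bfK})\to T_{\bfK}$ is $\frac{1}{t}$-Lipschitz because $\norm{x-x'}\le\frac{1}{t}\norm{\bfK x-\bfK x'}$ for all $x,x'\in T_{\bfK}$. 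As $\reals^k$ and $\reals^n$ are Hilbert spaces, Kirszbraun's theorem \cite[Theorem 1.31]{NLFA.Schwartz} extends this inverse to a map $g_{\bfK}\colon\reals^k\to\reals^n$ with $\Lip(g_{\bfK})\le\frac{1}{t}$ and $g_{\bfK}(\bfK x)=x$ for every $x\in T_{\bfK}$; consequently $\prob{g_{\bfK}(\bfK X^n)\ne X^n}\le\prob{X^n\notin T_{\bfK}}$.

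\emph{Passing to most matrices.} Put $p(\bfK)=\prob{X^n\in T_{\bfK}}$. By Fubini (using $\bsA\indep X^n$), $\expect{p(\bsA)}=\prob{X^n\in T_{\bsA}}\ge 1-\epsilon'$, hence $\expect{1-p(\bsA)}\le\epsilon'$, and Markov's inequality gives $\prob{1-p(\bsA)>\sqrt{\epsilon'}}\le\sqrt{\epsilon'}$. Therefore $E=\{\bfK\colon p(\bfK)\ge 1-\sqrt{\epsilon'}\}$ satisfies $\prob{\bsA\in E}\ge 1-\sqrt{\epsilon'}$, and by the previous paragraph every $\bfK\in E$ admits a decoder $g_{\bfK}$ with $\Lip(g_{\bfK})\le\frac{1}{t}$ and $\prob{g_{\bfK}(\bfK X^n)\ne X^n}\le\sqrt{\epsilon'}$, which is \prettyref{eq:lip11} (the matrix there is a generic $\bfK\in E$, and the bound is on the error probability; one may assume $\epsilon'<1$, else the claim is vacuous).

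\emph{Main obstacle.} The union bound, the Fubini step and Markov's inequality are routine. The substantive point is the combination of (i) reading off a globally $\frac{1}{t}$-Lipschitz inverse from mere injectivity of $\bfK$ on $T_{\bfK}$ and (ii) extending that inverse to all of $\reals^k$ without enlarging the Lipschitz constant; step (ii) is exactly where the $\ell_2$/Hilbert-space structure is essential, since Kirszbraun's theorem fails for general $\ell_p$ norms (cf. \prettyref{rmk:hilbert}). A secondary technicality is the measurability of $\bfK\mapsto p(\bfK)$ (equivalently of $\bfK\mapsto T_{\bfK}$), needed to legitimize the Fubini/Markov step; this is standard because the defining infimum can be restricted to a countable dense subset of $T$.
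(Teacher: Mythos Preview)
Your proof is correct and follows essentially the same route as the paper: condition on $X^n=x\in T$, apply \prettyref{lmm:affine} to each of the $N$ translated pieces $V_i-x$ together with a union bound to get \prettyref{eq:lip1}; then for each matrix $\bfK$ build a $\frac{1}{t}$-Lipschitz inverse on the good set $T_{\bfK}$ (the paper calls it $U_{\bfK}$), extend it via Kirszbraun, and use Fubini plus Markov to extract the high-probability set $E$. Your added remarks on the $\inf\emptyset=+\infty$ convention, on embedding lower-dimensional $V_i-x$ into $m$-dimensional affine subspaces, and on the measurability of $\bfK\mapsto p(\bfK)$ are sound refinements; you also correctly read \prettyref{eq:lip11} as an upper bound on the error probability (the displayed inequality sign in the statement is a typo, as your parenthetical notes).
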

\begin{proof}
By the independence of $X^n$ and $\bsA$,
	\begin{align}
\prob{X^n \in T, \inf_{y \in T \backslash\{X\}} \frac{\norm{\bsA (y - X)}}{\norm{y-X}} \geq t}	
= & ~ \int_{T} P_{X^n}(\diff x)  \prob{\inf_{z \in (T - x) \backslash\{0\}} \frac{\norm{\bsA z}}{\norm{z}} \geq t}	\\
\geq & ~ \prob{X^n \in T} (1 - N  F_{k,m}\big(R^{-\frac{1}{2}}t\big) ) \label{eq:count1}\\
\geq & ~ 1 - \epsilon'.
\end{align}
where \prettyref{eq:count1} follows by applying \prettyref{lmm:affine} to each affine subspace in $T-x$ and the union bound. To prove \prettyref{eq:lip11}, denote by $p(\bfK)$ the probability in the left-hand side of \prettyref{eq:lip1} conditioned on the random matrix $\bsA$ being equal to $\bfK$. By Fubini's theorem and Markov's inequality,
\begin{align}
\prob{p(\bsA) \geq 1 - \sqrt{\epsilon'}} \geq 1 - \sqrt{\epsilon'}.
\end{align}
Put $E = \{\bfK: p(\bfK) \geq 1 - \sqrt{\epsilon'}\}$. For each $\bfK \in E$, define 
\begin{equation}
U_{\bfK} = \sth{x \in T: \inf_{y \in T \backslash\{x\}} \frac{\norm{\bfK (y-x)}}{\norm{y-x}} \geq t} \subset T.
	\label{eq:UK}
\end{equation}
Then, for any $(x, y) \in U_{\bfK}^2$, we have
\begin{equation}
	\norm{\bfK(x-y)} \geq t \norm{x-y},
	\label{eq:invlip}
\end{equation}
which implies that $\restrict{\bfK}{U_{\bfK}}$, the linear mapping $\bfK$ restricted on the set $U_{\bfK}$, is injective. Moreover, its inverse $g_{\bfK}: \bfK(U_{\bfK}) \to U_{\bfK}$ is $\frac{1}{t}$-Lipschitz. By Kirszbraun's theorem \cite[2.10.43]{federer}, $g_{\bfK}$ can be extended to a Lipschitz function on the whole space $\reals^k$ with the same Lipschitz constant. For those $\bfK \notin E$, set $g_{\bfK} \equiv 0$.
Since $\prob{X^n \in U_{\bfK}} \geq 1 - \sqrt{\epsilon'}$ for all $\bfK \in E$, we have
\begin{equation}
\prob{g_{\bfK}(\bfK X^n) \neq X^n} \geq \prob{X^n \in U_{\bfA}, \bfA \in E} \geq  1-\sqrt{\epsilon'},
\end{equation}
completing the proof of the lemma.
\end{proof}

\subsection{Proofs of results in \prettyref{sec:noiseless}}
	\label{sec:pf.a}

	\begin{proof}[Proof of \prettyref{thm:lipconv.f}]
To prove the left inequality in \prettyref{eq:lipconv.f}, denote 
\begin{equation}
C = \{f(x^n) \in \reals^n: g(f(x^n)) = x^n\} \subset \reals^k.
\end{equation}
 Then
\begin{align}
k
\geq & ~ \uBdim(C) \label{eq:m1}\\
\geq & ~ \uBdim(g(C))	\label{eq:m2}\\
\geq & ~ \uBdim^{\epsilon}(P_{X^n}),\label{eq:m3}
\end{align}
where 
\begin{itemize}
	\item \prettyref{eq:m1}: Minkowski dimension never exceeds the ambient dimension;
	\item \prettyref{eq:m2}: Minkowski dimension never increases under Lipschitz mapping \cite[Exercise 7.6, p.108]{mattila};
	\item \prettyref{eq:m3}: by $\prob{X^n \in g(C)} \geq 1- \epsilon$ and \prettyref{eq:ubdim2}.
\end{itemize}

It remains to prove the right inequality in \prettyref{eq:lipconv.f}. By definition of $\uBdim^{\epsilon}$, for any $\delta > 0$, there exists $E$ such that $P_{X^n}(E) \geq 1- \epsilon$ and $\uBdim(E) \geq \uBdim^{\epsilon}(P_{X^n}) - \delta$. 
Since $P_{X^n}$ can be written as a convex combination of $P_{X^n|X^n \in E}$ and $P_{X^n|X^n \notin E}$, applying \cite[Theorem 2]{renyi.ITtrans} yields
\begin{align}
\od(X^n) 
\leq  & ~ \od(P_{X^n|X^n \in E}) P_{X^n}(E) + \od(P_{X^n|X^n \notin E}) (1-P_{X^n}(E))\\
\leq  & ~ \uBdim^{\epsilon}(P_{X^n}) - \delta + \epsilon n, \label{eq:lq}
\end{align}
where \prettyref{eq:lq} holds because the information dimension of any distribution is upper bounded by the Minkowski dimension of its support \cite{falconer2}. By the arbitrariness of $\delta$, the desired result follows.
\end{proof}

	\begin{proof}[Proof of \prettyref{thm:linear.mix}]
Let $P_X$ be a discrete-continuous mixture as in \prettyref{eq:dcmix}. Equation \prettyref{eq:rlinear.dc} is proved in \cite[Theorem 6]{renyi.ITtrans}. The achievability part follows from \prettyref{thm:dimB.finite}, since, with high probability, the input vector is concentrated on a finite union of affine subspaces whose Minkowski dimension is equal to the maximum dimension of those subspaces. The converse part is proved using Steinhaus' theorem \cite{steinhaus}.
	
	It remains to establish the achievability part of \prettyref{eq:rll.dc}: $\rll(X, \epsilon) \leq \gamma$. 
	Fix $R>\gamma$. Fix $\delta, \delta' > 0$ arbitrarily small.
 In view of \prettyref{lmm:lip1}, to prove the achievability of $R$, it suffices to show that, with high probability, $X^n$ lies in the union of exponentially many affine subspaces whose dimensions do not exceed $n R$.

To this end, let $W_i = \indc{X_i \notin \calA}$, where $\calA$ denotes the collection of all atoms of $P_{\rm d}$, which is, by definition, a countable subset of $\reals$. Then $\{W_i\}$ is a sequence of \iid~binary random variables with expectation $\gamma$.
By the weak law of large numbers,
\begin{align}
\frac{1}{n} |\spt{X^n}|
 = \frac{1}{n} \sum_{i=1}^n W_i \, \toprob \gamma. 	\label{eq:WLLN.spt}
\end{align}
where the \emph{generalized support} of $x^n$ is defined as
\begin{equation}
 \spt{x^n} = \{i = \ntok{1}{n}: x_i \notin \calA\}.
\label{eq:spt}
\end{equation}
For each $k \geq 1$, define
\begin{equation}
	\sfT_k = \sth{z^k \in \calA^k: \frac{1}{k} \sum_{i=1}^k \log \frac{1}{P_{\rm d}(z_i)} \leq H(P_{\rm d})+\delta'}.
	\label{eq:Tk}
\end{equation}
Since $H(P_{\rm d}) < \infty$, we have $|\sfT_k| \leq \exp((H(P_{\rm d})+\delta')k)$. Moreover, $P_{\rm d}^k(\sfT_k) \geq 1- \epsilon$ for all sufficiently large $k$, by the weak law of large numbers.

Let $\sft(x^n)$ denote the discrete part of $x^n$, \ie, the vector formed by those $x_i\in \calA$ in increasing order of $i$. Then $\sft(x^n) \in \calA^{n-|\spt{x^n}|}$. Let
\begin{align}
C_n 
= & ~ \left\{ x^n \in \reals^n: \big| |\spt{x^n}| - \gamma n\big| \leq \delta n, \, \sft(x^n) \in \sfT_{n-|\spt{x^n}|} \right\}\\
= & ~ \bigcup_{\substack{S \subset \{\ntok{1}{n}\} \\ ||S| - \gamma n| \leq \delta n}} \bigcup_{z  \in \sfT_{n - |S|}}  \left\{x^n \in \reals^n: \spt{x^n} = S, \, \sft(x^n) = z \right\}. \label{eq:subsets}
\end{align}
Note that each of the subsets in the right-hand side of \prettyref{eq:subsets} is an affine subspace of dimension no more than $(\gamma + \delta)n$. Therefore $C_n$ consists of $N_n$ affine subspaces, with 
\begin{align}
N_n 
\leq & ~ \sum_{k=\floor{(\gamma-\delta) n}}^{\ceil{(\gamma+\delta) n}} \binom{n}{k} |\sfT_{n-k}|\\
\leq & ~ \sum_{k=\floor{(\gamma-\delta) n}}^{\ceil{(\gamma+\delta) n}} \binom{n}{k} \exp((H(P_{\rm d})+\delta')(n-k)),
\end{align}
hence
\begin{equation}
	\limsup_{n\to\infty} \frac{1}{n} \log N_n \leq (H(P_{\rm d}) + \delta') (1-\gamma + \delta) + \max\{h(\gamma+\delta), h(\gamma-\delta)\}.
	\label{eq:logNn}
\end{equation}
Moreover, by \prettyref{eq:WLLN.spt}, for sufficiently large $n$,
\begin{align}
\prob{X^n \in C_n}
= & ~ \sum_{||S| - \gamma n| \leq \delta n} \prob{X^n \in C_n, \spt{X^n} = S} \\
= & ~ \sum_{||S| - \gamma n| \leq \delta n} \prob{\spt{X^n} = S} P_{\rm d}^{n-|S|}(\sfT_{n-|S|}) \\
\geq & ~ \prob{\big| |\spt{X^n}| - \gamma n\big| \leq \delta n} (1-\epsilon) \\
\geq & ~ 1-2\epsilon.
\end{align}
To apply \prettyref{lmm:affine}, it remains to select a sufficiently small but fixed $t$, such that 
\begin{equation}
N_n F_{R n,(\gamma + \delta)n}\big(R^{-\frac{1}{2}}t\big) = o(1)
\end{equation}
as $n\to\infty$. This is always possible, in view of \prettyref{eq:logNn} and \prettyref{lmm:edelman}, by choosing $t > 0$ sufficiently small such that
\begin{equation}
\frac{R (1-\alpha)}{2} \log \frac{R(1-\alpha)^2}{\eexp t^2 } + \frac{R \alpha}{2} \log \alpha
	> (H(P_{\rm d}) + \delta') (1-\gamma + \delta) + \max\{h(\gamma+\delta), h(\gamma-\delta)\},
	\label{eq:lipc}
\end{equation}
where $\alpha = \frac{\gamma + \delta}{R}$.
By the arbitrariness of $\delta$ and $\delta'$, the proof of $\rll(X, \epsilon) \leq \gamma$ is complete. Finally, by \prettyref{thm:lipconv.f}, the Lipschitz constant of the corresponding decoder is upper bounded by $\frac{1}{t}$, which, according to \prettyref{eq:lipc}, can be chosen arbitrary close to the right-hand side of \prettyref{eq:lipconst} by sending both $\delta$ and $\delta'$ to zero,
completing the proof of \prettyref{eq:lipconst}.
\end{proof}



\subsection{Proofs of results in \prettyref{sec:noisy}}
	\label{sec:pf.b}
	\begin{proof}[Proof of \prettyref{thm:Rstar}]
The proof of \prettyref{eq:Rstar} is based on the low-distortion asymptotics of $R_X(D)$ \cite{dembo}:
\begin{gather}
\limsup_{D \downarrow 0} \frac{R_X(D)}{\frac{1}{2} \log \frac{1}{D}} = \od(X), \label{eq:urdim1}
\end{gather}

\emph{Converse}: Fix $R > \Rstar(X)$. By definition, there exits $a > 0$ such that $\Dstar(X, R,\sigma^2) \leq a \sigma^2$ for all $\sigma^2 > 0$. By \prettyref{eq:Dstar.sepa}, 
\begin{equation}
	\frac{R}{\frac{1}{2} \log(1+\sigma^{-2})} \geq R_X(a \sigma^2).
	\label{eq:R12a}
\end{equation}
Dividing both sides by $\frac{1}{2} \log \frac{1}{a \sigma^2}$ and taking $\limsup_{\sigma^2 \to 0}$, we obtain $R > \od(X)$ in view of \prettyref{eq:urdim1}. By the arbitrariness of $R$, we have $\Rstar(X) > \od(X)$.

\emph{Achievability}: Fix $\delta > 0$ arbitrarily and let $R = \od(X) + 2 \delta$. We show that $R \leq \Rstar(X)$, \ie, worst-case noise sensitivity is finite. By \prettyref{rmk:suplim}, this is equivalent to achieving \prettyref{eq:Rstar.equiv}. By \prettyref{eq:urdim1}, there exists $D_0 > 0$ such that for all $D < D_0$,
\begin{equation}
	R_X(D) \leq \frac{\od(X) + \delta}{2} \log \frac{1}{D}.
	\label{eq:R32}
\end{equation}
By \prettyref{thm:distortion.property}, $\Dstar(X, R,\sigma^2) \downarrow 0$ as $\sigma^2 \downarrow 0$. Therefore there exists $\sigma_0^2 > 0$, such that $\Dstar(X, R,\sigma^2) < D_0$ for all $\sigma^2 < \sigma_0^2$. In view of \prettyref{eq:Dstar.sepa} and \prettyref{eq:R32}, we have
\begin{equation}
	\frac{d+2\delta}{2} \log \frac{1}{\sigma^2} = R_X(\Dstar(X, R,\sigma^2)) \leq \frac{\od(X) + \delta}{2} \log \frac{1}{\Dstar(X, R,\sigma^2)},
\end{equation}
\ie,
\begin{equation}
	\Dstar(X, R,\sigma^2) \leq \sigma^{2 \frac{\od(X)+2\delta}{\od(X)+\delta}}
\end{equation}
holds for all $\sigma^2 < \sigma_0^2$. This obviously implies the desired \prettyref{eq:Rstar.equiv}.

We finish the proof by proving \prettyref{eq:Dstar.mix} and \prettyref{eq:xstar.mix}. The low-distortion asymptotic expansion of the rate-distortion function of a discrete-continuous mixture with mean-square error distortion is found in \cite[Corollary 1]{gyorgy}, which refines \prettyref{eq:dembo}:\footnote{In fact $h(\gamma) + (1-\gamma) H(P_{\rm d}) + \gamma h(P_{\rm c})$ is the $\gamma$-dimensional entropy of \prettyref{eq:dcmix} defined by \renyi \cite[Equation (4) and Theorem 3]{renyi}.} as $D \downarrow 0$,
\begin{align}
R_X(D)
= & ~ \frac{\gamma}{2} \log \frac{\gamma}{2 \pi \eexp D} + h(\gamma) + (1-\gamma) H(P_{\rm d}) + \gamma h(P_{\rm c}) + o(1) \\
= & ~ \frac{\gamma}{2} \log \frac{\gamma \, \var(P_{\rm c})}{D} + h(\gamma) + (1-\gamma) H(P_{\rm d}) - \gamma \calD(P_{\rm c}) + o(1) \label{eq:RD.mix}
\end{align}
where $P_X$ is given by \prettyref{eq:dcmix}.
Actually \prettyref{eq:RD.mix} has a natural interpretation: first encode losslessly the \iid Bernoulli sequence $\{\sfA,\sfD,\sfD,\sfD,\sfA,\ldots\}$, where $\sfD$ and $\sfA$ indicate the source realization is in the discrete alphabet or not, respectively. Then use lossless and lossy optimal encoding of $P_{\rm d}$ and $P_{\rm c}$ for the discrete and continuous symbols respectively. What is interesting is that this strategy turns out to be optimal for low distortion.
 Plugging \prettyref{eq:RD.mix} into \prettyref{eq:Dstar.sepa} gives \prettyref{eq:Dstar.mix}, which implies \prettyref{eq:xstar.mix} as a direct consequence.
\end{proof}

\begin{proof}[Proof of \prettyref{thm:noisy.dc}]
Let $R > \gamma$. We show that the worst-case noise sensitivity $\zL(X,R)$ under Gaussian random sensing matrices is finite. We construct a suboptimal estimator based on the Lipschitz decoder in \prettyref{thm:linear.mix}.\footnote{Since we assume that $\var X = 1$, the finite-entropy condition of \prettyref{thm:linear.mix} is satisfied automatically.} Let $\bsA_n$ be a $k \times n$ Gaussian sensing matrix and $g_{\bsA_n}$ the corresponding $L_R$-Lipschitz decoder, such that $k=R n$ and $\prob{E_n} = o(1)$ where $E_n = \{g_{\bsA_n}(\bsA_n X^n) \neq X^n\}$ denotes the error event. Without loss of generality, we assume that $g_{\bsA_n}(0) = 0$. Fix $\tau > 0$. Then
	\begin{align}
& ~ \expect{\big\|g_{\bsA_n}(\bsA_n X^n + \sigma N^k) - X^n\big\|^2} \nonumber \\
\leq & ~ \expect{\big\|g_{\bsA_n}(\bsA_n X^n + \sigma N^k) - X^n\big\|^2 \indc{\comp{E_n}}} \nonumber \\
& ~ + 2 L_R^2 \expect{\big\|\bsA_n X^n + \sigma N^k \big\|^2 \indc{E_n}} + 2 \expect{\big\|X^n\big\|^2\indc{E_n}} \label{eq:mk} \\
\leq & ~ k L_R^2 \sigma^2 + \tau n (2 L_R^2 + 1) \prob{E_n} + 2 \expect{\big\|X^n\big\|^2 \indc{\norm{X^n}^2 > \tau n}} \nonumber \\
& ~ + 2 L_R^2 \expect{\big\|\bsA_n X^n + \sigma N^k \big\|^2 \indc{\norm{\bsA_n X^n + \sigma N^k}^2 > \tau n}}  \label{eq:mk2}
\end{align}
Dividing both sides of \prettyref{eq:mk2} by $n$ and sending $n \to \infty$, we have: for any $\tau > 0$,
\begin{align}
& ~  \limsup_{n \to \infty} \frac{1}{n} \expect{\big\|g_{\bsA_n}(\bsA_n X^n + \sigma N^k) - X^n\big\|^2} \nonumber \\
\leq 	& ~  R L_R^2 \sigma^2 + 2 \sup_n  \frac{1}{n} \expect{\big\|X^n\big\|^2 \indc{\norm{X^n}^2 > \tau n}} \nonumber \\
& ~  + 2 L_R^2 \sup_n \frac{1}{n} \expect{\big\|\bsA_n X^n + \sigma N^k \big\|^2 \indc{\norm{\bsA_n X^n + \sigma N^k}^2 > \tau n}}. 
\label{eq:mk3}
\end{align}
Since $\frac{1}{n} \norm{X^n}^2 \tolp{2} 1$ and $\frac{1}{n} \norm{\bsA_n X^n + \sigma N^k}^2 \tolp{2} R(1+\sigma^2)$, which implies uniform integrability, the last two terms on the right-hand side of \prettyref{eq:mk3} vanish as $\tau \to \infty$. This completes the proof of $\zL(X,R) \leq R L^2_R$.
\end{proof}

\begin{proof}[Proof of \prettyref{thm:RLstar}]
\emph{Achievability}:	We show that $\RL(X) \leq \oscrD(X)$. Fix $\delta > 0$ arbitrarily and let $R = \oscrD(X) + 2 \delta$. Set $s = R \,\sigma^{-2}$ and $\beta = \eta s$. Define
\begin{align}
u(\beta) = & ~ \beta \, \mmse(X, \beta) - R\pth{1 - \frac{\beta}{s} } \label{eq:ubeta} \\
f(\beta) = &~ 	I(X; \sqrt{\beta} X + N) - \frac{R}{2} \log \beta \label{eq:fbeta}\\
g(\beta) = &~ 	f(\beta) + \frac{R \beta}{2 s}, \label{eq:gbeta}
\end{align}
which satisfy the following properties:
\begin{enumerate}
	\item Since $\mmse(X,\cdot)$ is smooth on $(0,\infty)$ \cite[Proposition 7]{mmse.analytic}, $u, f$ and $g$ are all smooth functions on $(0,\infty)$. Additionally, since $\expect{X^2} < \infty$, $u$ is also right-continuous at zero. In particular, by the I-MMSE relationship \cite{guo.immse},
	\begin{equation}
	\dot{f}(\beta) = \frac{\beta \, \mmse(X,\beta) - R}{2\beta}.
	\label{eq:fprime}
\end{equation}
\item For all $0 \leq \beta \leq s$,
\begin{equation}
	f(\beta) \leq g(\beta) \leq f(\beta) + \frac{R}{2}.
	\label{eq:fgbeta}
\end{equation}
\item 
Recalling the scaling law of mutual information in \prettyref{eq:i.renyi}, we have
\begin{equation}
	\limsup_{\beta \to \infty} \frac{f(\beta)}{\log \beta} = \frac{\od(X) - \oscrD(X) - 2 \delta}{2} \leq -\delta,
	\label{eq:fbeta1}
\end{equation}
where the last inequality follows from the sandwich bound between information dimension and MMSE dimension in \prettyref{eq:mmse.renyi}.
\end{enumerate}

Let $\beta_{s}$ be the root of $u$ in $(0, s)$ which minimizes $g(\beta)$. Note that $\beta_{s}$ exists since
$u(0) = -R < 0$, $u(s) = s \, \mmse(X,s) > 0$ and $u$ is continuous on $[0,\infty)$. 
According to \prettyref{eq:om.mmse} in the replica symmetry postulate, 
\begin{equation}
	\DL(X,R,\sigma^2) = \mmse(X, \eta_{s} s),
	\label{eq:DL2}
\end{equation}
where $\eta_{s}$ is the solution of \prettyref{eq:eta} in $(0,1)$ which minimizes \prettyref{eq:eta.min}, denoted by
\begin{equation}
E(\eta) = I(X;\sqrt{\eta s} X+N) + \frac{R}{2}(\eta - 1 - \log \eta). 
	\label{eq:Eta}
\end{equation}
We claim that for any fixed $s$, 
\begin{equation}
\eta_s = \frac{\beta_s}{s}.	
	\label{eq:etab}
\end{equation}
To see this, note that the solutions to \prettyref{eq:eta} are precisely the roots of $u$ scaled by $\frac{1}{s}$. Moreover, since $E(\eta) - g(\beta) = \frac{R}{2}(\log s - 1)$, for any set $A \subset (0,1)$, we have
\begin{equation}
\argmin_{\eta \in A} E(\eta) = \frac{1}{s} \argmin_{\beta \in s A} g(\beta),
\end{equation}
resulting in \prettyref{eq:etab}. Next we focus on the behavior of $\beta_s$ as $s$ grows.

Proving the achievability of $R$ amounts to showing that
\begin{equation}
	\limsup_{\sigma \to 0} \frac{\DL(X,R,\sigma^2)}{\sigma^2} < \infty,
\end{equation}
which, in view of \prettyref{eq:DL2} and \prettyref{eq:etab}, is equivalent to showing that $\beta_s$ grows at least linearly as $s \to \infty$, \ie,
\begin{equation}
	\liminf_{s \to \infty} \frac{\beta_{s}}{s} > 0.
	\label{eq:betao}
\end{equation}
By the definition of $\oscrD(X)$ and \prettyref{eq:fbeta1}, there exists $B > 0$ such that for all $\beta > B$,
\begin{equation}
	\beta \, \mmse(X, \beta) < R - \delta
	\label{eq:pickB1}
\end{equation}
and
\begin{equation}
	f(\beta) \leq -\frac{\delta}{4}\log \beta.
	\label{eq:pickB2}
\end{equation}
In the sequel we focus on sufficiently large $s$. Specifically, we assume that 
\begin{equation}
	s > \frac{R}{\delta} \max\sth{B, \eexp^{-\frac{4}{\delta} \pth{K - \frac{R}{2}}}},
	\label{eq:pickgamma}
\end{equation}
where $K \triangleq \min_{\beta \in [0,B]} g(\beta)$ is finite by the continuity of $g$.


Let 
\begin{equation}
\beta_0 = \frac{\delta s}{R}.
	\label{eq:beta0}
\end{equation}
Then $\beta_0 > B$ by \prettyref{eq:pickgamma}.
By \prettyref{eq:pickB1}, $u(\beta_0) = \beta_0 \, \mmse(X,\beta_0) - R + \delta < 0$. Since $u(s) >0$, by the continuity of $u$ and the intermediate value theorem, there exists $\beta_0 \leq \beta^* \leq s$, such that $u(\beta^*) = 0$. By \prettyref{eq:pickB1}, 
\begin{equation}
\dot{f}(\beta) \leq -\frac{\delta}{2\beta} < 0, \quad  \forall \beta > B. 
	\label{eq:fbeta2}
\end{equation}
Hence $f$ strictly decreases on $(B,\infty)$. Denote the root of $u$ that minimizes $f(\beta)$ by $\beta_{s}'$, which must lie beyond $\beta^*$. Consequently, we have
\begin{equation}
B < \frac{\delta s}{R} = \beta_0 \leq \beta^* \leq \beta_{s}'.
	\label{eq:betagp}
\end{equation}
Next we argue that $\beta_{s}$ cannot differ from $\beta_{s}'$ by a constant factor. In particular, we show that
\begin{equation}
	\beta_{s} \geq \eexp^{-\frac{R}{\delta}} \beta_{s}',
	\label{eq:tworoots}
\end{equation}
which, combined with \prettyref{eq:betagp}, implies that
\begin{equation}
	\frac{\beta_{s}}{s} \geq \frac{\delta}{R} \eexp^{-\frac{R}{\delta}}
\end{equation}
for all $s$ that satisfy \prettyref{eq:pickgamma}. This yields the desired \prettyref{eq:betao}. We now complete the proof by showing \prettyref{eq:tworoots}. First, we show that that $\beta_{s} > B$. This is because
\begin{align}
g(\beta_{s})
\leq & ~ 	g(\beta_{s}') \label{eq:pck1}\\
= & ~ f(\beta_{s}') + \frac{R \beta_{s}'}{2 s} \label{eq:pck2}	\\
\leq & ~ f(\beta_0) + \frac{R }{2} 	\label{eq:pck3}\\
\leq & ~ 	-\frac{\delta}{4} \log \frac{\delta s}{R} + \frac{R}{2} \label{eq:pck4}\\
< & ~ K \\
= & ~ \min_{\beta \in [0,B]} g(\beta). \label{eq:pck5}
\end{align}
where 
\begin{itemize}
	\item \prettyref{eq:pck1}: by definition, $\beta_{s}$ and $\beta_{s}'$ are both roots of $u$ and $\beta_{s}$ minimizes $g$ among all roots;
	\item \prettyref{eq:pck3}: by \prettyref{eq:betagp} and the fact that $f$ is strictly decreasing on $(B,\infty)$;
	\item \prettyref{eq:pck4}: by \prettyref{eq:pickB2} and \prettyref{eq:beta0};
	\item \prettyref{eq:pck5}: by \prettyref{eq:pickgamma}.
\end{itemize}
Now we prove \prettyref{eq:tworoots} by contradiction. Suppose $\beta_{s} < \eexp^{-\frac{R}{\delta}}\beta_{s}'$. Then
\begin{align}
g(\beta_{s}') - g(\beta_{s})
= & ~ \frac{R}{2 s} (\beta_{s}' - \beta_{s}) +  f(\beta_{s}') - f(\beta_{s}) \label{eq:pk1}\\
\leq & ~ \frac{R}{2} + \int_{\beta_{s}}^{\beta_{s}'} \dot{f}(\tau) \diff \tau \label{eq:pk2}	\\
\leq & ~ \frac{R}{2} - \frac{\delta}{2} \log \frac{\beta_{s}'}{\beta_{s}} \label{eq:pk3}	\\
< & ~ 0,
\end{align}
contradicting \prettyref{eq:pck1}, where \prettyref{eq:pk3} is due to \prettyref{eq:fbeta2}.

\emph{Converse:} We show that $\RL(X) \geq \uscrD(X)$. Recall that $\RL(X)$ is the minimum rate that guarantees that the reconstruction error $\DL(X,R,\sigma^2)$ vanishes according to $O(\sigma^2)$ as $\sigma^2 \to 0$. In fact, we will show a stronger result: as long as $\DL(X,R,\sigma^2) = o(1)$ as $\sigma^2 \to 0$, we have $R \geq \uscrD(X)$. By \prettyref{eq:DL2}, $\DL(X,R,\sigma^2) = o(1)$ if and only if $\beta_{s} \to \infty$. Since $u(\beta_{s}) = 0$, we have
\begin{align}
R
\geq & ~ \limsup_{s \to \infty} R\pth{1 - \frac{\beta_{s}}{s} } \\
= & ~ \limsup_{s \to \infty} \beta_{s} \, \mmse(X, \beta_{s}) \label{eq:pk5}	\\
\geq & ~ \liminf_{\beta \to \infty} \beta \, \mmse(X, \beta) 	\\
= & ~ \uscrD(X).
\end{align}

\emph{Asymptotic noise sensitivity:}
Finally, we prove \prettyref{eq:sens.mix}. Assume that $\scrD(X)$ exists, \ie, $\scrD(X) = d(X)$, in view of \prettyref{eq:mmse.renyi}. By definition of $\scrD(X)$, we have
\begin{equation}
	\mmse(X, \beta) = \frac{\scrD(X)}{\beta} + \smallo{\frac{1}{\beta}}, \quad \beta \to \infty.
	\label{eq:mmse.beta}
\end{equation}
As we saw in the achievability proof, whenever $R > \scrD(X)$, \prettyref{eq:betao} holds, \ie, $\eta_{s} = \Omega(1)$ as $s \to \infty$. Therefore, as $s \to \infty$, we have
\begin{equation}
	\frac{1}{\eta_{s}} = 1 + \frac{s}{R} \mmse(X,\eta_{s} s) = 1 + \frac{\scrD(X)}{\eta_{s} R}(1 + o(1)),
\end{equation}
\ie,
\begin{equation}
	\eta_{s} = 1 - \frac{\scrD(X)}{R} + o(1).
\end{equation}
By the replica symmetry postulate \prettyref{eq:om.mmse},
\begin{align}
\DL(X,R,\sigma^2)
= & ~ \mmse(X,\eta_{s} s ) \\
= & ~ \frac{1-\eta_{s}}{\eta_{s}} \sigma^2\\
= & ~ \frac{\scrD(X)}{R - \scrD(X)} \sigma^2 (1	+ o(1)).
\end{align}

\end{proof}

\begin{remark}
Note that $\beta_{s}$ is a \emph{subsequence} parametrized by $s$, which may take only a restricted subset of values. In fact, even if we impose the requirement that $\DL(X,R,\sigma^2) = O(\sigma^2)$, it is still possible that the limit in \prettyref{eq:pk5} lies strictly between $\uscrD(X)$  and $\oscrD(X)$. For example, if $X$ is Cantor distributed as defined in \prettyref{eq:cantor}, then it can be shown that the limit in \prettyref{eq:pk5} approaches the information dimension $d(X) = \log_3 2$.
\end{remark}

\begin{remark}[Multiple solutions in the replica symmetry postulate]
Solutions to \prettyref{eq:eta} in the replica symmetry postulate and 
to the following equation in $\beta$
\begin{equation}
	\beta \, \mmse(X,\beta) = R - \sigma^2 \beta.
	\label{eq:rss}
\end{equation}
differ only by a scale factor of $\frac{\sigma^2}{R}$.
Next we give an explicit example where \prettyref{eq:rss} can have \emph{arbitrarily many} solutions. Let $X$ be Cantor distributed as defined in \prettyref{eq:cantor}. According to \cite[Theorem 16]{mmse.dim.IT}, $\beta \mapsto \beta \, \mmse(X,\beta)$ oscillates in $\log_3 \beta$ with period two, as shown in \prettyref{fig:cantor} in a linear-log plot. Therefore, as $\sigma^2 \to 0$, the number of solutions to \prettyref{eq:rss} grows unbounded according to $\Theta\left(\log \frac{1}{\sigma^2}\right)$. In fact, in order for \prettyref{thm:RLstar} to hold, it is crucial that the replica solution be given by the solution that \emph{minimizes} the free energy \prettyref{eq:eta.min}. 
	\label{rmk:cantor}
\end{remark}

\begin{figure}[ht] 
	\centering
	\includegraphics[width=.6\columnwidth]{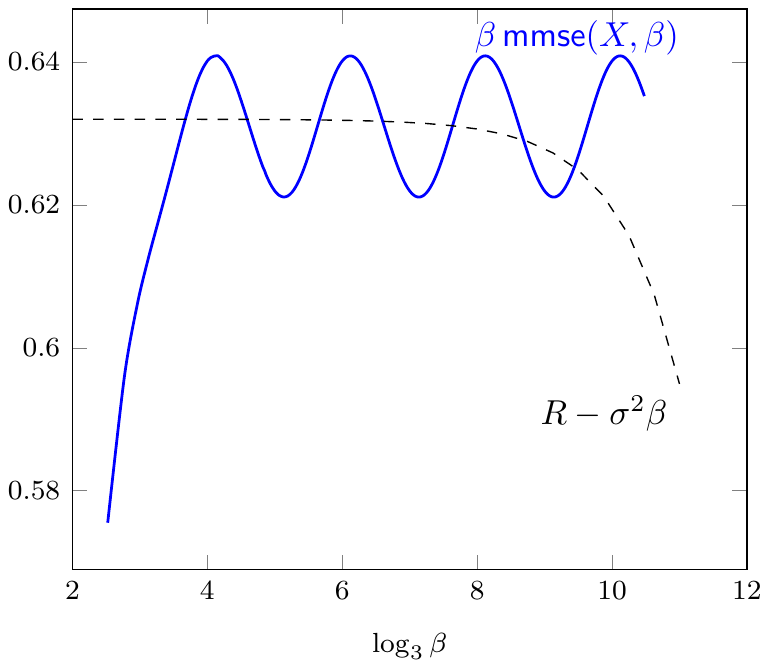}
	\caption{Multiple solutions to \prettyref{eq:rss} in the replica symmetry postulate, with Cantor distributed $X$, $R = 0.632$ and $\sigma^2=3^{-14}$.}
	\label{fig:cantor}
\end{figure}

\section{Concluding remarks}
\label{sec:conc}

In the compressed sensing literature it is common to guarantee that for any individual sparse input the matrix will likely lead to reconstruction, or, alternatively, that a single matrix will work for all possible signals. 
As opposed to this worst-case (Hamming) approach, in this paper we adopt a statistical (Shannon) framework for compressed sensing by modeling input signals as random processes rather than individual sequences.
As customary in information theory, it is advisable to initiate the study of fundamental limits assuming independent identically distributed information sources.
Naturally, this entails substantial loss of practical relevance, so generalization to sources with memory is left for future work. 
The fundamental limits apply to the asymptotic regime of large signal dimension, although a number of the results in the noiseless case are in fact non-asymptotic (see, \eg, Theorems \ref{thm:dimB.finite} and \ref{thm:lipconv.f}).

We have investigated the phase transition thresholds (minimum measurement rate) of reconstruction error probability (noiseless observations) and normalized MMSE (noisy observations) achievable by optimal nonlinear, optimal linear, and random linear encoders combined with the corresponding optimal decoders (\ie conditional mean estimates).
For discrete-continuous mixtures, which are the most relevant for compressed sensing applications, the optimal phase transition threshold is shown to be the information dimension of the input, \ie, the weight of the analog part, regardless of the specific discrete and absolutely continuous component. 
The universal optimality of random sensing matrices with non-Gaussian i.i.d.\ entries in terms of phase transition thresholds is still unknown. 
The phase-transition thresholds of popular decoding algorithms (\eg, LASSO or AMP decoders) turn out to be far from the optimal boundary. 
In a recent preprint \cite{DJM11}, it is shown that using random sensing matrices constructed from spatially coupled error-correcting codes \cite{KP10} and the corresponding AMP decoder, the information dimension can be achieved
under mild conditions, which are optimal in view of the results in \cite{renyi.ITtrans}. Designing deterministic sensing matrices that attain the optimal thresholds remains an outstanding challenge. 

In contrast to the Shannon theoretic limits of lossless and lossy compression of discrete sources,
one of the lessons drawn from the results in this paper and \cite{renyi.ITtrans}
is that compressed sensing of every (memoryless) process taking values on finite or countably infinite
alphabets can be accomplished at zero rate, as long as the observations are noiseless.
In fact, we have even shown in \prettyref{thm:dimB.finite} a non-asymptotic embodiment of this conclusion based on
a probabilistic extension of the embeddability of fractal sets.
In the case of noisy observations, the same insensitivity to the actual discrete signal distribution holds
as far as the phase transition threshold is concerned. However, in the non-asymptotic regime
(\ie  for given signal dimension and signal-to-noise-ratio) the optimum rate-distortion tradeoff
will indeed depend on the signal distribution.


In this paper we have assumed a Bayesian setup where the input is \iid with common distribution known to both the encoder and the decoder. In contrast, the minimax formulation in \cite{maleki.noise.sens,DJoM11,DJMM11} assumes that the input distribution is a discrete-continuous mixture whose  discrete component is known to be a point mass at zero, while the continuous component, \ie, the prior of the non-zero part, is unknown. Minimax analyses were carried out for LASSO and AMP algorithms \cite{maleki.noise.sens}, where the minimum and maximum are with respect to the parameter of the algorithm and the non-zero prior, respectively. The results in \prettyref{sec:compare} demonstrate that the LASSO and AMP algorithms do not attain the fundamental limit achieved by the optimal decoder in the Bayesian setup. However, it is possible to improve performance if the input distribution is known to the reconstruction algorithm. For example, the message passing decoder in \cite{DJM11} that achieves the optimal phase transition threshold is a variant of the AMP algorithm where the denoiser is replaced by the Bayesian estimator (conditional mean) of the input under additive Gaussian noise. See also \cite[Section 6.2]{DMM11how} about how to incorporate the prior information into the AMP algorithm. 


One of our main findings is \prettyref{thm:noisy.dc} which shows that \emph{\iid Gaussian} sensing matrices achieve the same phase-transition threshold as optimal nonlinear encoding, for any discrete-continuous mixture. This result is universal in the sense that it holds for arbitrary noise distributions with finite non-Gaussianness. Moreover, the fundamental limit depends on the input statistics only through the weight of the analog component, regardless of the specific discrete and continuous components. The argument used in the proof of \prettyref{thm:noisy.dc} relies crucially on the Gaussianness of the sensing matrices because of two reasons:
\begin{itemize}
	\item The upper bound on the distribution function of the least singular value in \prettyref{lmm:edelman} is a direct consequence of the upper bound on its density (due to Edelman \cite{Edelman88}), which is only known in the Gaussian case. In fact, we only need that the exponent in \prettyref{eq:sigma1} diverges as $t \to 0$. It is possible to generalize this result to other sub-Gaussian ensembles with densities by adapting the arguments in \cite[Theorem 1.1]{RV09}. However, it should be noted that in general \prettyref{lmm:edelman} does not hold for discrete ensembles (\eg Rademacher), because the least singular value always has a mass at zero with a fixed exponent;
	\item Due to the rotational invariance of the Gaussian ensemble, the result in \prettyref{lmm:affine} does not depend on the basis of the subspace.
\end{itemize}

Another contribution of this work is the rigorous proof of the phase transition thresholds for mixture distributions. Furthermore, based on the MMSE dimension results in \cite{mmse.dim.IT}, we have shown in \prettyref{sec:replica} that these conclusions coincide with previous predictions put forth on the basis of replica-symmetry heuristics.

One interesting direction is to investigate the optimal sensing matrix in a minimax sense. While our \prettyref{thm:noisy.dc} shows that optimized sensing matrices (or even non-linear encoders) do not improve the phase transition threshold for Gaussian sensing matrices, it should be interesting to ascertain whether this conclusion carries over to the minimax setup, \ie, whether it is possible to lower the minimax phase transition threshold of the noise sensitivity achieved by Gaussian sensing matrices and LASSO or AMP reconstruction algorithms computed in \cite{maleki.noise.sens} by optimizing the sensing matrix subject to the Frobenius-norm constraint in \prettyref{eq:power.tr}.


\renewcommand{\appendixname}{Appendix}
\appendices

\section{Proof of the middle inequality in \prettyref{eq:rank}}
	\label{app:middle}
		We show that for any fixed $\epsilon > 0$,
	\begin{equation}
\DLstar(X, R,\sigma^2) \leq \DL(X, R, (1+\epsilon)^2\sigma^2).	
	\label{eq:dlls}
\end{equation}
By the continuity of $\sigma^{-2} \mapsto \DLstar(X, R,\sigma^2)$ proved in \prettyref{thm:distortion.property}, $\sigma^{2} \mapsto \DLstar(X, R,\sigma^2)$ is also continuous. Therefore sending $\epsilon \downarrow 0$ in \prettyref{eq:dlls} yields the second inequality in \prettyref{eq:rank}. To show \prettyref{eq:dlls}, recall that $\bsA$ consists of \iid entries with zero mean and variance $\frac{1}{n}$. Since $k = n R$, $\frac{1}{k} \Fnorm{\bsA}^2 \toprob 1$ as $n \to \infty$, by the weak law of large numbers.  Therefore $\prob{\bsA \in E_n} \to 1$ where
\begin{equation}
	E_n \triangleq \sth{\bfA: \Fnorm{\bfA}^2 \leq k (1+\epsilon)^2}.
	\label{eq:Ens}
\end{equation}
Therefore
\begin{align}
	& ~ \DL(X,R,(1+\epsilon)\sigma^2)	\nonumber \\
	= & ~ \limsup_{n \to \infty} \frac{1}{n} \mmse\pth{X^n |\bsA  X^n + (1+\epsilon)^2\sigma^2 N^k, \bsA} 	\\
= & ~ \limsup_{n \to \infty} \frac{1}{n} \mmse\pth{X^n\Big| \frac{\bsA}{1+\epsilon}  X^n +  N^k, \bsA}	\\
\geq & ~ \limsup_{n \to \infty} \frac{\prob{\bsA \in E_n}}{n} \expect{\mmse\pth{X^n\Big| \frac{\bsA}{1+\epsilon}  X^n +  N^k, \bsA} \Big | \bsA \in E_n}	\label{eq:Ens1}\\
= & ~\DLstar(X, R,\sigma^2),
\end{align}
where \prettyref{eq:Ens1} holds because $\frac{\bfA}{1+\epsilon}$ satisfies the power constraint for any $\bfA \in E_n$.

	\section{Distortion-rate tradeoff of Gaussian inputs}
	\label{app:gaussian.linear}
	In this appendix we show the expressions \prettyref{eq:Dstar.g} -- \prettyref{eq:DL.g} for the minimal distortion, thereby completing the proof of \prettyref{thm:worstG}
	
		\subsection{Optimal encoder}
		Plugging the rate-distortion function of the standard Gaussian \iid random process with mean-square error distortion
		\begin{equation}
	R_{\XGn}(D) = \frac{1}{2} \log^+ \frac{1}{D}
	\label{eq:RD.g}
\end{equation}
		 into \prettyref{eq:Dstar.sepa} yields the equality in \prettyref{eq:Dstar.g}.

	\subsection{Optimal linear encoder}
The minimal distortion $\DLstar(X, R,\sigma^2)$ achievable with the optimal linear encoder can be obtained using the finite-dimensional results in \cite[Equations (31) -- (35)]{LP76}, which are obtained for Gaussian input and noise of arbitrary covariance matrices. We include a proof for the sake of completeness. 	

Denote the sensing matrix by $\bfH$. Since $X^n$ and $Y^k = \bfH X^n + \sigma N^k$ are jointly Gaussian, the conditional distribution of $X^n$ given $Y^k$ is $\calN(\hat{X}^n, \boldsymbol{\Sigma}_{X^n|Y^k})$, where
\begin{align}
\hat{X}^n = & ~ \bfH\trans (\bfH \bfH \trans + \sigma^2 \bfI_k)^{-1} Y^k \label{eq:condmean}\\
\boldsymbol{\Sigma}_{X^n|Y^k} = & ~ \bfI_n - \bfH\trans (\bfH \bfH \trans + \sigma^2 \bfI_k)^{-1} \bfH \\
= & ~ (\bfI_n + \sigma^{-2}\bfH\trans \bfH )^{-1}.
\end{align}
where we used the matrix inversion lemma. Therefore, the optimal estimator is linear, given by \prettyref{eq:condmean}. Moreover,
\begin{align}
\mmse(X^n | Y^k)
= & ~ \Tr(\boldsymbol{\Sigma}_{X^n|Y^k})\\
= & ~ \Tr( (\bfI_n + \sigma^{-2} \bfH\trans \bfH)^{-1} ) \label{eq:mmseH}.
\end{align}

Choosing the best encoding matrix $\bfH \in \reals^{k \times n}$ boils down to the following optimization problem:
\begin{equation}
	\begin{aligned}
\min & ~ \Tr( (\bfI_n + \sigma^{-2} \bfH\trans \bfH )^{-1} )	\\
{\rm s.t.} & ~ \Tr( \bfH\trans \bfH ) \leq k 
	\end{aligned}
\end{equation}
Let $\bfH\trans \bfH = \bfU\trans \Lambda \bfU$, where $\bfU$ is an $n \times n$ orthogonal matrix and $\Lambda$ is a diagonal matrix consisting of the eigenvalues of $\bfH\trans \bfH$, denoted by $\{\ntok{\lambda_1}{\lambda_n}\} \subset \reals_+$. Then
\begin{align}
\Tr( (\bfI_n + \sigma^{-2} \bfH\trans \bfH )^{-1} )
= & \sum_{i=1}^n \frac{1}{1 + \sigma^{-2} \lambda_{i}} \label{eq:tr.1}\\
\geq & \frac{n}{1 + \sigma^{-2} \frac{\Tr(\bfH\trans \bfH)}{n}} \label{eq:tr.2} \\
\geq & \frac{n}{1 +  R \sigma^{-2}} \label{eq:tr.4} 
\end{align}
where \prettyref{eq:tr.2} follows from the strict convexity of $x \mapsto \frac{1}{1 + \sigma^{-2} x}$ on $\reals_+$ and $\Tr(\bfH\trans \bfH) = \sum_{i=1}^n \lambda_i$, while \prettyref{eq:tr.4} is due to the power constraint and $R = \frac{k}{n}$.
Hence
\begin{equation}
	\DLstar(\XGn,R,\sigma^2) \geq \frac{1}{1 +  R \sigma^{-2}}.
	\label{eq:optimal.mmse.linear.1}
\end{equation}
Next we consider two cases separately:
\begin{enumerate}
	\item $R \geq 1 (k \geq n)$: the lower bound in \prettyref{eq:optimal.mmse.linear.1} can be achieved by
	\begin{equation}
	\bfH = \left[\begin{matrix} \sqrt{R} \bfI_n \\ 0\end{matrix}\right].
	\label{eq:opt.H.1}
\end{equation}
\item $R < 1 (k < n)$: the lower bound in \prettyref{eq:optimal.mmse.linear.1} is \emph{not} achievable. This is because to achieve equality in \prettyref{eq:tr.2}, all $\lambda_i$ must be equal to $R$; however, $\rank(\bfH\trans \bfH) \leq \rank(\bfH) \leq k < n$ implies that at least $n-k$ of them are zero. Therefore the lower bound \prettyref{eq:tr.4} can be further improved to:
\begin{align}
\Tr( (\bfI_n + \sigma^{-2} \bfH\trans \bfH )^{-1} )
= 		& ~ n-k + \sum_{\lambda_i > 0} \frac{1}{1 + \sigma^{-2} \lambda_i} \\
\geq 	& ~ n-k + \frac{k}{1 + \sigma^{-2} \frac{\Tr(\bfH\trans \bfH)}{k}} \\
\geq 	& ~ n - \frac{k }{1 +  \sigma^2}.
\end{align}
Hence when $R < 1$,
\begin{equation}
	\DLstar(\XGn,R,\sigma^2) \geq 1 - \frac{R}{1 +  \sigma^2},
	\label{eq:optimal.mmse.linear.2}
\end{equation}
which can be achieved by
\begin{equation}
	\bfH = \left[\begin{matrix} \bfI_k &  0 \end{matrix}\right],
	\label{eq:opt.H.2}
\end{equation}
that is, simply keeping the first $k$ coordinates of $X^n$ and discarding the rest.
\end{enumerate}
Therefore the equality in \prettyref{eq:DLstar.g} is proved.

\subsection{Random linear encoder}

We compute the distortion $\DL(X, R,\sigma^2)$ achievable with random linear encoder $\bsA$. 
Recall that $\bsA$ has \iid entries with zero mean and variance $\frac{1}{n}$. By \prettyref{eq:mmseH},
\begin{align}
\frac{1}{n} \mmse(X^n | \bsA X^n + \sigma N^k, \bsA)
= & ~ \frac{1}{n} \expect{\Tr( (\bfI_n + \sigma^{-2} \bsA\trans \bsA)^{-1} )} \\
= & ~ \frac{1}{n} \expect{\sum_{i=1}^n \frac{1}{1 + \sigma^{-2} \lambda_i}}, \label{eq:mp1}
\end{align}
where $\{\ntok{\lambda_1}{\lambda_n}\}$ are the eigenvalues of $\bsA \trans \bsA$.

As $n \to \infty$, the empirical distribution of 
the eigenvalues of $\frac{1}{R} \bsA \trans \bsA$
converges weakly to the Mar\u cenko-Pastur law almost surely
\cite[Theorem 2.35]{TV04}:
\begin{equation}
	\nu_R(\diff x)  = (1-R)^+ \delta_0(\diff x) + \frac{\sqrt{(x - a)(x - b)}}{2 \pi c x} \mathbf{1}_{[a,b]}(x) \diff x
	\label{eq:f.lambda}
\end{equation}
where
\begin{equation}
	c = \frac{1}{R}, a = (1 - \sqrt{c})^2,  b = (1 + \sqrt{c})^2.
\end{equation}
Since $\lambda \mapsto \frac{1}{1 + \sigma^{-2} \lambda}$ is continuous and bounded, applying the dominated convergence theorem to \prettyref{eq:mp1} and integrating with respect to $\nu_R$ gives
\begin{align}
\DL(\XGn, R,\sigma^2)
=	& ~ \lim_{n \to \infty} \frac{1}{n} \mmse(X^n | \bsA X^n + \sigma N^k, \bsA) 	\nonumber \\
= & ~ \int \frac{1}{1 + \sigma^{-2} R x} \nu_R(\diff x)\\
=	& ~ \frac{1}{2} \left(1-R-\sigma ^2+\sqrt{(1-R)^2+2 (1+R) \sigma ^2+\sigma ^4}\right), \label{eq:DLG}
\end{align}
%
%
where \prettyref{eq:DLG} follows from \cite[(1.16)]{TV04}. 

Next we verify that the formula \prettyref{eq:om.mmse} in the replica symmetry postulate which was based on replica calculations coincides with \prettyref{eq:DLG} in the Gaussian case.
Since in this case $\mmse(\XGn, \snr) = \frac{1}{1+\snr}$, \prettyref{eq:eta} becomes
\begin{align}
\frac{1}{\eta}
= & ~ 1 + \frac{1}{\sigma^2} \mmse(X, \eta R \sigma^{-2}) \label{eq:eta.G.1} \\
= & ~ 1 + \frac{1}{\sigma^2 + \eta R} \label{eq:eta.G.2}
\end{align}
whose unique positive solution is given by
\begin{equation}
\eta_{\sigma} = \frac{R-1-\sigma^2+\sqrt{(1-R)^2+2 (1+R) \sigma ^2+\sigma ^4}}{2 R}
	\label{eq:eta.gamma.G}
\end{equation}
which lies in $(0,1)$. According to \prettyref{eq:om.mmse},
\begin{align}
\DL(\XGn, R,\sigma^2)
= & ~ \mmse(\XGn, \sigma^{-2} \eta_{\sigma}) \\
= & ~ \frac{1}{1+\sigma^{-2} \eta_{\sigma}} \\
= & ~ \frac{2 \sigma ^2}{R-1+\sigma ^2+\sqrt{(1-R)^2+2 (1+R) \sigma ^2+\sigma ^4}},
\end{align}
which can be verified, after straightforward algebra, to coincide with \prettyref{eq:DLG}.

\section{LASSO noise sensitivity for fixed input distributions}
	\label{app:lasso}
Based on the results in \cite{bayati.lasso}, in this appendix we show that the asymptotic noise sensitivity of LASSO is given by \prettyref{eq:sens.lasso}. 
Let $R=\frac{k}{n}$, and let $\bsA$ denote a $k\times n$ random matrix with \iid entries distributed according to $\calN(0,\frac{1}{n})$. Then $R^{-\frac{1}{2}} \bsA$ has $\calN(0,\frac{1}{k})$ entries, to which the result in \cite{bayati.lasso} applies. Let $\tilde{g}(y,\bsA;\lambda)$ denote the LASSO procedure with penalization parameter $\lambda$ defined in \prettyref{eq:lasso}, which satisfies the following scaling-invariant property
\begin{equation}
	\tilde{g}(t y, t \bsA; t\lambda) = \tilde{g}(y, \bsA;\lambda)
	\label{eq:lasso.scaling}
\end{equation}
for any $t > 0$.
By \cite[Corollary 1.6]{bayati.lasso} and \prettyref{eq:lasso.scaling}, the MSE achieved by the LASSO decoder is given by
\begin{align}
	& ~ D^{(\lambda)}(X,R,\sigma^2) 	\nonumber \\
= & ~ \lim_{n \to \infty} \frac{1}{n} \expect{\big\|X^n - \tilde{g}(\bsA X^n + \sigma  N^k; \lambda)\big\|^2} 	\\
= & ~ \lim_{n \to \infty} \frac{1}{n} \expect{\big\|X^n - \tilde{g}(R^{-\frac{1}{2}} \bsA X^n + R^{-\frac{1}{2}} \sigma  N^k; \lambda R^{-\frac{1}{2}})\big\|^2} 	\\
= & ~ R \tau_*^2 - \sigma^2, \label{eq:lassomse}
\end{align}
with $\tau_*^2$ being the unique solution to the following equation in $\tau^2$:
\begin{equation}
	R \tau^2 = \sigma^2 + \expect{(\eta(X+\tau N; \alpha \tau) - X)^2},
	\label{eq:bm}
\end{equation}
where $\eta(\cdot; \cdot): \reals \times \reals_+ \to \reals$ is the soft-thresholding estimator
\begin{equation}
	\eta(x;\theta)=(x-\theta)\indc{x \geq \theta} + (x+\theta)\indc{x \leq -\theta}
	\label{eq:soft}
\end{equation}
and $\alpha = \alpha(\lambda R^{-\frac{1}{2}})$ with $\alpha(\cdot)$ being the strictly increasing function defined in \cite[p. 1999]{bayati.lasso}. Therefore optimizing $D^{(\lambda)}(X,R,\sigma^2)$ over $\lambda$ is equivalent to optimizing over $\alpha$.

Next we assume that $X$ is distributed according to the mixture \prettyref{eq:PXQ}, where $Q$ is an arbitrary probability measure such that $Q(\{0\})=0$. We analyze the weak-noise behavior of $D^{(\lambda)}(X,R,\sigma^2)$ when $R > \sfR_{\scriptscriptstyle \pm}(\gamma)$ defined in \prettyref{eq:rp}.
We show that for fixed $\alpha > 0$,
\begin{align}
& ~ \expect{(\eta(X+\tau N; \alpha \tau) - X)^2} \nonumber \\
= & ~ (\gamma(1+\alpha^2) + 2(1-\gamma)((1+\alpha^2) \Phi(-\alpha)-\alpha \varphi(\alpha)))
\tau^2(1+o(1))
	\label{eq:wtau}
\end{align}
as $\tau \to 0$. Assembling \prettyref{eq:rp}, \prettyref{eq:lassomse}, \prettyref{eq:bm} and \prettyref{eq:wtau}, we obtain the formula for the asymptotic noise sensitivity of optimized LASSO:
	\begin{equation}
\tilde{\xi}(X,R) = \inf_{\lambda} \lim_{\sigma^2 \to 0}\frac{D^{(\lambda)}(X,R,\sigma^2)}{\sigma^2} = \frac{\sfR_{\scriptscriptstyle \pm}(\gamma)}{R-\sfR_{\scriptscriptstyle \pm}(\gamma)},
\label{eq:xilasso}
\end{equation}
which holds for any $Q$ with no mass at zero.

We now complete the proof of \prettyref{eq:xilasso} by establishing \prettyref{eq:wtau}. Let $X' \sim Q$. By \prettyref{eq:PXQ},
\begin{align}
\expect{(\eta(X+\tau N; \alpha \tau) - X)^2} 
= & ~ (1-\gamma) \expect{\eta^2(\tau N; \alpha \tau)} \\
& ~ + \gamma \, \expect{(\eta(X'+\tau N; \alpha \tau) - X')^2},
	\label{eq:wtau1}
\end{align}
where
\begin{align}
\expect{\eta^2(\tau N; \alpha \tau)}
= & ~ 2 \tau^2 \, \expect{(N-\alpha)^2 \indc{N \geq \alpha}} \\
= & ~ 2((1+\alpha^2) \Phi(-\alpha)-\alpha \varphi(\alpha))
	\label{eq:term1}
\end{align}
and
\begin{align}
	& ~ \expect{(\eta(X'+\tau N; \alpha \tau) - X')^2}	\nonumber \\
= & ~ \tau^2 (\expect{(N-\alpha)^2 \indc{X'+\tau N \geq \alpha \tau}} + \expect{(N+\alpha)^2 \indc{X'+\tau N \leq -\alpha \tau}}) \label{eq:term21}\\
	& ~ + \expect{X'^2 \indc{|X'+\tau Z| \leq \alpha \tau}} \label{eq:term22}.
\end{align}
Since $\indc{X'+\tau N \geq \alpha \tau} \toas \indc{X'\geq 0}$, $\indc{X'+\tau N \leq -\alpha \tau} \toas \indc{X'\leq 0}$ and $\prob{X'=0}=0$, applying the bounded convergence theorem to the right-hand side of \prettyref{eq:term21} yields $\tau^2(1+\alpha^2)(1+o(1))$. It remains to show that the term in \prettyref{eq:term22} is $o(\tau^2)$. Indeed, as $\tau\to 0$,
\begin{align}
	& ~ \tau^{-2} \expect{X'^2 \indc{|X'+\tau Z| \leq \alpha \tau}}	\nonumber \\
= & ~ \tau^{-2} \expect{X'^2 \pth{\Phi\pth{-\frac{X'}{\tau}+\alpha} - \Phi\pth{-\frac{X'}{\tau}+\alpha}}}	\\
\leq & ~ 2\alpha \tau^{-2} \expect{X'^2 \varphi\pth{-\frac{|X'|}{\tau}+\alpha}}	\\
= & ~ o(1),
\end{align}
where we have applied the bounded convergence theorem since 
\begin{equation}
\tau^{-2} X'^2 \varphi\pth{-\frac{|X'|}{\tau}+\alpha} \leq \max_{t\geq 0} t^2\varphi(-t+\alpha) = \frac{(\alpha+\sqrt{8+\alpha^2})^2}{4} \varphi\pth{\frac{\alpha-\sqrt{8+\alpha^2}}{2}}
\end{equation}
and $\tau^{-2} X'^2 \varphi\pth{-\frac{|X'|}{\tau}+\alpha} \toas 0$ as $\tau\to0$ because $\prob{|X'|>0}=1$, completing the proof of \prettyref{eq:xilasso} if $R > \sfR_{\scriptscriptstyle \pm}(\gamma)$. In the case $R < \sfR_{\scriptscriptstyle \pm}(\gamma)$, the same reasoning yields that $\liminf_{\sigma^2\to 0}D^{(\lambda)}(X,R,\sigma^2) > 0$ for any choice of $\lambda$.

\section*{Acknowledgment}
\addcontentsline{toc}{section}{Acknowledgment}
The paper has benefited from thorough suggestions by the anonymous reviewers. The authors also thank Arian Maleki for stimulating discussions especially on the LASSO and AMP algorithms.


\end{document}